\documentclass{article}
\usepackage[utf8]{inputenc}
\usepackage{graphicx}
\usepackage{xcolor}
\usepackage{float}
\usepackage{amsthm}
\usepackage{hyperref}% %http://ctan.org/pkg/hyperref
\usepackage{amsmath}
\usepackage{cleveref}% %http://ctan.org/pkg/cleveref
\usepackage{verbatim}
\usepackage{amssymb}
\usepackage{cleveref}

\usepackage[boxed,noend,ruled,linesnumbered]{algorithm2e}

\SetCommentSty{mycommfont}
%------------------------------------------
%Default latex theorem package:
\newtheorem{theorem}{Theorem}[]
\newtheorem{corollary}{Corollary}[]
%\newtheorem{lemma}[]{Lemma}
% to be able to reference the lemmas:
\newtheorem{lemma}{Lemma}
\crefname{lemma}{Lemma}{Lemmas}
\crefname{theorem}{Theorem}{Theorems}

%--------------------------------------
\theoremstyle{definition}
\newtheorem{defn}[]{Definition} % definition numbers are dependent on theorem numbers

\SetKwInOut{Input}{Input}
\SetKwInOut{DataStructure}{Data Structure}
\SetKwInOut{Output}{Output\,}
\SetKwInOut{Data}{Data}
\SetKwProg{Tree}{Tree}{}{EndTree}

\title{The Algorithmic Complexity of Tree-Clique Width}
\author{Chris Aronis\footnote{aronis.ch@gmail.com}}
\date{}

\begin{document}

\maketitle

\begin{abstract}

Tree-width has been proven to be a useful parameter to design fast and efficient algorithms for intractable problems. However, while tree-width is low on relatively sparse graphs can be arbitrary high on dense graphs. Therefore, we introduce tree-clique width, denoted by $tcl(G)$ for a graph $G$,  a new width measure for tree decompositions. The main aim of such a parameter is to extend the algorithmic gains of tree-width on more structured and dense graphs. In this paper, we show that tree-clique width is NP-complete and that there is no constant factor approximation algorithm for any constant value $c$. We also provide algorithms to compute tree-clique width for general graphs and for special graphs such as cographs and permutation graphs. We seek to understand further tree-clique width and its properties and to research whether it can be used as an alternative where tree-width fails.

\end{abstract}

\section{Introduction}

Being able to efficiently solve intractable problems, namely NP-hard problems, has been one of the major goals for researchers. Unfortunately, they soon realized it is quite unlikely that these problems may admit a polynomial time algorithm, unless P=NP. Therefore, attention has been drawn to special graph cases where all these hard problems might be easier to tackle. It has become apparent that one of those cases are the trees. Numerous NP-complete problems become tractable on trees and admit fast polynomial time algorithms. One of the main reasons for that has to do with their structure. Namely, trees have two well known separation properties. The first one results when we delete a vertex $v$ of a tree along with all its incident edges. Then we know that the tree falls apart into disconnected parts equal to the degree of $v$. The second property comes when we delete an edge of a tree which creates exactly two disconnected components. These two properties are very convenient because we can create subproblems which have very little interaction between each other. We can then exploit the existence of these subproblems by solving them using dynamic programming or any other similar algorithmic technique.

Having said that, it is obvious that we could benefit by having treelike input graphs. Unfortunately, most of the time this is not the case as in very many real world graphs \cite{maniu2019experimental}. So it would be a good idea to associate tree structures to graphs. Otherwise, we would like to know how hard it is to do so, in other words, how many steps away from, a graph $G$ is in becoming a tree. All these ideas can be captured by a graph parameter called tree-width which measures how well a graph resembles a tree.

Researchers have proposed different notions to generalize this idea of tree-width by introducing other width measures as well as other graph decomposition methods. Some of them are the rank-width, tree-length, clique-width, tree-depth and many others. The benefit they offer is that they can be used as bounds for different kinds of problems and handle different kinds of graphs, such as dense graphs. For example, if we were to use the maximum over all bags of a tree decomposition of the number of cliques needed to cover all vertices in the bag as measure of width (of a tree decomposition), we could solve the independent set problem in the following way. If a bag consists of a number of cliques we know there is always one vertex in a clique in the independent set. So even if we have a bag with large cliques, we know that the number of different cases we have to check for each such bag is bounded. 

Tree-clique width is another parameter that attempts to handle dense graphs. While tree-width on dense graphs is unbounded, tree-clique width can be bounded since the width measure is the number of cliques inside a bag in contrast to the number of vertices. Informally, tree-clique width measures how well a graph can be separated into cliques while keeping the number of cliques inside each bag as small as possible. It is easy to realize the benefit of such an approach for problems like vertex cover, independent set, dominating set. A tree decomposition with a bounded number of cliques inside each bag naturally creates a bound on the running times of the aforementioned algorithms.

This paper is organised as follows. In section 2 we give some basic definitions and terminology. In section 3 we present the hardness results that we have obtained for tree-clique width. In particular, we prove that the problem of computing tree-clique width is NP-complete even for constant values. We also prove that there is no constant factor approximation algorithm. In section 4 and 5 we show two dynamic programming approaches while the latter is based on the potential maximal cliques (PMCs) framework.
In section 6 we show how the framework of inclusion-exclusion can be used to obtain faster algorithms by speeding up the clique covering computation.
In section 7 we present efficient time algorithms for tree-clique width on special graphs such as cographs and permutations graphs. We conclude with future work and a short discussion of open problems in section 8.

\section{Preliminaries}

We begin by defining some basic definitions. Let $G = (V,E)$ be a simple, undirected graph without vertices having an edge to themselves. The set $V$ is the set of vertices and $E$ the set of edges, where each edge $(u,v)$ is an edge between vertex $u$ and $v$. A vertex $u$ is \textit{adjacent} to $v$ or a \textit{neighbour} of $v$ if $(u,v) \in E$. The number of vertices in $G$ is denoted as $n = |V|$ and the number of edges as $m=|E|$. The \textit{open neighbourhood} of a vertex $v$ is given by $N(v) = \{u \in V : (u,v) \in E\}$ and the \textit{closed neighbourhood} given by $N[v] = N(v) \cup \{v\}$. The degree of a vertex $v$ is given by $\mathrm{deg}(v) = |N(v)|$. A vertex $v$ is \textit{isolated} if $\mathrm{deg}(v) = 0$. For a set $S \subseteq V$ we have $N(S) = \bigcup_{v \in S}N(v)$ and $N[S] = N(S) \cup S$. The maximum degree of the vertices of a graph $G$ is given by $\Delta(G)$.

A subgraph $G[V]$ is a graph where we take the original graph $G$ and take all vertices $V$ and edges $E$ where for the every edge in the resulting graph, both endpoints are in $V$. 

If we subtract a set from $G$, such as $G - S$, then this implies we remove all the vertices and their edges in the set $S$ from the graph $G$. In case we subtract a single vertex, i.e. $G - v$, it is implied that we subtract the set $\{v\}$ from $G$.

A clique of a graph $G$ is a set $X$ of vertices of $G$ with the property that every pair of distinct vertices in $X$ are adjacent in $G$. A maximal clique of a graph $G$ is a clique $X$ of vertices of $G$, such that there is no clique $Y$ of vertices of $G$ that contains all of $X$ and at least one other vertex. Given a graph $G$, its clique graph, denoted by $clg(G)$, is a new graph such that: every vertex of $clg(G)$ represents a maximal clique of $G$ and two vertices of $clg(G)$ are adjacent when the underlying cliques in $G$ share at least one vertex. % see Figure \ref{fig:cliques_black_red}.
Furthermore, we denote with $\omega(G)$ of a graph $G$ as the number of vertices in a maximum clique in $G$.

% Separators definition:
Now we define the minimal separators which play a crucial role for the study of tree-width and subsequently also for tree-clique width. Let $a$ and $b$ be two non adjacent vertices of a graph $G=(V,E)$. A set of vertices $S \subseteq V$ is an $a,b$-separator if $a$ and $b$ are in different connected components of the graph $G \setminus S$. A connected component $C$ of $G \setminus S$ is a full component, associated to $S$, if $N(C)=S$. $S$ is a minimal $a,b$-separator of $G$ if no proper subset of $S$ is an $a,b$-separator. We say that $S$ is a minimal separator of $G$ if there are two vertices $a$ and $b$ such that $S$ is a minimal $a,b$-separator.

% The clique graph of a graph $G$, denoted by $clg(G)$, is a new graph in which every vertex is associated with a clique in $G$. Therefore, for every clique of $G$ a new vertex is created in $clg(G)$. Edges are drawn between the vertices in $clg(G)$ in case the intersection of cliques in $G$ is not empty, see Figure \ref{fig:cliques_black_red}. 

A tree decomposition of a graph $G=(V,E)$ is a pair $\mathcal{T}=(T,\{X_t\}_{t\in V(T)} )$, where $T$ is a tree whose every node $t$ is assigned a vertex subset  $X_t \subseteq V(G)$, called a bag such that the following three conditions hold:
\begin{itemize}
    \item $\bigcup_{t\in V(T)}X_t=V(G)$
    \item For every edge $(u,v)$ in $E$, there exists a node $t$ of $T$ such that bag $X_t$ contains both $u$ and $v$.
    \item For every vertex $v$ in $V(G)$, the set $T_u=\{t \in V(T)\ : u \in X_t \}$ i.e., the set of nodes whose corresponding bags contain $u$, induces a connected subtree.  
\end{itemize}

\noindent The width of the tree decomposition $\mathcal{T}=(T,\{X_t\}_{t\in V(T)} )$ equals $\max_{t \in V(T)}|X_t|-1$, that is the maximum size of a bag minus 1. The tree-width of a graph $G$, denoted by $tw(G)$, is the minimum possible width of a tree decomposition of $G$.

Path-width is another parameter very similar to tree-width. In particular tree-width constitutes a generalization of path-width. Both parameters were introduced by Robertson and Seymour \cite{robertson1983graph,robertson1986graph}. Path-width is a problem that has been studied a lot and under different names. Some equivalent definitions are interval graph extension, gate matrix layout, node search number and edge search number. Computing a path decomposition is NP-hard and was shown by Arnborg et al. \cite{arnborg1987complexity}.

% Formal definition of path-width
Formally, a path decomposition can be defined as follows: a path decomposition of a graph $G$ is a sequence $P=(X_1,X_2,...,X_r)$ of bags, where $X_i \subseteq V(G)$ for each $i\in \{1,2,...,r\}$, such that the following conditions hold:

\begin{itemize}
    \item $\bigcup^r_{i=1} X_i = V(G)$. In other words, every vertex of $G$ is in at least one bag.
    \item For every $uv\in E(G)$, there exists $l\in \{1,2,...,r\}$ such that the bag $X_l$ contains both $u$ and $v$.
    \item For every $u \in V(G)$, if $u\in X_i \cap X_k $ for some $i \leq k$, then $u\in X_j$ also for each $j$ such that $i\leq j \leq k$. In other words, the indices of the bags containing $u$ form an interval in $\{1,2,...,r\}$.
\end{itemize}

The width of a path decomposition $(X_1,X_2,...,X_r)$ is $\max_{1\leq i \leq r} |X_i|-1$. The path-width of a graph $G$, denoted by $pw(G)$, is the minimum possible width of a path decomposition of $G$. We can say that path-width measures how close a graph $G$ can be captured by a path-decomposition. Or in other words, how closely a graph $G$ is related to a path. It is more than evident by looking just at the definitions that path-width is very similar to tree-width. The only exception is in their conditions, namely the one that requires the bags containing a vertex $u$ should form an interval instead of connected subtree. Therefore the intuition that led to the introduction of the tree-width is rather straightforward to comprehend.

We also introduce some terminology regarding parts of a tree decomposition that can be useful during the analysis of the given algorithms. Let $X$ be a node in a tree decomposition $T$. The adhesion of $X$ is the intersection of the bags of $X$ and its parent. If $X$ is a root the adhesion is empty. The margin of $X$ is its bag minus its adhesion. The cone of $X$ is the union of the bags of the descendants  of $X$, including $X$. The component of $X$ is its cone minus its adhesion.

% \medskip creates an empty line for better visuals
\medskip
We now define the clique cover problem, also known as keyword conflict \cite{kellerman1973determination} problem, covering by cliques, edge clique cover or intersection graph basis \cite{gary1979computers}, as the following decision problem: 

\textbf{Edge clique cover}

\textbf{Input:} An undirected graph $G=(V,E)$ and an integer $k \geq 0$.
    
\textbf{Question:} Is there a set of at most $k$ cliques in $G$ such that each edge $E$ has both its endpoints in at least one of the selected cliques?

The clique cover problem is known to be NP-complete \cite{orlin1977contentment}, and remains NP-complete even if we restrict the problem to special graph classes  \cite{chang2001tree}. Edge clique cover belongs also to FPT and the best known fixed-parameter algorithm is a brute force search on the $2^k$-vertex kernel which runs in double exponential time in terms of $k$ proposed by Gramm et al \cite{gramm2009data}. In addition it is known that the clique cover problem can be solved by a dynamic programming algorithm in time $O(2^{(|E(G)|+|V(G)|)})$. The edge clique cover number is denoted by $ecc(G)$ for a graph $G$.

A similar but not equivalent problem is the vertex clique cover problem. A vertex clique cover is a set of cliques that covers all the vertices of the graph. Formally:

\textbf{Vertex clique cover}

\textbf{Input:} An undirected graph $G=(V,E)$ and an integer $k \geq 0$.
    
\textbf{Question:} Is there a set of at most $k$ cliques in $G$ such that each vertex $v$ is in at least one of the selected cliques?

\noindent If a vertex $v$ appears in two cliques, we can simply remove that vertex from one of the cliques, thus resulting in a vertex cover of equal or smaller size. Therefore, it makes sense to define the output of the vertex clique cover problem as a set of disjoint cliques. The vertex clique cover number, denoted by $vcc(G)$, is defined as the smallest possible number of cliques to cover all the vertices of a graph $G$. Due to the fact that the vertex clique covering could be disjoint, it is worth noting that the vertex clique cover number is less or equal than the edge clique cover number.
\begin{corollary}
For any graph $G$, it holds that $vcc(G) \leq ecc(G)$. 
\end{corollary}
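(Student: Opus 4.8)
The statement is essentially immediate from the discussion preceding it, so the plan is mainly to organize that discussion into a proof. I would start from an optimal edge clique cover $\mathcal{C}=\{C_1,\dots,C_k\}$ of $G$, so that $k=ecc(G)$ and every edge of $G$ has both endpoints in some $C_i$. The first step is the observation that $\mathcal{C}$ already covers every non-isolated vertex: if a vertex $v$ has a neighbour $u$, then the edge $(u,v)$ is covered, so some $C_i$ contains both $u$ and $v$, and in particular $v\in C_i$. Thus the cliques of $\mathcal{C}$ together cover all vertices of positive degree.

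The second step is the disjointness reduction already mentioned in the text: whenever a vertex belongs to two cliques of the current family, delete it from one of them. Each such step leaves every set a clique (a subset of a clique is a clique) and does not change the union of the family, so after finitely many steps we obtain pairwise disjoint cliques $C_1',\dots,C_k'$ whose union still contains all non-isolated vertices, with at most $k$ of them nonempty. Discarding the empty ones gives a family of at most $k=ecc(G)$ disjoint cliques covering every non-isolated vertex, and hence a vertex clique cover witnessing $vcc(G)\le ecc(G)$.

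The only point that needs care — and thus the ``hard part,'' although it is still routine — is the status of isolated vertices: an edge clique cover is not required to cover them, whereas a vertex clique cover is. I would dispose of this by reading the statement for graphs without isolated vertices (the case of interest for tree-clique width, and the only case in which the bound is meaningful as written), or equivalently by first deleting all isolated vertices, applying the argument above, and then noting that both $vcc$ and $ecc$ behave consistently under this deletion. Beyond this bookkeeping, no computation is involved.
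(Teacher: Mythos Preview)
Your proposal is correct and matches the paper's approach exactly: the paper gives no formal proof but relies on the sentence immediately preceding the corollary, namely that an edge clique cover can be made disjoint without increasing its size, yielding a vertex clique cover. Your write-up simply spells this out.

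Your observation about isolated vertices is a genuine point the paper does not address: as stated, the inequality fails for a graph consisting of isolated vertices (where $ecc(G)=0$ but $vcc(G)>0$), so restricting to graphs without isolated vertices, as you suggest, is the right fix.
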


We have now defined all the necessary tools and notions in order to introduce the augmented tree decomposition. %There are two possible ways towards this direction. The first one is to compute a tree decomposition, while trying to minimize the number of cliques inside a bag. Then we can consider the number of cliques in every bag as given by a clique cover algorithm.  %The second approach is to firstly get the clique graph of a graph $G$, $clg(G)$, and on that graph perform a tree decomposition. We introduce both of these approaches as it is currently unknown to us which one is better or if there is any difference at all.
Below, we give the definition of an augmented tree decomposition of a graph $G=(V,E)$. For such augmented tree decompositions, all properties of tree decompositions hold, but in addition, we augment the bags with coverings of their vertices, and define the width in a different way.

\medskip

\begin{defn} \label{Tree-clique-width-def}
More formally, an augmented tree decomposition is a triple 
$\mathcal{T}=(T,\{X_t\}_{t\in V(T)},\{C_t\}_{t\in V(T)} )$, where $T$ is a tree whose every node $t$ is assigned a vertex subset $X_t \subseteq V(G)$ called a bag, and for each t, $C_t$ is a collection of cliques in G and $ \bigcup_{S \in C_t} S \subseteq X_t$. Also, each edge between vertices in $X_t$ belongs to at least one clique from $C_t$. The following conditions have to hold:

\begin{itemize}

    \item $\bigcup_{t\in V(T)}X_t=V(G)$
    
    \item $\bigcup_{S \in C_t} S = X_t$.
    
    \item For every edge $(u,v)$ in $E$, there exists a node $t$ of $T$ such that bag $X_t$ contains both $u$ and $v$.
    
    \item For every vertex $v$ in $V(G)$, the set $T_u=\{t \in V(T)\ : u \in X_t \}$ i.e., the set of nodes whose corresponding bags contain $u$, induces a connected subtree.
\end{itemize}

For a graph $G$ the width-weight of a bag $t$, is the minimum number of cliques needed to cover every vertex in $X_t$, namely $|C_t|$. The cliques of each bag in $G$ can be considered as given from a clique cover algorithm. The width of an augmented tree decomposition $\mathcal{T}=(T,\{X_t\}_{t\in V(T)},\{C_t\}_{t\in V(t)} )$ is the maximum weight-width of any bag, $\max_{\forall t \in T} |C_t|$. The tree-clique width of a graph $G$, denoted by $tcl(G)$ is the minimum possible width of an augmented tree decomposition of $G$.
\end{defn}

%Path-clique width definition

Similarly to path-width we can define the path-clique width. Informally, that means that we would like to have a decomposition of a graph $G$ but this time the indices of the bags containing $u$ form a connected interval instead of connected subtree.

\begin{defn} 
More formally, an augmented path decomposition of a graph $G$ is a triple $\mathcal{P}=(P,\{X_t\}_{t\in V(P)},\{C_t\}_{t\in V(P)})$ of bags, where $P$ is a path whose every node $t$ is assigned a vertex subset $X_t \subseteq V(G)$ called a bag where $X_t \subseteq V(G)$ for each $t\in \{1,2,...,r\}$, $C_t$ is a collection of cliques in G and $ \bigcup_{S \in C_t} S \subseteq X_t$. The following conditions hold: 

\begin{itemize}

    \item $\bigcup_{t\in V(T)}X_t=V(G)$. In other words, every vertex of $G$ is in at least one bag.
    
    \item For every $uv \in E(G)$, there exists $l \in \{1,2,...,r\}$ such that the bag $x_l$ contains both $u$ and $v$.
    
    \item For every $u\in V(G)$, if $u\in X_i \cap x_k$ for some $i \leq k$, then $u\in x_j$ also for each $j$ such that $i \leq j \leq k$. In other words, the indices of the bags containing $u$ form an interval in $\{1,2,...,r\}$
    
\end{itemize}

For a graph $G$ the width-weight of a bag $t$, is the minimum number of cliques needed to cover every vertex in $X_t$, namely $|C_t|$. The width of an augmented path decomposition $\mathcal{P}=(P,\{X_t\}_{t\in V(P)},\{C_t\}_{t\in V(P)} )$ is the maximum width-weight of any bag, $\max_{\forall t \in P} |C_t|$. The path-clique width of a graph $G$, denoted by $pcl(G)$ is the minimum possible width of an augmented path decomposition of $G$.
\end{defn}

A careful parsing of Definition \ref{Tree-clique-width-def} might give rise to the following question: is it clear whether or not the existing cliques are getting disrupted into smaller pieces after applying a tree decomposition? This is a crucial question since the main aim of tree-clique is to create an augmented tree decomposition while minimizing the number of cliques inside every bag. In other words, the cliques of $G$ should be preserved and completely contained in at least one bag $X_t$. The answer to that question is the containment lemma as was firstly observed in \cite{bodlaender1993pathwidth}. Formally:

\begin{lemma}\label{clique-containment-lemma}
Let $\mathcal{T}=(T,\{X_t\}_{t\in V(T)} )$ be a tree decomposition of $G=(V,E)$ and let $W\subseteq V$ be a clique in $G$. Then there exists $t\in V(T)$ with $W \subseteq X_t$.
\end{lemma}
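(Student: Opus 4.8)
The plan is to prove the clique-containment lemma by the standard "helly property for subtrees of a tree" argument, adapted to the connected-subtree condition of tree decompositions. The key observation is that each vertex of the clique $W$ occupies a connected subtree of $T$, and any two such subtrees must intersect (because every edge of $W$ forces the two endpoints to share a bag); a Helly-type fact for subtrees of a tree then forces all of them to share a common node $t$, and that $t$ has $W\subseteq X_t$.

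\begin{proof}
For each $v\in W$ let $T_v=\{t\in V(T): v\in X_t\}$; by the third tree-decomposition axiom each $T_v$ induces a nonempty connected subtree of $T$. First I would establish pairwise intersection: for any two distinct $u,v\in W$, since $W$ is a clique we have $(u,v)\in E$, so by the second axiom there is a node $t$ with $\{u,v\}\subseteq X_t$, i.e.\ $t\in T_u\cap T_v$. Thus the subtrees $\{T_v\}_{v\in W}$ pairwise intersect.

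Next I would invoke the Helly property for subtrees of a tree: a finite family of subtrees of a tree that pairwise intersect has a common vertex. If one wishes to prove this inline rather than cite it, the cleanest route is induction on $|W|$. The base case $|W|\le 2$ is the pairwise-intersection statement just shown. For the inductive step, pick any $w\in W$ and apply the hypothesis to $W\setminus\{w\}$ to get a node $t_0$ with $W\setminus\{w\}\subseteq X_{t_0}$; if $w\in X_{t_0}$ we are done, so assume not. Root $T$ at $t_0$. For each $v\in W\setminus\{w\}$, the subtree $T_v$ contains $t_0$, and $T_v\cap T_w\neq\varnothing$, so $T_v$ contains some node of $T_w$; since $T_v$ is connected and contains $t_0$, it contains the entire path from $t_0$ to that node, hence in particular the unique neighbour of $t_0$ on the path from $t_0$ toward $T_w$ lies in $T_v$ provided $T_w$ is not "behind" $t_0$ in conflicting directions. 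The clean way to close this is: among all nodes of $T_w$, let $t^\ast$ be the one closest to $t_0$; I claim $t^\ast$ also lies in $T_v$ for every $v\in W\setminus\{w\}$. Indeed $T_v$ meets $T_w$, say at $s$, and $t_0\in T_v$; the path from $t_0$ to $s$ passes through $t^\ast$ (because $t^\ast$ separates $t_0$ from all of $T_w$, as $T_w$ is a connected subtree not containing $t_0$), so by connectedness of $T_v$ we get $t^\ast\in T_v$. Hence $W\subseteq X_{t^\ast}$, completing the induction.

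\textbf{The main obstacle} is the inductive step, and specifically getting the "closest node" argument exactly right: one must argue carefully that the node $t^\ast$ of $T_w$ nearest to $t_0$ genuinely lies on every $t_0$-to-$T_w$ path, which rests on the fact that in a tree the set of nodes of a connected subgraph not containing $t_0$ is "cut off" from $t_0$ by a unique gateway vertex. Everything else --- the two intersection claims and the reduction --- is routine. Alternatively one may simply cite the Helly property of subtrees of a tree as a known fact, in which case the proof reduces to the first paragraph plus one sentence.
\end{proof}
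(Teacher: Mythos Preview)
Your proof is correct: the pairwise-intersection step is immediate from the edge axiom, and your inductive ``closest node'' argument is the standard way to establish the Helly property for subtrees of a tree, so the induction closes exactly as you describe.

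There is nothing to compare against, however: the paper does not supply its own proof of this lemma. It simply states the containment lemma and attributes it to \cite{bodlaender1993pathwidth}. Your write-up therefore goes beyond what the paper does, giving a self-contained argument where the paper is content to cite the literature.
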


The following corollary is an immediate consequence of Lemma $\ref{clique-containment-lemma}$ applied to an augmented tree decomposition. 

\begin{corollary}
Let $\mathcal{T}=(T,\{X_t\}_{t\in V(T)},\{C_t\}_{t\in V(t)} )$ be an augmented tree decomposition of $G=(V,E)$ and let $W\subseteq V$ be a clique in $G$. Then there exists a $t\in V(T)$ with $W \subseteq X_t$.
\end{corollary}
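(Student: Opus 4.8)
The plan is to deduce the corollary directly from Lemma~\ref{clique-containment-lemma} by observing that an augmented tree decomposition, when we forget about the clique covers $\{C_t\}_{t\in V(T)}$, is in particular an ordinary tree decomposition. Concretely, given an augmented tree decomposition $\mathcal{T}=(T,\{X_t\}_{t\in V(T)},\{C_t\}_{t\in V(T)})$ of $G=(V,E)$, I would first check that the pair $(T,\{X_t\}_{t\in V(T)})$ satisfies the three defining conditions of a tree decomposition. This is immediate: the vertex-coverage condition $\bigcup_{t\in V(T)}X_t=V(G)$, the edge-coverage condition, and the connectivity (subtree) condition all appear verbatim among the four bullet points defining an augmented tree decomposition in Definition~\ref{Tree-clique-width-def}; the only extra conditions in that definition concern the cliques $C_t$ (namely $\bigcup_{S\in C_t}S=X_t$ and that each edge within $X_t$ lies in some clique of $C_t$), which we simply do not need here.

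Having established that $(T,\{X_t\}_{t\in V(T)})$ is a bona fide tree decomposition of $G$, I would then invoke Lemma~\ref{clique-containment-lemma} with this tree decomposition and the given clique $W\subseteq V$. The lemma yields a node $t\in V(T)$ with $W\subseteq X_t$, which is exactly the conclusion asserted by the corollary. No further argument is required.

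There is essentially no obstacle here: the statement is, as the excerpt says, an \emph{immediate consequence} of the lemma, and the only "work" is the trivial observation that augmenting a tree decomposition with clique covers does not destroy any of the three structural axioms that Lemma~\ref{clique-containment-lemma} relies on. If one wanted to be fully self-contained one could also recall the one-line reason the lemma itself holds --- a clique is a connected subgraph, and for a connected subgraph the union of the subtrees $T_v$ over its vertices is itself connected and pairwise-intersecting (since adjacent vertices share a bag), so by the Helly property of subtrees of a tree there is a common node --- but since Lemma~\ref{clique-containment-lemma} is already available to us, citing it suffices.
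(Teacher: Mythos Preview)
Your proposal is correct and matches the paper's own treatment: the paper states the corollary as an immediate consequence of Lemma~\ref{clique-containment-lemma} applied to an augmented tree decomposition, and your argument---forgetting the clique covers $\{C_t\}$ to recover an ordinary tree decomposition and then invoking the lemma---is precisely that immediate consequence spelled out.
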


Throughout this paper for all our algorithms which have an exponential running time we ignore polynomial factors. The notation $O^*(c^n)$ is used to denote the class of functions that are within a polynomial factor of $c^n$. By rounding up $c$ we can omit the star, as a polynomial times an exponential function is growing slower than any exponential function with a sufficiently  large exponent $n$.

\section{Hardness results}

In this section we present hardness results for tree-clique width. In particular, we prove that tree-clique width is NP-complete even for width 3. The immediate consequence of this proof is that tree-clique width is NP-complete in general graphs for any width bigger or equal than 3. Finally, we also prove that there is no constant factor approximation algorithm using similar arguments that are used in the NP-hardness proof.

\subsection{Tree-clique width is NP-complete even for width 3}

In this section we prove that 3-tree-clique width is NP-complete by performing a reduction from 3-coloring. More formally, the 3-tree-clique width problem is the problem of recognising whether a graph $G$ admits a tree decomposition of tree-clique width at most three.

\begin{lemma} \label{Bipartite_sub_con}
\cite{bodlaender1990pathwidth}(Complete bipartite subgraph containment lemma)
\\
Let $(\{X_i|i \in I\},T=(I,F))$ be a tree decomposition of $G=(V,E)$. Let $A,B \subseteq V$, and suppose $\{(v,w)|v \in A, w\in B\}\subseteq E, A \cap B = \emptyset$. Then $\exists i \in I : A \subseteq X_i$ or $B \subseteq X_i$.
\end{lemma}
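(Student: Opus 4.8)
The plan is to translate the statement into the language of subtrees of $T$ and then argue by contradiction using the Helly property of subtrees of a tree (this lemma is originally due to Bodlaender, and this is essentially the standard route). For a vertex $x \in V$ write $T_x := \{i \in I : x \in X_i\}$. By the connectivity axiom of a tree decomposition, $T_x$ induces a subtree of $T$; by the edge axiom, for every edge $xy \in E$ some bag contains both endpoints, i.e. $T_x \cap T_y \neq \emptyset$. In this language the conclusion to be proved is precisely: either $\bigcap_{v \in A} T_v \neq \emptyset$ or $\bigcap_{w \in B} T_w \neq \emptyset$. One may assume $|A| \geq 2$ and $|B| \geq 2$, since if e.g. $A = \{v\}$ then any bag containing $v$ (one exists, as the bags cover $V$) already contains $A$.

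First I would invoke, or prove in a couple of lines by induction on the number of subtrees, the classical \emph{Helly property for subtrees of a tree}: a finite family of pairwise-intersecting subtrees of a tree has a node lying in all of them. Applying this to the family $\{T_v : v \in A\}$: if these subtrees pairwise intersect, we get a node $i$ with $A \subseteq X_i$ and are done; symmetrically for $\{T_w : w \in B\}$. So it remains only to rule out the case in which \emph{neither} of the two families is pairwise-intersecting.

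In that case, choose $v_1, v_2 \in A$ with $T_{v_1} \cap T_{v_2} = \emptyset$ and $w_1, w_2 \in B$ with $T_{w_1} \cap T_{w_2} = \emptyset$; note $v_1, v_2, w_1, w_2$ are four distinct vertices since $A \cap B = \emptyset$, so each pair $v_i w_j$ is a genuine edge of $G$ and hence $T_{v_i} \cap T_{w_j} \neq \emptyset$ for all $i, j$. Now the geometric core: because $T_{v_1}$ and $T_{v_2}$ are disjoint subtrees of the tree $T$, there is a unique path $\pi$ in $T$ that meets $T_{v_1}$ in exactly one endpoint $a$ and $T_{v_2}$ in exactly one endpoint $b$ — the ``bridge'' between the two subtrees. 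Any subtree of $T$ that meets both $T_{v_1}$ and $T_{v_2}$ contains a node of each, and being connected it then contains the whole path between those two nodes, which routes through $a$ and through $b$ and therefore contains all of $\pi$. Applying this to $T_{w_1}$ and to $T_{w_2}$ gives $\pi \subseteq T_{w_1}$ and $\pi \subseteq T_{w_2}$, so $T_{w_1} \cap T_{w_2} \supseteq \pi \neq \emptyset$, contradicting the choice of $w_1, w_2$. Hence one of the two families is pairwise-intersecting, and the lemma follows.

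I expect the step needing the most care to be the bridge-path claim: one must verify cleanly that a connected subgraph of a tree meeting two disjoint subtrees contains the \emph{entire} unique path joining them, and in particular that this path $\pi$ depends only on $T_{v_1}$ and $T_{v_2}$ and not on the $w_j$ used — it is exactly this common bridge that forces $T_{w_1}$ and $T_{w_2}$ to meet. The reduction to $|A|, |B| \geq 2$ and the appeal to the Helly property are routine.
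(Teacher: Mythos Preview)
The paper does not supply its own proof of this lemma; it is simply quoted from \cite{bodlaender1990pathwidth}. Your argument is correct and is essentially the standard route: recast bags-containing-$x$ as subtrees $T_x$, invoke the Helly property for subtrees of a tree on each of $\{T_v : v\in A\}$ and $\{T_w : w\in B\}$, and derive a contradiction in the residual case via the unique bridge path between two disjoint subtrees. The one place to tighten is exactly the one you flag: make explicit that for disjoint subtrees $T_{v_1},T_{v_2}$ of $T$ the path $\pi$ is the unique $T$-path between their closest nodes, so that \emph{any} $T$-path from a node of $T_{v_1}$ to a node of $T_{v_2}$ contains $\pi$; this immediately gives $\pi\subseteq T_{w_j}$ for $j=1,2$ from connectivity of $T_{w_j}$, and the contradiction follows.
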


\begin{lemma} \label{induced_subtree}
\cite{bodlaender1990pathwidth}\\
Let $(\{X_i|i \in I\},T=(I,F))$ be  a tree decomposition of $G=(V,E)$, and let $A,B \subseteq V$ and suppose $\{(v,w)|v\in A, w\in B\}\subseteq E, A\cap B= \emptyset $. Suppose $\exists i \in I: A\subseteq X_i$. Then there exists an induced subtree $T'=(I',F')$ of $T$, such that: 
\begin{enumerate}
    \item $\forall i \in I': A\subseteq X_i$
    \item $B\subseteq \bigcup_{i\in I'}X_i$
    \item $({X'_i|i \in I'}, T'=(I',F'))$ with $X'_i=X_i\cap (A\cup B)$ is a tree decomposition of the subgraph of $G$ induced by $A\cup B$.
\end{enumerate}

\end{lemma}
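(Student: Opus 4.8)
The plan is to set $I' := \{\, i \in I : A \subseteq X_i \,\}$, which is nonempty by the hypothesis that some bag contains $A$, and take $T'$ to be the subtree of $T$ induced on $I'$. Property (1) then holds by construction. The engine behind (2) and (3) is the \emph{Helly property for subtrees of a tree}: if $\mathcal{F}$ is a finite family of subtrees of a tree such that every two members of $\mathcal{F}$ share a node, then $\bigcap \mathcal{F} \neq \emptyset$. I would either cite this or prove it in a line by induction, using the basic observation that in a tree the intersection of two connected subgraphs is connected — any two of its nodes are joined by a unique path in $T$, which lies inside each of the two subgraphs and hence inside the intersection — so one can intersect the members of $\mathcal{F}$ one at a time, each step keeping the running intersection a nonempty subtree. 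The same unique-path observation shows that an arbitrary intersection of subtrees of $T$ is again a (possibly empty) subtree; applied to $I' = \bigcap_{a \in A} T_a$, where $T_a := \{\, i \in I : a \in X_i \,\}$ is a subtree by the connectivity axiom of tree decompositions, this confirms that $T' = T[I']$ is genuinely an induced subtree of $T$.

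For property (2) and for the edge condition of property (3) simultaneously, I would show that for every edge $(u,v)$ of the subgraph induced by $A \cup B$ there is a node of $I'$ containing both $u$ and $v$; taking $u \in A$ arbitrary and $v \in B$ recovers (2), since $A$ is complete to $B$ by hypothesis. Consider the family $\mathcal{F} = \{\, T_a : a \in A \,\} \cup \{ T_u, T_v \}$. Any two $T_a, T_{a'}$ meet because $T_a \cap T_{a'} \supseteq I' \neq \emptyset$; any $T_a$ meets $T_u$ because either $u \in A$, giving $T_a \cap T_u \supseteq I'$, or $u \in B$, and then $(a,u) \in E$ by the complete-bipartiteness hypothesis, so some bag contains $a$ and $u$; likewise $T_a$ meets $T_v$, and $T_u$ meets $T_v$ because $(u,v) \in E$. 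By the Helly property, $\bigcap \mathcal{F} \neq \emptyset$, and any node $i^\star$ in it satisfies $A \subseteq X_{i^\star}$ (so $i^\star \in I'$) and $u, v \in X_{i^\star}$, hence $u, v \in X'_{i^\star}$. This gives (2) and the edge-coverage part of (3) at once. The vertex-coverage part of (3), that $\bigcup_{i \in I'} X'_i = A \cup B$, then follows: $A$ is covered since $A \subseteq X_i$ for every $i \in I'$ and $I' \neq \emptyset$, $B$ is covered by (2), and $X'_i \subseteq A \cup B$ by definition of $X'_i = X_i \cap (A \cup B)$.

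For the connectivity condition of property (3), fix $v \in A \cup B$. Since $v \in A \cup B$, we have $v \in X'_i \iff v \in X_i$, so $\{\, i \in I' : v \in X'_i \,\} = I' \cap T_v$. Both $I'$ and $T_v$ are subtrees of $T$, so their intersection is connected in $T$ by the unique-path observation; being contained in $I'$, it is connected in the induced subtree $T'$ as well, hence induces a connected subtree of $T'$, as required.

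The only genuinely delicate point is getting the Helly step exactly right — it is precisely there that the complete-bipartiteness hypothesis does real work, through the pairwise meetings $T_a \cap T_u$ and $T_a \cap T_v$ when $u$ or $v$ lies in $B$ — while the rest is bookkeeping. I would also dispatch the degenerate cases up front: if $A = \emptyset$ then $I' = I$ and the statement reduces to restricting the given decomposition to $B$, and if $B = \emptyset$ it is immediate, so that the arguments above may assume $A$ and $B$ nonempty (and hence, since $G$ is finite, finite, which is all the finite Helly property needs).
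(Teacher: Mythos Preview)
Your proof is correct. The paper does not actually prove this lemma: it is quoted verbatim from \cite{bodlaender1990pathwidth} and used as a black box, so there is no in-paper argument to compare against. Your approach via the Helly property for subtrees is the standard and natural one; the choice $I' = \bigcap_{a\in A} T_a$ is exactly right, the pairwise-intersection check is where the complete-bipartite hypothesis $A\times B\subseteq E$ is spent, and your handling of the degenerate cases $A=\emptyset$ and $B=\emptyset$ is clean.
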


\begin{theorem} \label{NP-completeness-tree-clique-width}
The 3-tree-clique width problem is NP-complete.
\end{theorem}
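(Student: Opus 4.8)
The proof splits into membership in NP and NP-hardness via a reduction from $3$-\textsc{Colouring}, using the two Bodlaender-style lemmas quoted above to force structure into a single bag.

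\textbf{Membership in NP.} The certificate is an augmented tree decomposition of width at most $3$, and I would first argue it can be taken of polynomial size. Whenever two adjacent nodes $s,t$ of the tree satisfy $X_s\subseteq X_t$, contract $s$ into $t$, keeping $X_t$ and $C_t$ and reattaching the other neighbours of $s$ to $t$: every condition of Definition~\ref{Tree-clique-width-def} is preserved (in particular subtree connectivity, since $v\in X_s$ implies $v\in X_t$), and no clique collection is enlarged, so the width stays $\le 3$. Exhaustively contracting yields a \emph{reduced} decomposition in which no bag lies inside a neighbouring bag, and the standard counting argument bounds the number of nodes by $O(n)$; as each $C_t$ has at most $3$ cliques, the whole object has size polynomial in $|G|$. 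Verification is then routine in polynomial time: check the tree-decomposition axioms, that each member of each $C_t$ induces a clique of $G$, that $\bigcup_{S\in C_t}S=X_t$, that every edge of $G$ with both ends in $X_t$ lies in some $S\in C_t$, and that $|C_t|\le 3$ everywhere.

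\textbf{NP-hardness.} From a graph $G$ I would build, in polynomial time, a graph $G'$ with $tcl(G')\le 3$ iff $\chi(G)\le 3$. The construction has two layers. First a \emph{colouring gadget} $H=H(G)$, built from $G$, designed so that the vertices \emph{and} edges of $H$ can be covered by $3$ cliques of $H$ exactly when $G$ admits a proper $3$-colouring, the three cliques playing the role of the colour classes. Second, to force all of $V(H)$ into one bag, attach an independent set $B$ with $|B|>|E(H)|$, joined completely to $V(H)$: by Lemma~\ref{Bipartite_sub_con} every tree decomposition of $G'$ has a bag containing all of $V(H)$ or all of $B$, and a bag containing $B$ already needs $|B|>3$ cliques just to cover the independent set $B$, so a width-$3$ decomposition must have a bag $X_t\supseteq V(H)$; that bag forces $|C_t|\ge ecc(H)$ since $H$ is an induced subgraph of $G'[X_t]$ and the clique cover number is monotone under induced subgraphs. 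Conversely, one realizes the bound by placing an optimal $3$-clique edge cover of $H$ on the bag $V(H)$ and hanging each $b\in B$ off it as a leaf bag $V(H)\cup\{b\}$, whose edges remain covered after extending each of the three cliques by $b$ (legitimate since $b$ is adjacent to all of $V(H)$). Hence $tcl(G')\le 3 \iff ecc(H)\le 3 \iff \chi(G)\le 3$, and combined with membership this gives NP-completeness.

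\textbf{Main obstacle.} The crux is the colouring gadget $H(G)$ together with the proof that a size-$3$ cover of all its vertices and edges by cliques is \emph{precisely} a proper $3$-colouring of $G$ — equivalently, carrying out a reduction from $3$-\textsc{Colouring} to deciding "$ecc(H)\le 3$" for graphs $H$ coming from colouring instances. The delicate feature is that an edge clique cover need not partition the vertex set and that in a tree decomposition a vertex sits in many bags, so one must argue the cover is a genuine global choice; Lemma~\ref{induced_subtree} is the tool to show the forced bag interacts cleanly with the rest of the decomposition and that nothing outside $V(H)\cup B$ sneaks a fourth clique into a bag. All of the real work lies here; everything else (the padding trick, the NP-membership bookkeeping) is routine. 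Finally, running the same construction from $k$-\textsc{Colouring} — NP-complete for every fixed $k\ge 3$ — in place of $3$-\textsc{Colouring} yields NP-completeness of $k$-tree-clique width for all fixed $k\ge 3$, as announced in the introduction.
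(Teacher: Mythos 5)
Your overall architecture matches the paper's: NP membership by checking a polynomial-size augmented decomposition, and NP-hardness by building a gadget from a $3$-\textsc{Colouring} instance, padding it with an independent set completely joined to the gadget, and invoking Lemma~\ref{Bipartite_sub_con} to force the whole gadget into a single bag (a bag containing the independent set would already need more than $3$ cliques). That padding-and-forcing argument is exactly what the paper does, except that it uses only four padding vertices (one more than the target width) rather than $|E(H)|+1$; either count works.

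However, there is a genuine gap: you explicitly defer the construction of the ``colouring gadget'' $H(G)$ and the proof that a size-$3$ clique cover of $H$ corresponds precisely to a proper $3$-colouring of $G$, and this is the entire content of the reduction — without it nothing is proved. The paper's resolution is very short: it takes $H$ to be the complement $G'=\overline{G}$, so that the colour classes of a proper $3$-colouring of $G$ are exactly three cliques of $G'$ partitioning its vertex set, and conversely a cover of $V(G')$ by three cliques yields a $3$-colouring of $G$; the width measure used there is the number of cliques needed to cover the \emph{vertices} of a bag (the ``width-weight'' in Definition~\ref{Tree-clique-width-def}). Note that your formulation asks for the stronger condition $ecc(H)\le 3$, i.e.\ that the three cliques also cover every \emph{edge} of $H$; with $H=\overline{G}$ this fails in general, since edges of $\overline{G}$ running between two colour classes are covered by none of the three cliques, so under that reading the complement construction does not work as-is and you would indeed need a different gadget — which you do not supply. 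So either you adopt the vertex-cover reading of the width (in which case $H=\overline{G}$ closes the gap immediately, as in the paper), or you owe a construction reducing $3$-\textsc{Colouring} to ``$ecc(H)\le 3$''; as written, the central step is missing.
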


We first show that the 3-tree-clique width problem belongs to NP. Given an instance of the problem, we use as a certificate an augmented tree decomposition of a graph $G=(V,E)$ which is a triple 
$\mathcal{T}=(T,\{X_t\}_{t\in V(T)},\{C_t\}_{t\in V(t)} )$. Clearly we can check in polynomial time whether $\mathcal{T}$ is a valid augmented tree decomposition, and whether every bag $X_t$ is indeed covered by the corresponding cliques in $C_t$. In addition we can check in polynomial time that the cardinality of each $C_t$ is at most 3, namely $|C_t| \leq 3, \forall {t\in V(t)}$. Therefore, 3-tree-clique width $\in$ NP.

Now we describe how to construct the graph $H$ which we use in the main proof. Initially we construct the complement of G, namely $G'$, which can be done in polynomial time. Then we create four more disjoint vertices and we draw edges from each of them to every vertex in $G'$. The suggested construction can be seen on Figure \ref{fig:3-tree-clique_width_fig1}. Every vertex of $G'$ is connected with the four vertices on the right of Figure \ref{fig:3-tree-clique_width_fig1}. Let us call this new graph $H$. A characteristic of this new graph $H$, that we are going to exploit, is that it has the same clique covering number as $G'$.

\begin{figure}[H]
    \centering
    \includegraphics[width = 10cm,height=5cm]{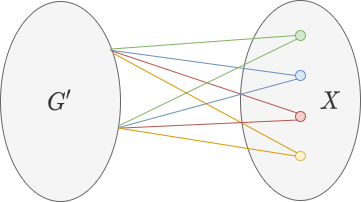}
    \caption{The graph H can be constructed in the following way. First we calculate the complement of $G$, namely $G'$. Then we create 4 disjoint vertices and we draw edges to every vertex in $G'$. The resulting graph $H$ has the same clique covering number as graph $G'$.}
    \label{fig:3-tree-clique_width_fig1}
\end{figure}

%----------------------------------
\begin{comment}
\begin{lemma}
If a graph $G=(V,E)$ is 3-colorable, then the constructed graph $H$ has a tree decomposition of tree-clique width at most 3.
\end{lemma}
\begin{proof}
Since $G$ is 3-colorable we know that its complement, $G'$ can be covered by three cliques. Now, in order to construct $H$ what it remains, is to connect the four disjoint vertices to every vertex in $G'$. This means that again three cliques suffice to cover both $G'$ and these four newly added vertices. We can then obtain a tree decomposition of tree-clique width 3 by including every vertex of $H$ in a bag.
\end{proof}
\end{comment}
%-----------------------------------

We now prove that 3-tree-clique width problem is NP-hard by showing that 3-coloring $\leq_P$ 3-tree-clique width.

\begin{lemma} \label{NP_hardness_proof}
A graph $G=(V,E)$ is 3-colourable if and only if the constructed graph $H$ has a tree decomposition of tree-clique width at most 3.
\end{lemma}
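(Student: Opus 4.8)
The plan is to prove the two implications separately, with Lemma~\ref{Bipartite_sub_con} carrying the weight of the ``only if'' direction. Write $z_1,z_2,z_3,z_4$ for the four added vertices; by construction each $z_j$ is adjacent to every vertex of $G'$ while the $z_j$ are pairwise non-adjacent, so any clique of $H$ contains at most one of them, and for any clique $S$ of $H$ the set $S\cap V(G')$ is a clique of $G'$. I will also use the standard fact that $G$ is $3$-colourable precisely when $V(G')$ can be covered by at most three cliques of $G'$, i.e.\ when $vcc(G')\le 3$: a clique of $G'=\overline{G}$ is an independent set of $G$, so a clique cover of $G'$ is a proper colouring of $G$ and conversely.

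For the ``if'' direction, suppose $V(G')$ is partitioned into cliques $Q_1,Q_2,Q_3$ of $G'$. I would take $T$ to be a star with centre $r$ and leaves $\ell_1,\dots,\ell_4$, set $X_r=V(G')$ with covering $C_r=\{Q_1,Q_2,Q_3\}$, and for each $j$ set $X_{\ell_j}=V(G')\cup\{z_j\}$ with covering $C_{\ell_j}=\{Q_1\cup\{z_j\},Q_2,Q_3\}$; the set $Q_1\cup\{z_j\}$ is a clique of $H$ because $z_j$ is adjacent to all of $V(G')$. It then remains to verify the augmented-tree-decomposition axioms, which is routine: every vertex lies in a bag; every edge of $H$ lies in a bag (edges with both endpoints in $V(G')$ in $X_r$, edges $(z_j,v)$ in $X_{\ell_j}$, and there is no edge among the $z_j$); the bags containing a fixed $v\in V(G')$ are all five bags, which induce a connected subtree, while the bags containing $z_j$ are exactly $\{\ell_j\}$; and each $C_t$ is a family of cliques of $H$ whose union is $X_t$. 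Every bag is covered by three cliques, so $H$ has an augmented tree decomposition of width $3$.

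For the ``only if'' direction, suppose $H$ has an augmented tree decomposition of width at most $3$. Apply Lemma~\ref{Bipartite_sub_con} with $A=\{z_1,z_2,z_3,z_4\}$ and $B=V(G')$: all edges between $A$ and $B$ are present in $H$ and $A\cap B=\emptyset$, so some bag $X_i$ satisfies $A\subseteq X_i$ or $B\subseteq X_i$. The first case cannot occur: $A$ is an independent set of size four, so any clique covering of $X_i$ uses at least four cliques, contradicting width $\le 3$. Hence $V(G')\subseteq X_i$ for some node $i$; its covering $C_i$ consists of at most three cliques of $H$ whose union contains $V(G')$, and intersecting each of them with $V(G')$ yields at most three cliques of $G'$ covering $V(G')$, so $vcc(G')\le 3$ and $G$ is $3$-colourable. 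The only genuine subtlety here --- and the reason exactly four vertices are attached rather than any other number --- is the step that rules out $A\subseteq X_i$: four pairwise non-adjacent vertices cannot be packed into three cliques, which is precisely what forbids that alternative of Lemma~\ref{Bipartite_sub_con} at width $3$ and forces all of $V(G')$ into a single bag, from which a three-clique cover of $G'$ (equivalently a $3$-colouring of $G$) can be read off. I expect the bulk of a full write-up to be the mechanical check of the decomposition axioms in the ``if'' direction.
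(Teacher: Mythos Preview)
Your proposal is correct and follows essentially the same approach as the paper: exhibit an explicit decomposition whose bags are $V(G')$ augmented by one $z_j$ at a time for the forward direction, and invoke Lemma~\ref{Bipartite_sub_con} together with the fact that four pairwise non-adjacent vertices force a fourth clique for the converse. The only cosmetic differences are that the paper uses four bags (each $V(G')\cup\{z_j\}$) rather than your five-node star with an extra centre bag $V(G')$, and it labels $A=V(G')$, $B=X$ rather than the other way round; neither affects the argument.
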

\begin{proof}

Suppose that we have an instance of a 3-coloring on a graph $G$. We notice that since every vertex from $X$ is connected to every vertex in $G'$, this means that the clique covering of $G'$ is the same as in $H$, namely three. This construction can be done in polynomial time. We now show how we can create a tree decomposition of tree-clique width at most 3 in the following way. We create four bags and each one of them contains the graph $G'$ plus one more vertex from $X$, see Figure \ref{fig:3-tree-clique_width_fig2}. Clearly this tree decomposition $\mathcal{T}$ has tree-clique width at most 3 since every $v \in X$ is covered by the three cliques that cover $G'$. Additionally, $\mathcal{T}$ is a valid tree decomposition since $\bigcup_{t\in V(T)}X_t=V(G')$ and for every edge $(u,v)$ from the set $X$ to $G'$, in graph $H$, there exists a bag in $\mathcal{T}$ containing both $u$ and $v$. Finally, it is apparent that such a tree decomposition respects the subtree connectivity requirement.

\begin{figure}[H]
    \centering
    \includegraphics[width = 10cm]{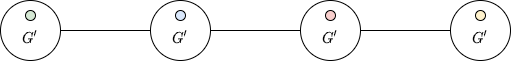}
    \caption{A tree decomposition where every bag contains all the vertices of $G'$ plus one vertex from the four disjoint vertices of the set $X$ in $H$. In this graph the tree-clique width is exactly the same as the clique covering number of $G'$.}
    \label{fig:3-tree-clique_width_fig2}
\end{figure}

Assume now that we have an instance of a tree decomposition $\mathcal{T}$ of a graph as $H$, which has tree-clique width 3. From \cref{Bipartite_sub_con} we know that $\mathcal{T}$ will have a form as Figure \ref{fig:3-tree-clique_width_fig2} suggests. In particular, in $\mathcal{T}$ there must exists an $i\in I:\ A=G'\subseteq X_i$ or $B=X\subseteq X_i$, where $G'$ and $X$ are the left and the right set of vertices of graph $H$, see Figure \ref{fig:3-tree-clique_width_fig1}. Therefore, it is apparent that $A\subseteq X_i$ since the tree-clique width of $\mathcal{T}$ is 3. Otherwise, in the case where $B \subseteq X_i$, the four disjoint vertices of the set $X$ in graph $H$ requires four cliques to be covered. As $A$ is a subset of $X_i$, we know that $A$ is a subset of a set that can be covered by three cliques. Thus $A$ can be covered by three cliques. Thus, the complement of $A$, namely $G$ is 3-colorable.

\end{proof}

This transformation again can be done in polynomial time. Therefore, the 3-tree-clique width problem $\in $ NP-complete.

% Mentioning that the same holds for path-clique width: 

Since a tree decomposition is a generalization of path decomposiition, we can safely assume that \cref{Bipartite_sub_con} and \cref{induced_subtree} also hold for the case of path decompositions. This means that we can construct the graph $H$ exactly in the same way as we do in the tree-clique width case, see Figure \ref{fig:3-tree-clique_width_fig1}. Thus given a graph $G'$ which is 3-colourable we can always construct a path decomposition as Figure \ref{fig:3-tree-clique_width_fig2} suggests, which has a path-clique width at most 3. As we can observe it is indeed a path decomposition and for every vertex $u\in V(P)$, the indices of the bags containing $u$ form and interval as the third condition of path-clique width definition suggests. Similarly, for the opposite direction, we can prove that given an instance of a path decomposition $\mathcal{P}$ of a graph $H$, which has path-clique width at most 3 we can obtain the graph $G$ which is 3-colourable. Again for this case we can make use of \cref{Bipartite_sub_con} and \cref{induced_subtree} from which we obtain a set $A=G \subseteq X_i$ which is 3-colourable. Therefore, the 3-path-clique width problem $\in$ NP-complete

It is  also worth noting that an immediate corollary of Theorem \ref{NP-completeness-tree-clique-width} is the following.

\begin{corollary}
There is no fixed parameterized algorithm for tree-clique width, unless P=NP.
\end{corollary}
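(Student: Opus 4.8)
The plan is to derive the corollary directly from \cref{NP-completeness-tree-clique-width} via the standard observation that NP-hardness at a fixed parameter value rules out fixed-parameter tractability. First I would make the statement precise: a fixed-parameter algorithm for tree-clique width would be an algorithm that, on input a graph $G$ and an integer $k$, decides whether $tcl(G) \le k$ in time $f(k)\cdot |V(G)|^{O(1)}$ for some computable function $f$ depending only on $k$.

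Next I would instantiate the parameter at the value $k = 3$. Since $f(3)$ is then a fixed constant, the running time $f(3)\cdot |V(G)|^{O(1)}$ is bounded by a polynomial in the size of $G$. Hence such an algorithm would decide, in polynomial time, whether an arbitrary graph $G$ satisfies $tcl(G) \le 3$, i.e., it would solve the 3-tree-clique width problem in polynomial time.

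The final step is to invoke \cref{NP-completeness-tree-clique-width}: the 3-tree-clique width problem is NP-complete, so a polynomial-time algorithm for it implies $\mathrm{P} = \mathrm{NP}$. Contrapositively, unless $\mathrm{P} = \mathrm{NP}$, no such fixed-parameter algorithm can exist. The same argument in fact also rules out slice-wise polynomial (XP) algorithms, since any algorithm running in time $|V(G)|^{g(k)}$ is likewise polynomial once $k = 3$ is fixed; and one may equally well use any fixed $k \ge 3$, by the remark following \cref{NP-completeness-tree-clique-width} that the problem stays NP-complete for every target width at least $3$.

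I do not expect a genuine obstacle here; the only point that needs care is the choice of parameter. The argument works precisely because the hardness in \cref{NP-completeness-tree-clique-width} is established for a \emph{constant} width, so the parameter can be frozen without the instance size growing; had the reduction produced instances whose target width increased with $|V(G)|$, this collapse to polynomial time would not follow.
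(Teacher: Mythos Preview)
Your proposal is correct and is precisely the standard argument the paper has in mind: the corollary is stated in the paper as an immediate consequence of \cref{NP-completeness-tree-clique-width} with no further proof given, and your instantiation at $k=3$ to collapse $f(k)\cdot n^{O(1)}$ to polynomial time is exactly that immediate step. Your additional remark that the same reasoning excludes XP algorithms is a nice bonus not spelled out in the paper.
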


\subsection{The inapproximability of tree-clique width}

In this section we prove that the tree-clique width can not be approximated within a constant factor c, unless of course P=NP. For this purpose we use the 3-colouring problem which is a known NP-complete problem. A rough sketch of the proof idea is as follows. We take an instance of a 3-colouring problem and we transform it to an instance of tree-clique. The transformation is very similar to the graph $H$ of Figure \ref{fig:3-tree-clique_width_fig1} that we used for the NP-completeness proof. Then we show that finding an approximate solution of this new instance of tree-clique width corresponds to finding an approximate solution for the 3-coloring problem.

\begin{theorem}
There is no constant factor approximation algorithm for tree-clique width.
\end{theorem}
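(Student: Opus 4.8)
The plan is to reduce from the problem of approximating the chromatic number, which it is known (as a consequence of the PCP theorem) admits no polynomial-time constant-factor approximation unless $\mathrm{P}=\mathrm{NP}$; this is the quantitatively strongest consequence of the hardness of $3$-colouring, and it contains the $3$-colourability decision problem used in \cref{NP-completeness-tree-clique-width}. Given an arbitrary graph $G$ on $n$ vertices I would build essentially the graph $H$ of \cref{fig:3-tree-clique_width_fig1}, except that instead of four pendant vertices I attach $n$ of them: form the complement $G'=\overline{G}$, add an independent set $X=\{x_1,\dots,x_n\}$ of fresh vertices, and join every $x_i$ to every vertex of $G'$. Call the result $H(G)$; it has $2n$ vertices and is computable in polynomial time. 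As in the NP-completeness argument, the driving structural fact is that $vcc(G')=\chi(G)$ — a clique of $G'$ is an independent set of $G$, so a minimum vertex clique cover of $G'$ is exactly an optimal proper colouring of $G$ — and, crucially, that $tcl(H(G))$ equals this quantity on the nose.

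To establish $tcl(H(G))=\chi(G)$ I would prove two inequalities, with the width-weight of a bag read as in \cref{NP-completeness-tree-clique-width}, i.e.\ the minimum number of cliques covering all vertices of the bag. For the upper bound, take a minimum clique cover $Q_1,\dots,Q_{\chi(G)}$ of $G'$ and form the path decomposition whose bags are $X_i=V(G')\cup\{x_i\}$ for $i=1,\dots,n$; each $X_i$ is covered by the $\chi(G)$ cliques $Q_1\cup\{x_i\},Q_2,\dots,Q_{\chi(G)}$ (legal since $x_i$ is adjacent to all of $G'$), and one checks the three decomposition axioms directly, so $tcl(H(G))\le\chi(G)$. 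For the lower bound I would invoke \cref{Bipartite_sub_con} with $A=V(G')$ and $B=X$: since $H(G)$ contains every edge between $A$ and $B$, any augmented tree decomposition of $H(G)$ has a bag $X_t$ with $A\subseteq X_t$ or $B\subseteq X_t$. In the first case the traces of the cliques of $C_t$ on $A$ are cliques of $G'$ that cover $A$, so $|C_t|\ge vcc(G')=\chi(G)$; in the second case $B$ is an independent set of size $n$ and no clique of $H(G)$ meets $B$ in two vertices, so $|C_t|\ge n\ge\chi(G)$. Either way the width is at least $\chi(G)$, hence $tcl(H(G))\ge\chi(G)$.

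Finally, if some polynomial-time algorithm approximated $tcl$ within a fixed constant $c$, then running it on $H(G)$ and using $tcl(H(G))=\chi(G)$ would give a polynomial-time $c$-approximation of $\chi(G)$, contradicting the inapproximability of the chromatic number; so no such algorithm exists unless $\mathrm{P}=\mathrm{NP}$. The step deserving the most care — the main obstacle — is the exact identity $tcl(H(G))=\chi(G)$, and within it the lower bound: one must be certain that the $x_i$ genuinely force a single bag to swallow all of $V(G')$ (this is precisely what \cref{Bipartite_sub_con} buys) and that attaching them introduces no cheaper cliques, which is why $X$ is chosen to be a large independent set rather than a clique or a set of bounded size. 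I would also note that reducing directly from the $3$-colourability decision problem (the setting of \cref{NP-completeness-tree-clique-width}) only separates $tcl=3$ from $tcl\ge 4$ and therefore rules out approximation ratios below $4/3$; turning that into an unbounded gap would amount to amplifying the colouring gap, a notoriously hard problem, so it is the passage through the inapproximability of $\chi$ that makes the unbounded-factor statement go through.
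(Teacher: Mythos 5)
Your proposal is correct, but it takes a genuinely different route from the paper's. The paper uses essentially the same gadget $H$ (the complement $G'$ of $G$ plus a large independent set $X$ fully joined to it, with $|X|=n+1$) and the same key tool (\cref{Bipartite_sub_con}), but it only proves the equivalences ``$G$ is $3$-colourable $\Leftrightarrow tcl(H)\le 3$'' and ``$G$ is $k$-colourable $\Leftrightarrow tcl(H)\le k$,'' and then argues that a $c$-approximation for $tcl$ would let one distinguish $3$-colourable graphs from graphs with no $3c$-colouring; that is the promise problem of $O(1)$-colouring a $3$-colourable graph, so the paper's contradiction rests on \cref{UGC_graph_coloring} and the result is conditional on the Unique Games Conjecture. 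You instead establish the exact identity $tcl(H(G))=\chi(G)$ --- upper bound via the explicit path decomposition with bags $V(G')\cup\{x_i\}$, lower bound via \cref{Bipartite_sub_con} together with $vcc(G')=\chi(G)$ and the fact that the independent set $X$ needs $|X|\ge\chi(G)$ cliques if it is the side swallowed by a bag --- and then invoke the unconditional PCP-based inapproximability of the chromatic number. This buys a strictly stronger conclusion: the theorem holds assuming only $\mathrm{P}\neq\mathrm{NP}$, and since $|V(H)|=2n$ your identity in fact transfers the full $|V|^{1-\varepsilon}$ inapproximability of $\chi$ to $tcl$, not merely the constant-factor statement. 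Your closing remark, that the bare $3$-versus-$4$ gap from the NP-completeness reduction only rules out ratios below $4/3$ and that an unbounded gap must be imported from somewhere, correctly identifies why the paper needs an extra colouring-hardness hypothesis; your choice of that hypothesis is simply the better one. One cosmetic quibble: the inapproximability of $\chi$ is a PCP-based theorem in its own right rather than a ``consequence of the hardness of $3$-colouring,'' but nothing in your argument depends on that phrasing.
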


Let us now elaborate on the steps of the proof. Suppose we have a graph $G=(V,E)$ of which we want to determine if it is 3-colorable. We then transform $G$ into a new graph $H$ in the following way. Initially we construct the complement of $G$, namely $G'$, which can be done in polynomial time. Then we create $n+1$ vertices and we draw edges from each of them to every vertex in $G'$. The suggested construction can be seen on Figure \ref{fig:3-tree-clique_width_approx}. Every vertex of $G'$ is connected with every vertex of the set $X$.

\begin{figure}[H]
    \centering
    \includegraphics[width = 10cm]{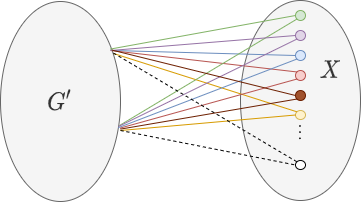}
    \caption{The graph H which has the exact construction process as we do in Figure \ref{fig:3-tree-clique_width_fig1}, but this time the set $X$ contains $n+1$ vertices instead of four. Again the resulting graph $H$ has the same clique covering number as graph $G'$.}
    \label{fig:3-tree-clique_width_approx}
\end{figure}

\begin{lemma} \label{3-colourability-approx_lemma_1}
A graph $G=(V,E)$ is 3-colourable if and only if the constructed graph $H$ has a tree decomposition of tree-clique width at most 3.
\end{lemma}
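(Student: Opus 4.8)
The plan is to mirror the proof of Lemma~\ref{NP_hardness_proof} essentially verbatim, since the only change in the construction is that the independent set $X$ now has $n+1$ vertices rather than four, and nothing in the earlier argument actually used the number four beyond it being ``more than three''. First I would establish the easy direction: if $G$ is $3$-colourable, then $G'$ (its complement) is covered by three cliques, and because every vertex of $X$ is adjacent to every vertex of $G'$, those same three cliques together with the singletons of $X$ still cover $H$; more precisely, one takes $n+1$ bags, each equal to $V(G') \cup \{x\}$ for a distinct $x \in X$, connected in a path. Each such bag is covered by three cliques (the clique cover of $G'$, with $x$ appended to any one of them, since $x$ is adjacent to all of $G'$), so the tree-clique width is at most $3$; one then checks the three tree-decomposition axioms exactly as before (every edge of $H$ lies either inside $G'$, inside some bag, or goes from $X$ to $G'$ and hence lies in the bag containing that vertex of $X$; and the subtree-connectivity condition holds since each vertex of $G'$ is in every bag and each $x\in X$ is in exactly one).

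For the converse, suppose $H$ has an augmented tree decomposition $\mathcal{T}$ of tree-clique width at most $3$. I would apply \cref{Bipartite_sub_con} with $A = V(G')$ and $B = X$: since $A \cap B = \emptyset$ and all $A$--$B$ edges are present in $H$, there is a node $i$ with $A \subseteq X_i$ or $B \subseteq X_i$. The second case is now impossible for a much cruder reason than before: $X$ is an independent set on $n+1$ vertices, so covering $X$ alone already requires $n+1$ cliques, which exceeds $3$ for all $n \geq 3$ (and the small cases $n \leq 2$ are trivial, or can simply be excluded). Hence $A = V(G') \subseteq X_i$, and by the clique-cover corollary applied to the augmented tree decomposition, $X_i$ — and therefore its subset $V(G')$ — is covered by at most three cliques; a three-clique cover of $G'$ is exactly a proper $3$-colouring of $G = \overline{G'}$. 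This establishes both directions and proves the lemma.

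I do not expect any genuine obstacle here: the lemma is deliberately a near-repetition of Lemma~\ref{NP_hardness_proof} that sets up the inapproximability argument, and the substitution of $n+1$ for $4$ only makes the ``$B \subseteq X_i$ is impossible'' step easier, while scaling it with $n$ so that the later gap argument can force a super-constant blow-up. The one point to state carefully is the size bound in the easy direction — that each bag $V(G') \cup \{x\}$ really is coverable by three cliques and not accidentally forced up to four — which follows because $x$'s universality to $G'$ lets it be absorbed into any one clique of the $3$-cover of $G'$. If one wanted to be scrupulous about $n$ being large enough, one could note the whole statement is only interesting for graphs $G$ that are not trivially $3$-colourable, and handle tiny $n$ by inspection.
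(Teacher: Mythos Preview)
Your proposal is correct and follows exactly the paper's approach: the paper's own proof simply states that the argument is identical to that of Lemma~\ref{NP_hardness_proof}, with the only change being $|X|=n+1$ instead of $4$, and defers both directions to the earlier lemma. Your write-up in fact supplies more detail than the paper does (in particular the observation that $x$ can be absorbed into any clique of the cover of $G'$, and the explicit count $n+1>3$ ruling out $B\subseteq X_i$), so nothing is missing.
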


\begin{proof}

The proof of this Lemma is identical to the proof of \cref{NP_hardness_proof}. The only difference in this case is that the set $X$ does not contain just four vertices but $n+1$. This fact again forces a potential tree decomposition algorithm to include all the vertices of $G'$ in one bag. Therefore, since we know that $G$ is 3-colourable we end up with a tree decomposition of tree-clique width at most 3. For the proof of the other direction we can use again the arguments used in \cref{NP_hardness_proof}.
\end{proof}

\begin{lemma} \label{k-colourability}
A graph $G=(V,E)$ is $k$-colourable if and only if the constructed graph $H$ has a tree decomposition of tree-clique width at most $k$.
\end{lemma}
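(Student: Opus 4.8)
The plan is to generalize the proof of \cref{3-colourability-approx_lemma_1} verbatim, replacing the fixed value $3$ by an arbitrary $k$ and keeping the same graph $H$ (the complement $G'$ of $G$ together with a set $X$ of $n+1$ fresh vertices, each joined to every vertex of $G'$). The key observation is that the only role the number $3$ played was (a) as the bound on the chromatic number of $G$, which corresponds to the bound on the clique cover number of $G'$, and (b) in comparing against the size $|X|=n+1$ of the independent set of apex-like vertices, which needs to exceed $k$ so that a bag containing all of $X$ is forbidden. Since $k\le n$ whenever $G$ is genuinely being tested for $k$-colourability on $n$ vertices (and the case $k>n$ is trivial, as then $G$ is always $k$-colourable and $tcl(H)\le k$), we have $|X|=n+1>k$, so the argument goes through unchanged.

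Concretely, I would carry out the two directions as follows. For the forward direction, suppose $G$ is $k$-colourable. Then $G'=\overline{G}$ can be covered by $k$ cliques (each colour class of $G$ is a clique in $G'$). Every vertex of $X$ is adjacent to all of $G'$, so adding any single vertex of $X$ to these $k$ cliques — more precisely, adjoining that vertex to one of them, or keeping it as its own singleton and merging as needed — still yields a cover of $G'\cup\{x\}$ by $k$ cliques. Build the tree decomposition (in fact a path/star decomposition, exactly as in \cref{fig:3-tree-clique_width_fig2}) with $n+1$ bags, the $i$-th bag being $V(G')\cup\{x_i\}$ for the $i$-th vertex $x_i\in X$. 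One checks directly that this satisfies the three tree-decomposition axioms: every vertex and every edge of $H$ appears in some bag (all edges of $G'$ and all $G'$–$X$ edges lie inside some $V(G')\cup\{x_i\}$; there are no edges within $X$), and the bags containing any fixed vertex form a connected subtree (all of them for a vertex of $G'$, a single bag for a vertex of $X$). Each bag is covered by $k$ cliques, so the width is at most $k$.

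For the reverse direction, suppose $H$ admits a tree decomposition $\mathcal{T}$ of tree-clique width at most $k$. Apply \cref{Bipartite_sub_con} with $A=V(G')$ and $B=X$: since every vertex of $A$ is adjacent to every vertex of $B$ and $A\cap B=\emptyset$, there is a bag $X_i$ with $A\subseteq X_i$ or $B\subseteq X_i$. The second case is impossible: $X$ is an independent set of size $n+1>k$, so covering it requires $n+1$ distinct singleton cliques, contradicting $|C_i|\le k$. Hence $A=V(G')\subseteq X_i$. Since $\bigcup_{S\in C_i}S=X_i\supseteq V(G')$ and $|C_i|\le k$, the vertex set $V(G')$ is covered by at most $k$ cliques of $H$; restricting these cliques to $V(G')$ (a clique of $H$ contained in $V(G')\cup X$ intersected with $V(G')$ is still a clique of $G'$) gives a cover of $G'$ by at most $k$ cliques. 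Each such clique is an independent set of $G$, so it yields a colour class, and $G$ is $k$-colourable.

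The main obstacle, such as it is, is purely bookkeeping: one must be careful that the $n+1$ apex vertices genuinely strictly exceed $k$ in the regime where the statement is non-trivial, i.e.\ handle the edge case $k\ge n$ separately (there $G$ is always $k$-colourable and $H$ trivially has tree-clique width at most $k$ via the single-bag-per-apex construction, or even just noting $tcl(H)\le vcc(H)\le n\le k$), and to observe that intersecting the bag's clique cover with $V(G')$ does not increase the number of cliques and keeps each piece a clique. Everything else is a transcription of \cref{NP_hardness_proof} and \cref{3-colourability-approx_lemma_1} with $3$ replaced by $k$. This lemma is then the combinatorial engine behind the inapproximability theorem: a purported constant-factor-$c$ approximation for $tcl$ applied to $H$ would distinguish $tcl(H)\le 3$ from $tcl(H)\ge 3c+1$, i.e.\ (taking $k=3c$ above) would decide $3$-colourability versus non-$(3c)$-colourability, which is NP-hard.
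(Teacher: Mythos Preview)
Your proposal is correct and follows essentially the same approach as the paper: both directions are argued exactly as in \cref{NP_hardness_proof}/\cref{3-colourability-approx_lemma_1}, using \cref{Bipartite_sub_con} to force $V(G')$ into a single bag and reading off a $k$-clique cover of $G'$ from that bag. Your write-up is in fact tidier than the paper's (which contains leftover references to ``four'' and ``three'' from the $k=3$ case), and you additionally handle the degenerate regime $k\ge n$ and make explicit the restriction-of-cliques step, neither of which the paper spells out.
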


\begin{proof}
The proof of this lemma follows closely the argumentation used in both \cref{NP_hardness_proof} and \cref{3-colourability-approx_lemma_1}, however we give the full proof for completeness. Given a graph $G$ which is $k$-colourable we can construct a graph $H$, see Figure \ref{fig:3-tree-clique_width_approx}, where the vertex set $X$ contains $n+1$ vertices, where $n > k$. Again using \cref{Bipartite_sub_con} we know that a potential tree decomposition algorithm will be containing all the vertices of $G'$ in one bag which will result in a tree-clique width $k$. We can construct a tree decomposition of tree-clique width $k$ in the same way we do in \cref{NP_hardness_proof}. Namely, we create $n+1$ bags and each one of them contains the graph $G'$ plus one more vertex from $X$, see Figure \ref{fig:tree-clique_width_approx_fig2}. Clearly this tree decomposition $\mathcal{T}$ has tree-clique width at most $k$ since every $v\in X$ is covered by the $k$ cliques that cover $G'$. 

Assume now that we have an instance of a tree decomposition $\mathcal{T}$ of a graph as $H$, which has tree-clique width $k$. From \cref{Bipartite_sub_con} we know that $\mathcal{T}$ will have a form as Figure \ref{fig:tree-clique_width_approx_fig2} suggests. In particular, in $\mathcal{T}$ there must exist an $i\in I:\ A=G'\subseteq X_i$ or $B=X\subseteq X_i$, where $G'$ and $X$ are the left and the right set of vertices of graph $H$, see Figure \ref{fig:3-tree-clique_width_fig1}. Therefore, it is apparent that $A\subseteq X_i$ since the tree-clique width of $\mathcal{T}$ is $k$. Otherwise, in the case where $B \subseteq X_i$, the four disjoint vertices of the set $X$ in graph $H$ requires four cliques to be covered. As $A$ is a subset of $X_i$, we know that $A$ is a subset of a set that can be covered by three cliques. Thus $A$ can be covered by $k$ cliques. Thus, the complement of $A$, namely $G$ is $k$-colorable.

\end{proof}

\begin{figure}[H]
    \centering
    \includegraphics[width = 10cm]{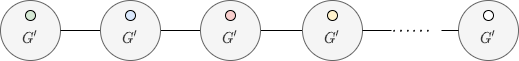}
    \caption{This graph depicts a potential tree decomposition $\mathcal{T}$ of tree-clique width at most $k$ of a graph $H$, where $|X|=n+1$, see Figure \ref{fig:3-tree-clique_width_approx}. In this $\mathcal{T}$ every bag contains all the vertices of $G'$ plus one vertex from the $n+1$ disjoint vertices of the set $X$ in $H$.}
    \label{fig:tree-clique_width_approx_fig2}
\end{figure}

\begin{theorem}\label{UGC_graph_coloring}
If Unique Games Conjecture is true, then there is no constant factor approximation for 3-coloring.
\end{theorem}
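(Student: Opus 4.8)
The plan is to obtain Theorem~\ref{UGC_graph_coloring} as a black-box consequence of the conditional hardness of approximate graph colouring, rather than by a fresh reduction from Unique Games. First I would pin down the statement: a \emph{$c$-factor approximation for $3$-colouring} is a polynomial-time algorithm that, on input a graph $G$ promised to be $3$-colourable, returns a proper colouring of $G$ with at most $c\cdot\chi(G)\le 3c$ colours. Equivalently — and this is the formulation I would actually use — fixing the integer $K=\lceil 3c\rceil$, such an algorithm solves the promise problem of distinguishing graphs with $\chi(G)\le 3$ from graphs with $\chi(G)>K$, since on the first kind it outputs at most $3c\le K$ colours while on the second kind it cannot.

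The core step is to invoke the theorem of Dinur, Mossel and Regev: assuming (a variant of) the Unique Games Conjecture, for every constant $K$ it is \textbf{NP}-hard to decide whether a given graph satisfies $\chi(G)\le 3$ or $\chi(G)>K$. At the level of a sketch, their proof composes a Unique-Games instance with a long-code / dictatorship-test gadget whose accepting configurations encode proper colourings of a small ``noise graph''; a near-optimal labelling on the \textsc{yes} side lifts to a $3$-colouring of the composed graph, while any colouring of the \textsc{no}-side graph by few colours can be decoded, via an influence/averaging argument on the gadget, into a labelling satisfying a constant fraction of the Unique-Games constraints. Crucially, the number of colours tolerated on the \textsc{no} side can be made an arbitrary constant $K$; this is exactly the feature that is \emph{not} available from the known unconditional results, which only rule out colouring a $3$-colourable graph with $4$ (Khanna--Linial--Safra) or, more recently, $5$ colours. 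Putting the two together finishes the argument: a $c$-factor approximation for $3$-colouring, run on the hard instances guaranteed above with $K=\lceil 3c\rceil$, would decide their promise problem in polynomial time, contradicting its \textbf{NP}-hardness and hence contradicting $\mathrm{P}\neq\mathrm{NP}$ under the Unique Games Conjecture.

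The one genuinely delicate point — which I would leave inside the cited work and not reprove — is the mismatch between the imperfect completeness of the Unique Games Conjecture (its \textsc{yes} instances are only $(1-\varepsilon)$-satisfiable) and the perfect completeness demanded by exact $3$-colourability; reconciling the two is precisely why the hypothesis must really be a perfect-completeness variant of UGC (or, in later work, the now-established $2$-to-$2$ Games Theorem) rather than the plain conjecture, and it is the reason the theorem is stated conditionally. For the present paper nothing sharper is needed, since Theorem~\ref{UGC_graph_coloring} is used only to transport inapproximability to tree-clique width through the construction of \cref{k-colourability}: a $c$-factor approximation for $tcl$ applied to the graph $H$ of Figure~\ref{fig:3-tree-clique_width_approx} would distinguish $\chi(G)\le 3$ from $\chi(G)>3c$, which the theorem forbids.
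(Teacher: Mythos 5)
Your proposal is correct and takes essentially the same route as the paper, which states Theorem~\ref{UGC_graph_coloring} as a known external result and supplies no proof (and, in fact, no citation) of its own. Your version — identifying the Dinur--Mossel--Regev conditional hardness of distinguishing $\chi(G)\le 3$ from $\chi(G)>K$ for arbitrary constant $K$, and flagging that the hypothesis must really be a perfect-completeness variant of UGC rather than the plain conjecture — is, if anything, a more careful account of exactly what the paper is implicitly relying on.
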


Suppose now there is a polynomial time $c$-approximation algorithm for tree-clique width. Now, we can build a polynomial time algorithm, that either determines that a given graph $G$ is not 3-colorable, or decides that $G$ has a $3c$-coloring, as follows. Create the graph H as discussed above with $|X|$ at least $3c+1$. Run the c-approximation algorithm for tree-clique width on $H$.
If this algorithm tells us that the tree-clique width of $H$ is at most $3c$, then by Lemma \ref{k-colourability}, $G$ has a $3c$-coloring. On the other hand, if the algorithm gives us a tree-clique width decomposition of width at least $3c+1$, then the tree-clique width of $H$ is at least $ \lceil (3c+1)/c \rceil =4$, and thus by Lemma \ref{3-colourability-approx_lemma_1}, $G$ is not 3-colorable. That implies that we have found a $c$-approximation algorithm for 3-coloring which contradicts Theorem \ref{UGC_graph_coloring}.

\section{A dynamic programming approach for tree-clique width}

In this section we show how we can use dynamic programming to recognise a tree-clique width decomposition of width $k$. If it exists, we can construct the corresponding decomposition in the same time bound. In particular our algorithm follows closely the algorithm of Arnborg et al. \cite{arnborg1987complexity} for tree-width computation. 

The main idea of the algorithm is the following. For a given graph $G$ there are at most $n^k$ separators of size $k$. Each such separator $S$ creates at most $n$ distinct components upon its removal in $G[V \setminus S]$. This implies that there are only $n^{k+1}$ candidates for pairs of the form: an adhesion between a node $x$ and its parent together with the component of $x$. Each such pair can be obtained by taking a set $S$ of at most $k$ vertices and choosing one connected component of $G[V \setminus S]$. These pairs are then used to form a suitable state space for a dynamic programming procedure that builds up a tree decomposition of width at most $k$ in a bottom up fashion.

Arnborg's et al. \cite{arnborg1987complexity} algorithm was the first dynamic programming algorithm to compute the tree-width of a graph $G$ in polynomial time, given that $k$ is a constant. So, it becomes apparent that such an algorithm is a good starting point to devise an efficient algorithm also for tree-clique width computation. 

\subsection{Algorithm}

As we have already mentioned our algorithm for tree-clique width exploits the same idea from \cite{arnborg1987complexity} but in this case we consider the set of vertices that can be covered by $k$ cliques. Therefore, to do so we have to calculate those sets along with the covering of every set of vertices in $G$ which might not be a separator. The later type of sets will be useful when computing the actual tree decomposition. This type of calculation is done by the \textsc{calculate-cliques} algorithm, which is a preprocessing subroutine. Knowing the covering of every possible subset of vertices in $G$ we can then calculate the actual tree decomposition as Algorithm 1 suggests.

\begin{algorithm}[H] \label{Dynamic-prog-tree-clique-width}

\SetKwData{SetVZero}{MinV0}
\SetKwData{SetVOne}{MinV1}
\SetKwData{SetVTwo}{MinV2}
  \Input{A graph $G = (V,E)$ with $n$ vertices and a parameter $k$}
  \Output{Yes or No}
  \DataStructure{Family of vertex sets that can be covered by $k$ cliques which are separators of $G$. For each such set $s \in S$, there is a set of $l$ connected components of $G$ into which $G$ is separated by removal of $s$. Denoting  $s$ by $C_i$, we denote by $C_i^j$, $1\leq j \leq l$ the subgraphs of $G$ each induced by $s$ and the vertices of the corresponding connected component, with the addition of edges required to make the subgraph induced by $s$ complete. Each such $C_i^j$ has an answer Yes or No (whether it is embeddable in a $k$-tree-clique decomposition or not) determined during the computation.}
  
  \tcc{\textsc{Calculate-Cliques} subroutine computes the superset $S$ which contains all the sets $s$ that can be covered by $k$ cliques in a bottom-up fashion}
  $S=$\textsc{Calculate-Cliques} 
 
  \For{ each set $s \in S$}{
    \If{ $s$ is a separator of $G$}
     {
     Insert $C_i=s$ and the corresponding graphs $C_i^j$ into the data structure
     }
   }
   \textbf{Sort} all graphs $C_i^j$ by increasing size.\\
   \tcc{Tree-clique width calculation}
  \For{each graph $C_i^j$ in increasing order of size $h$}{
        
        \For{$v \in C_i^j$}{
            Examine all $k$-clique separators $C_m$ contained in $C_i \cup \{v\}$\\
            Consider all $C_m^l$ in $(C_i^j-C_i)\cup C_m$ which have tree-clique \\ width $k$
            
            \If{their union, over all l's and all m's contains $C_i^j-C_i$}{
            Set the answer for $C_i^j$ to YES and \textbf{exit-do}
            }
        }

     \If{no answer was set for $C_i^j$}{
        Set the answer for $C_i^j$ to NO}
    \If {$G$ has a separator $C_m$ such that all $C_m^l$ graphs have answer YES}{
        $G$ has a tree-clique decomposition of width $k$\\ 
        \Return YES \\ }
    \If{each separator $C_m$ of $G$ has a $C_m^l$ with answer NO}{
        $G$ has not a tree-clique width decomposition of width $k$. \\
        \Return NO
    }
 }    
  \caption{Dynamic programming for tree-clique width}
\end{algorithm}

%% PAGEBREAK!******
\pagebreak

The \textsc{calculate cliques} algorithm is used to calculate the minimum clique covering of a given graph $G$. In order to do so we exploit the fact that the clique cover of a graph $G$ may be seen as a graph coloring of the complement graph of $G$. Therefore, we use Lawler's coloring algorithm \cite{lawler1976note} which finds the chromatic number of each subset of $G'$. The crux of the algorithm as Lawler noticed is that each one of the color classes can be assumed to be a maximal independent set. By using the Moon and Moser bound \cite{moon1965cliques}, we know that there are $O(3^{n/3})$ maximal independent sets in a graph $G$ and we can calculate them in $O(3^{n/3})$ time.

\begin{algorithm}[H]
\SetKwData{SetVZero}{MinV0}

  \Input{A graph $G$ of size $n$ and a parameter $k$}
  \Output{A collection of sets of vertices $\Pi$ that can be covered by $k$ cliques in $G$}
  \Data{X[S] contains the minimum chromatic number of the induced subset $S$ of $G'$. In addition in $X[S]$ we store the actual coloring of the vertices of $G'[S]$, which partition $G$ into $k$ cliques. }
  
  $G'$ = Calculate the complement graph of $G$ 
  
  \For{$S=0$ to $2^n-1$}{
    \For{all $I \in $ MISs of $G'[S]$}{
            $X[S]= \min\{X[S],X[S \setminus I]\}$
            } 
        }
  
  $\Pi$ = $X[ \textrm{All the sets of vertices that admit a $k$ coloring}]$
  \\ 
  \Return $\Pi$

  \caption{\textsc{Calculate-cliques }} \label{Calculate-clique-slow}
\end{algorithm}

For an optimal tree decomposition of tree-clique width at most $k$, we assume the following connectivity condition.  For every node $x$, the component of $x$, the set of vertices appearing in the bags of $x$ and its descendants but not in the parent of $x$, induces a connected subgraph of $G$. If that is not the case we can refine-sanitize the tree decomposition so that the width does not change and the tree decomposition induces a connected subgraph of $G$. Next we give the definition of a sane tree decomposition.

\begin{defn} \label{Sane tree decomposition}
A tree decomposition $t$ of a graph $G$ is called sane if the following conditions are satisfied for every node $x$:

\begin{enumerate}
    \item the margin of $x$ is nonempty
    \item the subgraphs induced in $G$ by the cone of $x$ and by the component of $x$ are connected 
    \item every vertex of the adhesion of $x$ has a neighbor in the component of $x$.
\end{enumerate}

\end{defn}

\begin{lemma} \label{sanitization}
Suppose $t$ is a tree decomposition of a graph $G$. Then
there exists a sane tree decomposition $s$ of $G$ where every bag in $s$ is a subset of some bag in $t$. In particular, if $tw(G) \leq k$, then $G$
admits a sane tree decomposition of width at most $k$.
\end{lemma}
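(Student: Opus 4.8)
The plan is to start from an arbitrary tree decomposition $t$ of $G$ (we may assume $G$ is connected and nonempty; a disconnected graph is handled component by component, which is all the algorithm actually uses) and to repeatedly apply three local ``sanitizing'' moves, each of which keeps the decomposition valid and only \emph{shrinks} bags. Consequently every bag produced is a subset of some bag of $t$, the width never increases, and once no move applies all three requirements of a sane tree decomposition hold; taking $t$ to be an optimal tree-width decomposition then yields the ``in particular'' clause.

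Root $t$ arbitrarily. The three moves are: \textbf{(O1)} while some non-root $x$ has $X_x\subseteq X_{\mathrm{parent}(x)}$, delete $x$ and re-attach its children to its parent (and if the root has an empty bag, delete it and promote its necessarily unique child); this makes every margin nonempty. \textbf{(O2)} if a node $x$ with parent $p$, adhesion $S=X_x\cap X_p$, has $G[\mathrm{comp}(x)]$ disconnected with components $C_1,\dots,C_r$, replace the subtree $T_x$ by $r$ disjoint copies of it in which every bag $X_s$ is replaced by $X_s\cap(C_i\cup S)$, and hang all copies under $p$; a routine check shows this is again a valid decomposition and that the root of copy $i$ keeps adhesion $S$ but now has component exactly $C_i$. \textbf{(O3)} if a node $x$ with adhesion $S$ has a vertex $v\in S$ with no neighbour in $\mathrm{comp}(x)$, delete $v$ from every bag of $T_x$; every edge at $v$ with its other end in $\mathrm{cone}(x)$ in fact has both ends in $S\subseteq X_p$ and so stays covered, and subtree-connectivity of the nodes containing $v$ shows validity is preserved. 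Each of the three moves clearly only shrinks bags.

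I would then schedule and analyse them as follows. First run (O1) to exhaustion. Then run (O2) \emph{level by level, from the root down}, re-running (O1) after each split; the point is that nodes at a common level always have pairwise disjoint nonempty components (from subtree-connectivity together with nonempty margins), so a level has at most $n$ nodes, and the components strictly decrease along every root-to-leaf path (a node's margin lies in its own component but in no child's component), so the tree has depth at most $n$ --- hence this phase terminates, leaving (O1) satisfied and $G[\mathrm{comp}(x)]$ connected for all $x$. Finally run (O3), re-running (O1) after each trim; each trim strictly decreases $\sum_s|X_s|$, so it terminates. The one nontrivial interaction to check is that (O3) cannot destroy the connectivity established by (O2): if $v\in S$ is removed from $T_x$ then, since the set of nodes whose bag contains $v$ is a subtree meeting $X_x$, one has $v\notin\mathrm{comp}(x')$ for every $x'\in T_x$ (either $v$ has left the cone of $x'$, or $v$ lies in the adhesion of $x'$), while for $x'\notin T_x$ the cone of $x'$ is unchanged because $v$ survives at $p$; hence no $\mathrm{comp}(x')$ changes. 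Likewise (O1) leaves every adhesion and every component untouched. When all three are exhausted, conditions 1 and 3 hold, $G[\mathrm{comp}(x)]$ is connected, and $G[\mathrm{cone}(x)]$ is then connected for free, since $\mathrm{cone}(x)=S\cup\mathrm{comp}(x)$ and by condition 3 each vertex of $S$ attaches to the connected set $\mathrm{comp}(x)$.

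The main obstacle I anticipate is precisely this scheduling/termination argument: (O2) \emph{increases} the number of nodes, so termination is not obvious, and one must also ensure that a later move does not reopen a violation a previous move closed. Both difficulties are resolved by the two observations above --- the strict ``depth $\le n$, at most $n$ nodes per level'' bounds that let (O2) be carried out finitely level by level, and the fact that the vertices removed in (O3) never belong to the component of any node, so (O3) is orthogonal to the invariant produced by (O2). The remaining verifications --- that (O1)--(O3) each preserve the tree-decomposition axioms, and the elementary identities about cones, components and adhesions under the relabelling of bags --- are routine, and I would relegate them to short case checks.
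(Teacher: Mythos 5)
The paper does not prove this lemma itself but simply defers to the cited reference (Boja\'nczyk--Pilipczuk, Lemma 2.8); your argument is a correct, self-contained reconstruction of exactly that standard sanitization procedure (prune bags contained in their parent, split each subtree along the connected components of the node's component, trim adhesion vertices with no neighbour in the component), and your termination bound for the splitting phase and the non-interference checks between the three moves are the right ones. The only caveat --- which you already flag --- is that condition 2 applied to the root forces $G$ to be connected, so the lemma as stated implicitly assumes connectivity (or must be read component by component).
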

\begin{proof}
We omit the proof which can be found in \cite{bojanczyk2016definability}, Lemma 2.8.
\end{proof}

The correctness of Algorithm \ref{Dynamic-prog-tree-clique-width} is based on the fact that a tree decomposition of tree-clique width $k$, is $k$-decomposable. Informally, a graph $G$ is $k$-decomposable if it can be disconnected by removal of at most $k$ cliques(separator), so that each of the resulting connected components augmented by the connected separator has tree-clique width at most $k$.

\begin{defn}
A graph $G$ is $k$-decomposable if either 1 or 2 holds:
\begin{enumerate}
    \item $G$ has at most $k$ cliques
    \item $G$ has a separator $S$, $ecc(S) \leq k$, such that the components of $G(V-S)$ are $S_1,...,S_n$ and all graphs $G(S_i \cup S) \cup c(S),\ i=1,...,n $ are $k$-decomposable, where $c(S)$ is the induced complete graph $G[S]$. 
\end{enumerate}
\end{defn}

The following lemma reflects that a graph $G$ of tree-clique width at most $k$, is $k$-decomposable.

\begin{figure}[H]
    \centering
    \includegraphics[width = 10cm]{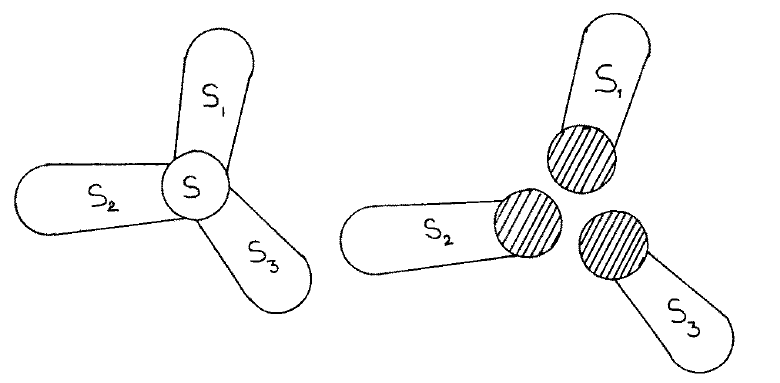}
    \caption{A decomposable graph with its component graphs as it was presented by Arnborg \cite{arnborg1985efficient}.}
    \label{}
\end{figure}

\begin{lemma}
A given graph $G$ of size $n>k$ has a tree-clique width of size at most $k$, if and only if there exists a $k$-clique separator $C_i$ such that all subgraphs $C_i^j$ (as defined in the algorithm) have a tree-clique width of size at most $k$.
\end{lemma}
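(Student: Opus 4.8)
The plan is to prove both directions of the equivalence by relating tree-clique width decompositions to the recursive $k$-decomposability structure, mirroring the classical argument of Arnborg et al.\ for tree-width but substituting "covered by at most $k$ cliques'' for "at most $k+1$ vertices''. Throughout I will use the sanitization principle of \cref{sanitization} (applied to the underlying tree decomposition, since adding clique covers to a tree decomposition does not affect its validity or the "$\subseteq$ some bag'' property) to assume without loss of generality that the optimal augmented tree decomposition is sane; in particular every node has nonempty margin and every component induces a connected subgraph of $G$.

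For the forward direction, suppose $\mathcal{T}=(T,\{X_t\},\{C_t\})$ is a sane augmented tree decomposition of $G$ of width at most $k$, and $|V(G)|>k$. First I would observe $G$ cannot itself be coverable by $\le k$ cliques unless... actually it may be, but then the statement is about a separator, so I instead argue: root $T$ arbitrarily; pick a bag $X_r$ at the root whose adhesion is empty, so $X_r$ is covered by $C_r$, $|C_r|\le k$. If $X_r=V(G)$ then $V(G)$ is coverable by $k$ cliques, contradicting nothing yet — in that case I pick any internal edge of $T$ (which exists since $|V(G)|>k$ forces $T$ to be nontrivial under saneness, as each margin is nonempty) to get a genuine separator. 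More cleanly: take any edge $t t'$ of $T$ with $t'$ the child; let $S=X_t\cap X_{t'}$ be the adhesion. Then $S$ is covered by $C_t\cap$(cliques containing those vertices), so $ecc(S)\le |C_t|\le k$ — here I use that a clique cover of a superset restricts to a clique cover of $S$ of no larger size, giving $ecc(S)\le k$. Standard tree-decomposition arguments show $S$ separates the vertices of the component of $t'$ from the rest; then each $C_i^j$ (component plus completed separator) inherits, by restriction of $\mathcal{T}$ and by the clique-containment \cref{clique-containment-lemma} applied to the completed $G[S]$, an augmented tree decomposition of width $\le k$, hence by induction on size each $C_i^j$ has tree-clique width $\le k$. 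Setting $C_i := S$ gives the desired separator.

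For the reverse direction, suppose there is a $k$-clique separator $C_i$ (so $ecc(C_i)\le k$) such that every $C_i^j$ has tree-clique width $\le k$. By definition each $C_i^j$ admits an augmented tree decomposition $\mathcal{T}_j$ of width $\le k$; since $C_i^j$ contains $G[C_i]$ as a clique, by the clique-containment corollary there is a bag $b_j$ in $\mathcal{T}_j$ with $C_i\subseteq X_{b_j}$. I would then build $\mathcal{T}$ for $G$ by creating a fresh root bag $X_r=C_i$ with $C_r$ an edge clique cover of $G[C_i]$ of size $\le k$, and attaching each $\mathcal{T}_j$ to $r$ by the edge $r b_j$. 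One verifies the three tree-decomposition axioms: the union of bags is $V(G)$ since the $C_i^j$ cover $V(G)$; every edge of $G$ lies inside some $C_i^j$ (edges within a component, or incident to $C_i$, are present in the corresponding $G(S_i\cup S)\cup c(S)$; edges inside $C_i$ appear in the root and in every $b_j$); and connectivity of each $T_v$ holds because a vertex of $C_i$ appears in the root and in each subtree exactly at a bag adjacent to the root, while a vertex outside $C_i$ lives in a single $C_i^j$. The width of $\mathcal{T}$ is $\max(|C_r|,\max_j \text{width}(\mathcal{T}_j))\le k$, so $tcl(G)\le k$.

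The main obstacle I expect is the bookkeeping in the forward direction around the "completed separator'': in the recursive subproblem $G(S_i\cup S)\cup c(S)$ the separator $S$ is turned into a clique, so I must check that the restricted augmented tree decomposition still covers the new edges of $c(S)$ with at most $k$ cliques. This is fine because $S\subseteq X_t$ is already covered by (a sub-multiset of) $C_t$, and that cover already covers all of $c(S)$'s edges — but making this precise, and simultaneously arguing that $S$ genuinely separates the component of $t'$ from everything else in $G(S_i\cup S)\cup c(S)$, is the delicate point; it rests on the fact that in a sane decomposition the component of $t'$ induces a connected subgraph and its neighborhood is exactly $S$, which lets the induction hypothesis apply. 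The base case $|V(G)|\le k$ (or, more precisely, $G$ coverable by $\le k$ cliques) is immediate from clause 1 of $k$-decomposability.
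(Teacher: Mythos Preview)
Your proof follows essentially the same approach as the paper's: extract a separator from an optimal (sane) decomposition and restrict for the forward direction, and glue the sub-decompositions at a bag containing the separator for the reverse. The paper's version is considerably terser (a one-line contradiction for one direction and ``attach the $C_i^j$'s, then sanitize via Lemma~\ref{sanitization}'' for the other), so your added detail---and the ``delicate point'' you flag about the completed separator---only fill in bookkeeping that the paper leaves implicit.
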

\begin{proof}
Assume that a given graph $G$ has a tree decomposition $\mathcal{T}$ of tree-clique width $k$ while it does not exist a separator $C_i$ with clique covering less or equal than $k$. This leads us to a contradiction since from the existence of a tree decomposition with $tcl(G)=k$ we know that every bag in $\mathcal{T}$ must have width at most $k$. Subsequently, all the subgraphs $C_i^j$ must also have $tcl(G) \leq k$.

Let us now assume that there exists a $k-$clique separator $C_i$ such all subgraphs $C_i^j$ have a $tcl(C_i^j) \leq k$ and the graph $G$ has a tree-clique width of size at least $k$. However, from the assumption we have made we know that we can construct a tree-decomposition $\mathcal{T}$ of size at most $k$ by simply taking $C_i$ as a base and by attaching on it the $C_i^j$s. If the resulting graph is not sane we can always sanitize it using Lemma \ref{sanitization} and obtain an optimum decomposition.  This will lead in a tree decomposition of tree-clique width at most $k$ contradicting the initial assumption. 
\end{proof}

\begin{theorem} \label{calculate-cliques}
The \textsc{calculate-cliques} algorithm can be implemented to run in $O(2.4423^n)$ time.

\begin{proof}
The \textsc{calculate-cliques} algorithm goes through all the sets of $S$ in increasing cardinality. Then it generates all the maximal independent sets $I$ of $G[S]$. Therefore, we can use the Moon and Moser bound \cite{moon1965cliques} which states that the number of maximal independent sets of a graph is at most $O(3^{n/3})$. We do also know that we can find all these maximal independent sets within a polynomial factor of this bound \cite{tsukiyama1977new}. Therefore we get,

\[O(\sum_{S \subseteq V}|I(G[S])| )= O(\sum_{i=0}^n \binom{n}{i} 3^{i/3}) = O((1+3^{1/3})^n) = O(2.4423) \]

\end{proof}

\end{theorem}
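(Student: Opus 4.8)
The plan is to charge the entire running time to the innermost operation of \textsc{Calculate-cliques} --- the enumeration of maximal independent sets of the induced subgraphs $G'[S]$ --- and then to sum the resulting per-subset bound over all $S \subseteq V$ via the binomial theorem. First I would observe that everything outside the double loop over subsets (computing the complement $G'$, initialising the table $X$, and reading off the family $\Pi$ at the end) costs only polynomial time, so it suffices to bound the work done inside the loop.

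Next, fix a subset $S$ with $|S| = i$. Lawler's recurrence requires iterating over all maximal independent sets of $G'[S]$; by the Moon--Moser theorem \cite{moon1965cliques} the number of such sets is at most $3^{i/3}$, and by the algorithm of Tsukiyama et al.\ \cite{tsukiyama1977new} they can be listed with polynomial delay. Hence the total time spent on $S$ --- including the $O(1)$ table look-ups and the $\min$-update (which also records the witnessing colouring) performed per maximal independent set --- is $3^{i/3}$ up to a polynomial factor in $n$.

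Then I would sum over all subsets, grouping them by cardinality:
\[
\sum_{S \subseteq V} |I(G'[S])| \;\le\; \sum_{i=0}^{n} \binom{n}{i}\, 3^{i/3} \;=\; \bigl(1 + 3^{1/3}\bigr)^{n},
\]
the last equality being the binomial theorem. Since $1 + 3^{1/3} = 2.4422\ldots < 2.4423$, and since the $O^{*}$ convention lets us absorb the polynomial overhead (rounding the base up from $2.4422\ldots$ to $2.4423$ dominates any fixed polynomial factor for large $n$), this yields the claimed $O(2.4423^{n})$ running time.

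The main obstacle I anticipate is not the exponential arithmetic but the bookkeeping claim in the first step: one must check that maintaining the stored colourings (not merely the chromatic numbers) in $X[S]$, and extracting from $X$ all vertex sets that admit a $k$-colouring, genuinely stays within a polynomial overhead per subset, and that the polynomial-delay enumeration guarantee is invoked for the correct graph, namely the complement restricted to $S$ (which can be obtained cheaply from $G'$). Once that is pinned down, the time bound is an immediate consequence of Moon--Moser together with the binomial identity above.
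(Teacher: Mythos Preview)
Your proposal is correct and follows essentially the same approach as the paper: bound the work per subset $S$ by the Moon--Moser count $3^{|S|/3}$ of maximal independent sets (listable with polynomial delay via Tsukiyama et al.), then sum $\sum_{i}\binom{n}{i}3^{i/3}=(1+3^{1/3})^n$ by the binomial theorem. The only differences are cosmetic: you are more explicit about the polynomial-overhead bookkeeping and about the enumeration taking place in $G'[S]$ rather than $G[S]$.
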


\begin{theorem}
The dynamic programming for tree-clique width algorithm can be implemented to run in time $O(2.4423^n)$.

\begin{proof}
Line 1, the \textsc{calculate-cliques} algorithm, as we have already proved in \cref{calculate-cliques} takes $O(2.4423^n)$ time. Then lines 2 up to 4 take at most $O(2^n)$ time. Sorting all the graphs $C_i^j$ can be done in $O(2^n)$ time using the bucket sort algorithm. The for loop in line 6 examines at most $O(2^n)$ subgraphs and each exit condition for the algorithm can be checked in constant time per subgraph. In Line 7 there are less than $n$ vertices in a subgraph $C_i^j$ and the access to the separator $C_m$ can be made in constant time. Checking the union of all $C_m^l$ is of order of the size of $C_i^j$. Therefore, the overall running time of the algorithm is $O(2.4423^n)$.
\end{proof}

\end{theorem}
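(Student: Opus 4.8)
The plan is to charge the running time of Algorithm~\ref{Dynamic-prog-tree-clique-width} phase by phase and then observe that the preprocessing call dominates everything else up to polynomial factors. First I would invoke \cref{calculate-cliques}: the call to \textsc{Calculate-Cliques} on line~1 produces the family $S$ of all vertex subsets of $V$ that can be covered by $k$ cliques, together with their coverings, in time $O(2.4423^n)$; in particular $|S|\le 2^n$. Next I would bound the loop on lines 2--4, which iterates over at most $2^n$ sets $s\in S$; for each one, testing whether $s$ is a separator of $G$, computing the connected components of $G[V\setminus s]$, and forming the augmented subgraphs $C_i^j$ (adding the edges that make $G[s]$ complete) all cost only $\mathrm{poly}(n)$, so this phase runs in $O^*(2^n)$. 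The sort on line~5 touches at most $2^n$ subgraphs, each of size at most $n$, so bucket sort by size finishes in $O^*(2^n)$ as well.

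For the main dynamic-programming loop I would argue as follows. Each subgraph $C_i^j$ is determined by a separator $C_i\in S$ and one of its connected components, so there are at most $O^*(2^n)$ of them, and the loop visits each exactly once in nondecreasing order of size. For a fixed $C_i^j$, the inner work ranges over the at most $n$ vertices $v\in C_i^j$; for each $v$ it inspects the $k$-clique separators $C_m$ contained in $C_i\cup\{v\}$ — there are only polynomially many candidates to check once the clique-cover table from \textsc{Calculate-Cliques} is available, and the relevant supersets of $C_i$ inside $C_i\cup\{v\}$ number only $O(1)$ — collects those components $C_m^l$ in $(C_i^j-C_i)\cup C_m$ already marked YES, and tests whether their union contains $C_i^j-C_i$, which costs $O(|C_i^j|)=O(n)$. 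Hence the per-subgraph cost is polynomial, the whole loop runs in $O^*(2^n)$, and the global YES/NO termination tests contribute only a further polynomial factor over the $O^*(2^n)$ subgraphs.

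Summing the four phases gives $O(2.4423^n)+O^*(2^n)=O(2.4423^n)$, since $2<2.4423$ lets the larger exponential absorb all polynomial factors, consistent with the $O^*$ convention fixed in Section~2. The step I expect to require the most care is not any individual estimate but making the accounting of the middle phases airtight: one must be explicit that the precomputed coverings let us decide "$\,C_i^j$ has tree-clique width at most $k\,$" and validate each candidate separator $C_m$ in polynomial time, so that no hidden exponential factor slips into the inner loops. Once that is pinned down the arithmetic is immediate.
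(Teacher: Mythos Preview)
Your proposal is correct and takes essentially the same approach as the paper: a phase-by-phase accounting in which the \textsc{Calculate-Cliques} preprocessing (bounded by \cref{calculate-cliques}) dominates, while the separator loop, the bucket sort, and the main dynamic-programming loop are each held to $O^*(2^n)$ and then absorbed into $O(2.4423^n)$. If anything you are more explicit than the paper about the $O^*$ convention and about why each remaining phase costs only $O^*(2^n)$; the one place to tighten is your aside that ``the relevant supersets of $C_i$ inside $C_i\cup\{v\}$ number only $O(1)$''---that observation is true but not quite the right justification for bounding the number of candidate separators $C_m$, and the paper simply asserts constant-time access to each $C_m$ without elaborating either.
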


It is worth noting that Algorithm \ref{Dynamic-prog-tree-clique-width} recognizes whether a graph $G$ admits a tree decomposition of tree-clique width at most $k$. It is apparent that we can also build the actual tree decomposition in the same time bound. In addition, instead of just recognising for a given $k$, it is easy to see that Algorithm \ref{Dynamic-prog-tree-clique-width} can be modified to compute a tree-clique width decomposition in the same time bound.

\subsection{A Faster Approach}

It is apparent that the calculation of the chromatic number of each subset of the complement graph $G'$ constitutes the absolute bottleneck of the dynamic programming algorithm for tree-clique width. Therefore, in order to get a better running time we should optimize the \textsc{Calculate-cliques} algorithm. It turns out that there is a way to find the chromatic number of each induced subset $S$ of a graph $G$ faster using Byskov's approach \cite{byskov2004enumerating}. 

Algorithm \ref{Faster-calculate-cliques} is a refined version of Algorithm \ref{Calculate-clique-slow}. In particular, the first part of the algorithm, namely lines 2 up to 7, checks 1 and 2 colourability in polynomial time and 3-colourability of all subgraphs of $G$. The 3-colourability of each subset can be calculated by using Eppstein's improved algorithm for 3-coloring \cite{eppstein2001improved} in $O(1.3289^n)$ time. 

Now in the second part, lines 8 up to 11, the algorithm finds all maximal independent sets $I$ of the graph and for each checks 3-colourability of all subgraphs $S$ of $G'[V' \setminus I]$. If they are 3-colourable, then $G[S\cup I]$ is 4-colourable. This will find all maximal 4-colourable subgraphs in time $O(2.4023^n)$. Then in the last part the algorithm needs to consider the maximal independent sets of size at most $|S| / X[S]$ starting from $|S| / X[S]$. Using Byskov's improved bound  \cite{byskov2004enumerating} on the number of maximal independent sets, we get that the third part of the algorithm has a running time $O(2.3814^n)$.

\begin{algorithm}[H] 
\SetKwData{SetVZero}{MinV0}

  \Input{A graph $G$ of size $n$ and a parameter $k$}
  \Output{A collection of sets of vertices $S$ that can be covered by $k$-cliques}
  \Data{X[S] contains the minimum chromatic number of the induced subset $S$ of $G'$. In addition in $X[S]$ we store all the $k$ colourable subsets of $X[S]$ which partition $G$ into $k$-cliques. }
  
  $G'$ = Calculate the complement graph of $G$ 
  
  Let $X$ be an array indexed from $0$ to $2^n-1$ \\
  \For{$S=0$ to $2^n-1$}{
    \uIf{$ \chi(S) \leq 3$}{
        $X[S]= \chi(S)$}
    \Else{
        $X[S]$= $\infty$
    }
  }
  \For{all MISs $I$ in $G$ }{
    \For{all subsets $S$ of $V' \setminus I$}{
        \If{$X[S]=3$}{
            $X[S \cup I] = \min(X[S \cup I],4)$}
        }
  
  }
  \For{ $S=0$ to $2^n-1$}{
    \If{$4 \leq X[S] < \infty$}{
        \For{all MISs of $G'[V' \setminus S]$ of size at most $|S| / X[S] $}{
            $X[S \cup I]= \min(X[S \cup I],X[S]+1) $}
        }
   }
  
   \Return $X$    
  \caption{\textsc{Calculate-cliques }} \label{Faster-calculate-cliques}
\end{algorithm}

\begin{theorem}
Algorithm \ref{Faster-calculate-cliques} can be implemented to run in time $O(2.4023^n)$ and in $O(2^n)$ space. 

\begin{proof}

Calculating the complement graph $G'$ of a graph $G$ can be done in $O(nm)$ time. Lines 3 up to 7 can be calculated in 

\[\sum_{S\subseteq V} O(1.3289^{|S|})= O(\sum_{i=0}^n \binom{n}{i} 1.3289^i ) = O(2.3289^n) \]

In essence the algorithm runs Eppstein's algorithm for 3-coloring \cite{eppstein2001improved} on all subgraphs of $G'$. 

Then the lines from 8 up to 11, using the bound $\lfloor \frac{n}{k} \rfloor^{(\lfloor \frac{n}{k} \rfloor +1)k-n} (\lfloor \frac{n}{k} \rfloor +1)^{n-\lfloor \frac{n}{k} \rfloor k }$ on the number of maximal independent sets of size $k$ from \cite{byskov2004enumerating} and the fact that we can find them within a polynomial factor of this bound, we get that the running time for finding all maximal 4-colourable subgraphs is $O(2.4023^n)$. Then the algorithm in the third part, lines 12 up to 15, follows Eppstein's improved algorithm for graph coloring \cite{eppstein2002small}. This time though, it needs only to consider maximal independent sets of size at most $|S|/4$ and by using Byskov's improved bound on the number of these \cite{byskov2004enumerating} we get a running time of $O(2.3814^n)$. Therefore, Algorithm \ref{Faster-calculate-cliques} has running time of $O(2.4023^n)$ and uses $O(2^n)$ space.

\end{proof}

\end{theorem}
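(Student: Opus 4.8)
The plan is to bound the running time of each of the three phases of Algorithm~\ref{Faster-calculate-cliques} separately, take the maximum, and observe that the only substantial data structure is the table $X$ indexed over the $2^n$ subsets of $V$ (each entry holding a chromatic number together with a polynomial-size witness coloring), which gives the $O(2^n)$ space bound immediately.

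For the first phase (lines 2--7) I would iterate over all $2^n$ induced subgraphs $G'[S]$, decide $1$- and $2$-colorability in polynomial time, and decide $3$-colorability with Eppstein's algorithm~\cite{eppstein2001improved} in $O(1.3289^{|S|})$ time. Grouping the subsets by cardinality and applying the binomial theorem,
\[
\sum_{S\subseteq V} O(1.3289^{|S|}) = O\!\left(\sum_{i=0}^{n}\binom{n}{i}1.3289^{i}\right) = O((1+1.3289)^{n}) = O(2.3289^{n}),
\]
with the polynomial overhead absorbed by slightly rounding up the base, as permitted by the $O^*$ convention fixed in Section~2.

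For the second phase (lines 8--11) each maximal independent set $I$ of $G'$ triggers $2^{n-|I|}$ constant-time updates, so the cost is $\sum_I 2^{n-|I|}$. Splitting by $|I|=k$ and inserting Byskov's bound~\cite{byskov2004enumerating} on the number $M(n,k)$ of maximal independent sets of size exactly $k$ (which can be enumerated within a polynomial factor of that bound), the total becomes $\sum_k M(n,k)\,2^{n-k}$. I would then maximize the exponential rate over the ratio $k/n$: the extremum sits near $k=n/5$, where the rate is $\bigl(2\cdot(5/2)^{1/5}\bigr)^n$, yielding the constant $2.4023$, and one verifies numerically that no other ratio does better. The third phase (lines 12--15) is handled the same way but following Eppstein's coloring recursion~\cite{eppstein2002small}: for each $S$ with $4\le X[S]<\infty$ one only enumerates maximal independent sets of $G'[V'\setminus S]$ of size at most $|S|/X[S]\le|S|/4$, so writing $|S|=\alpha n$ the cost is governed by Byskov's bound restricted to independent sets of size $\le\alpha n/4$ inside a $(1-\alpha)n$-vertex graph, summed over $\alpha$, which optimizes to $O(2.3814^n)$.

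Combining, the running time is $\max(2.3289^n,\,2.4023^n,\,2.3814^n)=O(2.4023^n)$ and the space is $O(2^n)$. I expect the main obstacle to be the two extremal optimizations --- evaluating $\max_k M(n,k)\,2^{n-k}$ and its third-phase analogue using Byskov's non-uniform bound instead of the crude Moon--Moser estimate $3^{n/3}$, and confirming that the optimal ratios produce exactly the bases $2.4023$ and $2.3814$. A secondary point to pin down is that the ``within a polynomial factor'' enumeration guarantee for maximal independent sets of a prescribed size is genuinely available (e.g.\ via the appropriate refinement of~\cite{tsukiyama1977new}), and that absorbing all polynomial factors into the stated bases is legitimate.
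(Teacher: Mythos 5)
Your proposal is correct and follows the same three-phase decomposition as the paper's proof: Eppstein's 3-coloring over all subsets via the binomial theorem for phase one, Byskov's size-dependent bound on maximal independent sets for the 4-colorable-subgraph phase, and the restriction to independent sets of size at most $|S|/4$ in the final Eppstein-style recursion. You actually supply slightly more detail than the paper, which simply cites Byskov for the constants $2.4023$ and $2.3814$, whereas you carry out the extremal computation $\max_k M(n,k)\,2^{n-k}$ at $k=n/5$ giving $80^{n/5}=2.4023^n$ explicitly --- this matches Byskov's own derivation and is consistent with the stated result.
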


\section{Tree-clique width using PMCs}

In this section we exploit the notion of potential maximal cliques (PMCs) to calculate the tree-clique width of a graph $G$. In particular we translate an existing algorithm for calculating tree-width by Bouchitt\'e and Todinca \cite{bouchitte2001treewidth,fomin2008exact}, to an algorithm that calculates tree-clique width. Definition \ref{PMC-def} gives the formal definition of potential maximal cliques.

\begin{defn} \label{PMC-def}
A set of vertices $\Omega$ of a graph $G$ is called a PMC if there is a minimal triangulation $H$ of $G$ such that $\Omega$ is maximal clique of $H$.
\end{defn}

\subsection{Why PMCs?}

The main intuition that makes PMCs a useful tool for our purpose is the following. Every PMC has at most $n$ minimal separators or it is almost itself a minimal separator. Therefore, very roughly, by examining all the PMCs we can find out the minimal separators of $G$, which we know that they are contained in every PMC $\Omega$ and build a tree decomposition of minimum width.

As we can notice from \Cref{PMC-def}, the framework of PMCs is based on triangulations. However, the following folklore theorem suggests it can be also used to compute tree-width, since tree-width and triangulation is two different ways to model and compute the same thing.

\begin{theorem} \label{tree-width-triangulation-connection}
For any graph $G$, $tw(G) \leq k$ if and only if there is a triangulation $H$ of $G$ such that $\omega(H) \leq k+1$.
\end{theorem}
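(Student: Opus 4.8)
The plan is to prove both implications by moving back and forth between tree decompositions of $G$ and chordal supergraphs of $G$, using the standard correspondence between chordal graphs and clique trees.

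For the forward direction, I would start from a tree decomposition $\mathcal{T}=(T,\{X_t\}_{t\in V(T)})$ of $G$ with $\max_t |X_t|\le k+1$ and define $H$ to be the graph on $V(G)$ in which $u$ and $v$ are adjacent exactly when some bag $X_t$ contains both of them. Since every edge of $G$ lies in a bag, $H$ is a supergraph of $G$, so it is a candidate triangulation. The same tree $T$ with the same bags is still a tree decomposition of $H$: the union and subtree-connectivity conditions are untouched, and by construction every edge of $H$ is realized inside one bag. Now each bag $X_t$ is a clique of $H$, so $\mathcal{T}$ is a clique tree of $H$, and I would invoke the standard fact that a graph admitting a tree decomposition all of whose bags are cliques is chordal (equivalently, has a perfect elimination ordering); this certifies that $H$ is a triangulation of $G$. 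Finally, applying the clique containment lemma (Lemma \ref{clique-containment-lemma}) to $H$ together with $\mathcal{T}$, every clique of $H$ sits inside some bag, so $\omega(H)\le \max_t |X_t|\le k+1$.

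For the reverse direction, suppose $H$ is a triangulation of $G$ with $\omega(H)\le k+1$. Being chordal, $H$ admits a clique tree: a tree $T$ whose nodes are the maximal cliques of $H$, such that for every vertex $v$ the nodes whose cliques contain $v$ form a connected subtree. This is a tree decomposition of $H$ of width $\omega(H)-1\le k$. Because $H\supseteq G$, every edge of $G$ is an edge of $H$ and hence lies in some maximal clique of $H$, so the same pair $(T,\{X_t\})$ is also a tree decomposition of $G$; therefore $tw(G)\le k$.

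The main obstacle is the single genuinely structural ingredient: the equivalence ``$H$ is chordal $\iff$ $H$ has a clique tree,'' which is used in the forward direction to conclude that the filled-in graph $H$ is chordal and in the reverse direction to extract a bounded-width decomposition from $H$. Everything else is routine verification of the tree-decomposition axioms, which transfer cleanly between a graph and its subgraphs and supergraphs. I would either cite this equivalence as folklore (via the classical characterizations of chordal graphs by Buneman and Gavril) or, for the one direction actually needed, prove it directly by locating a simplicial vertex of $H$ inside a leaf bag of $\mathcal{T}$ (after contracting any edge of $T$ whose bags are nested) and inducting, rather than reproving the full equivalence.
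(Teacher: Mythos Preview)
Your proof is correct and follows the standard route via the Buneman--Gavril equivalence between chordal graphs and clique trees. Note, however, that the paper does not actually prove this theorem: it is stated as a folklore result with no proof given, so there is nothing to compare against beyond observing that your argument is the expected one.
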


\noindent In other words, computing the tree-width of $G$ means finding a (minimal) triangulation with the smallest maximum clique size. Intuitively, we can make a similar observation which will let us use the framework of PMCs for tree-clique width computation. Namely, computing the tree-clique width of a graph $G$ means finding a tree-width decomposition with minimum number of cliques inside a bag.  This simple straightforward observation allows us to use the PMCs for the calculation of tree-clique width. All we need to know is the clique covering of each PMC in $G$. 

% Why PMCs can be used to calculate tree-clique width?: 

More formally to justify why PMCs can be used to calculate the tree-clique width of a graph we can use the following arguments. In particular, a tree-clique width decomposition is also a valid tree-width decomposition. That means from Theorem \ref{tree-width-triangulation-connection} that there is also a valid minimal triangulation $H$ of $G$ such that $\omega(H) \leq tw(G)+1$. Now Theorem \ref{non-crossing_minimal_separators} gives us a useful relationship between the minimal separators of a graph and its minimal triangulations. Having such a strong relation between the minimal triangulations of a graphs and its minimal separators there is no doubt why we would like to use the PMC framework for tree-clique width computation.

\begin{theorem} \cite{parra1997characterizations} \label{non-crossing_minimal_separators}
The graph $H$ is a minimal triangulation of the graph $G$ if and only if there is a maximal set of pairwise non-crossing minimal separators $\{S_1,S_2,...,S_p\}$ of $G$ such that $H$ can be obtained from $G$ by completing each $S_i$, $i\in \{1,2,...,p\}$ into a clique.
\end{theorem}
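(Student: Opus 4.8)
The plan is to establish this as the Parra--Scheffler characterization of minimal triangulations, proving both implications with the help of two classical facts about chordal graphs used as black boxes: \textbf{(D)} Dirac's theorem, that a graph $H$ is chordal if and only if every minimal separator of $H$ induces a clique; and \textbf{(P)} that in a chordal graph any two minimal separators are non-crossing. I will use the usual terminology: two minimal separators $S$ and $T$ of a graph are \emph{non-crossing} (parallel) if $T$ is contained in the union of $S$ with a single connected component of the graph obtained by deleting $S$ (this relation turns out to be symmetric); and for a family $\mathcal{S}$ of minimal separators of $G$ I write $G_{\mathcal{S}}$ for the graph obtained from $G$ by turning each member of $\mathcal{S}$ into a clique, so $E(G_{\mathcal{S}}) = E(G) \cup \bigcup_{S \in \mathcal{S}} \binom{S}{2}$. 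The technical heart of the argument is a \emph{completion lemma}: if $\mathcal{S}$ is a pairwise non-crossing family of minimal separators of $G$, then the minimal separators of $G_{\mathcal{S}}$ are \emph{exactly} those minimal separators of $G$ that are non-crossing with every member of $\mathcal{S}$; in particular completing $\mathcal{S}$ creates no new minimal separators and destroys precisely the ones that cross some member of $\mathcal{S}$. I would prove this first for a single separator $S$ --- verifying by a direct separation argument that turning $S$ into a clique provides a "bypass" that kills exactly the separators crossing $S$ while leaving the non-crossing ones (and their mutual non-crossing relations) intact --- and then lift it to families by induction on $|\mathcal{S}|$, checking that completing $S_1$ does not introduce a crossing among $S_2,\dots,S_p$, so that the induction hypothesis applies in $G_{S_1}$.

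Given the completion lemma, the backward direction is short. Let $\mathcal{S}$ be a maximal pairwise non-crossing family of minimal separators of $G$ and set $H = G_{\mathcal{S}}$. By the completion lemma the minimal separators of $H$ are the minimal separators of $G$ that are non-crossing with all of $\mathcal{S}$; by maximality this family is exactly $\mathcal{S}$. Each member of $\mathcal{S}$ is a clique of $H$ by construction, so every minimal separator of $H$ is a clique and \textbf{(D)} yields that $H$ is chordal, i.e.\ a triangulation of $G$. For minimality I would show that deleting any fill edge destroys chordality: given a fill edge $uv \in E(H)\setminus E(G)$, pick $S_i \in \mathcal{S}$ with $u,v \in S_i$ and two full components $A,B$ of $G\setminus S_i$; since $u$ and $v$ each have a neighbour in $A$ and in $B$, choose shortest $u$--$v$ paths $P_A, P_B$ in $H - uv$ with interiors inside $A$ and inside $B$ respectively. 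Because every $S_j \in \mathcal{S}$ is non-crossing with $S_i$, it lies inside $S_i$ together with one component of $G\setminus S_i$, so completing the $S_j$'s never adds an edge between $A$ and $B$; combined with the fact that shortest paths are induced, this shows $P_A \cup P_B$ is a chordless cycle of length at least four in $H - uv$, so $H - uv$ is not chordal. Hence $H$ is a minimal triangulation.

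For the forward direction, let $H$ be a minimal triangulation of $G$ and let $\mathcal{S}_H$ be the set of all minimal separators of $H$. By \textbf{(P)} these are pairwise non-crossing in $H$, and one checks that this transfers to $G$ (deleting edges only refines the components of a separator). Using standard structural facts about minimal triangulations --- that every minimal separator of $H$ is already a minimal separator of $G$, and that every fill edge of $H$ is contained in some minimal separator of $H$ --- together with the fact that each $T \in \mathcal{S}_H$ is a clique of $H$, one obtains $E(H) = E(G) \cup \bigcup_{T \in \mathcal{S}_H}\binom{T}{2}$, i.e.\ $H = G_{\mathcal{S}_H}$. Maximality of $\mathcal{S}_H$ among pairwise non-crossing families of minimal separators of $G$ is then immediate from the completion lemma: any minimal separator of $G$ non-crossing with every member of $\mathcal{S}_H$ is a minimal separator of $G_{\mathcal{S}_H} = H$, hence already belongs to $\mathcal{S}_H$. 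This gives the equivalence, and in fact the bijective correspondence $\mathcal{S} \mapsto G_{\mathcal{S}}$.

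I expect the completion lemma to be the main obstacle --- specifically, proving the single-separator case cleanly (that completing $S$ removes exactly the separators crossing $S$ and creates none) and verifying that completing one separator does not disturb the non-crossing relations among the others so that the induction on $|\mathcal{S}|$ goes through. Once that lemma is in hand, both directions of the equivalence are essentially routine, modulo the well-known (and citable) structural facts that the minimal separators of a minimal triangulation are minimal separators of the original graph and that its fill edges are covered by those separators, and modulo Dirac's characterization of chordal graphs.
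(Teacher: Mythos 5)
The paper offers no proof of this statement: it is quoted verbatim from Parra and Scheffler \cite{parra1997characterizations} and used purely as an imported black box, so there is no internal argument to compare yours against. Judged on its own, your outline reproduces the standard proof of the Parra--Scheffler characterization. The ``completion lemma'' you isolate is exactly the key lemma of that line of work (completing a minimal separator $S$ into a clique preserves precisely the minimal separators parallel to $S$ and creates no new ones), and the remaining ingredients --- Dirac's characterization of chordality via clique separators, pairwise parallelism of minimal separators in chordal graphs, the full-component/chordless-cycle argument for minimality of the triangulation, and the facts that the minimal separators of a minimal triangulation of $G$ are minimal separators of $G$ and cover all fill edges --- are the right ones and are assembled in the right order.

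One step is glossed too quickly. In the forward direction you transfer pairwise parallelism of $\mathcal{S}_H$ from $H$ down to $G$ on the grounds that deleting edges ``only refines the components of a separator.'' Refinement alone does not suffice: if $T\subseteq S\cup D$ for a single component $D$ of $H\setminus S$, that component may split into several components of $G\setminus S$, and a priori $T$ could meet more than one of them, i.e.\ cross $S$ in $G$ while being parallel to it in $H$. Ruling this out needs an extra argument --- for instance, noting that $S$ is a clique of $H$, so $G_S\subseteq H$, invoking the single-separator completion lemma to see that a separator crossing $S$ in $G$ is no longer a minimal separator of $G_S$, and then using a sandwich-type lemma to conclude it cannot be a minimal separator of $H$ either; alternatively one can argue from the fact that $T$ is a clique of $H$ whose fill edges must lie inside minimal separators of $H$. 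This is all standard and citable, but as written your justification for that step is not correct, so flag it and patch it before treating the forward direction as routine.
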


Now Theorem \ref{PMCs-minimal-separators-relation} establish a relation between the PMCs and the minimal separators of a graph. In particular, by Theorem \ref{PMCs-minimal-separators-relation} we know for every potential maximal clique $\Omega$ of $G$, the sets $S_i(\Omega)$ are exactly the minimal separators of $G$ contained in $\Omega$.

\begin{theorem}\label{PMCs-minimal-separators-relation} \cite{bouchitte2001treewidth} 
Let $K \subseteq V$ be a set of vertices of the graph $G=(V,E)$. Let $\mathcal{C}=\{C_1(K),...,C_p(K)\}$ be the set of connected components of $G\setminus K$ and let $\mathcal{S}= \{S_1(K),S_2(K),...,S_p(K)\}$ where $S_i(K), i\in \{1,2,...,p\}$, is the set of those vertices of $K$ which are adjacent to at least one vertex of the component $C_i(K)$. Then $K$ is a potential maximal clique of $G$ if and only if:

\begin{enumerate}
    \item $G \setminus K $ has no full component associated to $K$, and
    \item the graph on the vertex set $K$ obtained from $G[K]$ by completing each $S_i \in \mathcal{S}$ into clique, is a complete graph.
\end{enumerate}
\end{theorem}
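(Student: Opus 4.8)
The plan is to prove the two implications by combining standard structure theory of minimal triangulations (including \Cref{non-crossing_minimal_separators}) with one combinatorial lemma about minimal separators contained in $K$. Throughout write $S_i = S_i(K)$ and $C_i = C_i(K)$, and let $R_i$ be $G[C_i \cup S_i]$ with $S_i$ completed into a clique (the ``realization'' of the block $(S_i,C_i)$). The lemma I would establish first is: \emph{every minimal separator $T$ of $G$ with $T \subseteq K$ satisfies $T \subseteq S_i$ for some $i$.} To see it, recall $T$ has at least two full components of $G\setminus T$; since $K\setminus T$ is a clique it lies in a single component of $G\setminus T$, so some full component $D$ of $G\setminus T$ avoids $K\setminus T$ and hence (being disjoint from $T$ as well) all of $K$; then $D$ lies in one component $C_i$ of $G\setminus K$, giving $T = N_G(D) \subseteq N_G(C_i) = S_i$.

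For the forward direction, suppose $K$ is a maximal clique of a minimal triangulation $H$. Since $E(G)\subseteq E(H)$, each component $C_i$ of $G\setminus K$ is contained in a component $C_i^H$ of $H\setminus K$ with $S_i = N_G(C_i)\subseteq N_H(C_i^H)$. Condition~1: in a chordal graph the neighbourhood of a component of $H\setminus K$ is a proper subset of the maximal clique $K$, so $N_H(C_i^H)\subsetneq K$, whence $S_i\neq K$ and no component of $G\setminus K$ is full. Condition~2: $H[K]$ is complete, so I would show every edge of $H$ inside $K$ is realized by completing some $S_i$. An edge $uv\subseteq K$ of $G$ is trivially realized; if $uv$ is a fill edge, minimality of $H$ forces $H-uv$ to contain a hole, and because $K$ is complete in $H$ this hole must leave $K$ through a single component $C$ of $H\setminus K$, so $u,v\in N_H(C)$; now $N_H(C)$ is a minimal separator of $G$ (a standard fact about minimal triangulations) contained in $K$, so the lemma gives $\{u,v\}\subseteq N_H(C)\subseteq S_i$ for some $i$.

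For the converse, assume conditions~1 and~2. I would build $H$ by completing $K$ into a clique and, for each $i$, adjoining the edge set of an arbitrary minimal triangulation $H_i$ of $R_i$. Since the pieces pairwise intersect only in subsets of the clique $K$, $H$ is a clique-sum of chordal graphs and hence chordal, and $E(G)\subseteq E(H)$ follows from $N_G(C_i)=S_i$; moreover $K$ is maximal in $H$ because any $v\notin K$ lies in some $C_i$, has all its $H$-neighbours in $C_i\cup S_i$, and so misses the set $K\setminus S_i$, nonempty by condition~1. The remaining point is that $H$ is a \emph{minimal} triangulation: deleting a fill edge inside some $R_i$ restricts to deleting a fill edge of $H_i$ (using $H[C_i\cup S_i]=H_i$), which reveals a hole by minimality of $H_i$; and deleting a fill edge $uv$ inside $K$ works because condition~2 forces $\{u,v\}\subseteq S_i$ for some $i$, and then taking $w\in K\setminus S_i$ (nonempty by condition~1) together with a shortest $u$--$v$ path through $C_i$ yields an induced cycle of length $\ge 4$ in $H-uv$, since $w$ has no neighbour in $C_i$.

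The step I expect to be the main obstacle is condition~2 of the forward direction: it requires knowing that every fill edge of $H$ inside $K$ is introduced by completing one of the combinatorially defined sets $S_i(K)$, which in turn rests on the structural fact that a minimal triangulation realizes only minimal separators of $G$ and on the lemma above pinning down which of those fit inside $K$. The converse is more mechanical, but the fill-edge case analysis — especially producing the right hole when the deleted edge lies inside $K$ — is exactly where conditions~1 and~2 earn their keep and so cannot be skipped.
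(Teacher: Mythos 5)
The paper does not actually prove this statement; it is imported verbatim from Bouchitt\'e and Todinca with a citation, so there is no in-paper proof to compare against and your argument has to stand on its own. On its own terms, your architecture is the standard one for this theorem, and the converse direction is essentially complete: the clique-sum construction, the maximality of $K$ via condition~1, and the two fill-edge cases (inside some $R_i$ via minimality of $H_i$, inside $K$ via the cycle $u,p_1,\dots,p_r,v,w$ with $w\in K\setminus S_i$) are all correct, granting the Rose--Tarjan--Lueker characterization of minimal triangulations that you are implicitly using.

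The genuine gap is in the proof of your key lemma, at the words ``since $K\setminus T$ is a clique it lies in a single component of $G\setminus T$.'' In the forward direction $K$ is a maximal clique of the minimal triangulation $H$, not of $G$, so $K\setminus T$ need not be a clique of $G$, and cliques of $H$ can a priori be spread over several components of $G\setminus T$ (the $H$-edges joining them may be fill edges). As written, the step fails precisely in the interesting case, namely when $K$ contains non-edges of $G$ --- otherwise condition~2 is vacuous. The statement you need is true, but it rests on an additional structural fact that you never invoke: for a minimal separator $T$ of a minimal triangulation $H$ of $G$, the connected components (and full components) of $H\setminus T$ and of $G\setminus T$ coincide --- this is the companion to the fact you do cite, that minimal separators of $H$ are minimal separators of $G$, and both follow from \Cref{non-crossing_minimal_separators} because $H$ is obtained by completing separators parallel to $T$, each confined to a single block of $T$. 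With that fact in hand the lemma goes through: $K\setminus T$ is a clique of $H$, hence lies in one component of $H\setminus T$, hence in one component of $G\setminus T$, and your choice of the full component $D$ avoiding $K$ and the conclusion $T=N_G(D)\subseteq S_i$ are then fine. Note also that your lemma is false for arbitrary $K\subseteq V$ (take $K$ to be $T$ together with one vertex from each of two full components of $G \setminus T$); it should be stated and used only under the hypothesis that $K$ is a maximal clique of a minimal triangulation, which is exactly where the missing fact enters.
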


\subsection{Algorithm}

Usually the algorithms that are based on the framework of potential maximal cliques consist of two phases. Namely, there is the first phase where the set $\Pi (G)$ of all potential maximal cliques of the graph $G$ is enumerated. Then in the second phase a dynamic programming over the PMCs is performed usually in $O(poly(n) \cdot |\Pi|)$ time. Our algorithm is based on this framework and in particular it follows closely the algorithm from \cite{bouchitte2001treewidth}. However the main difference here is that, since the width of tree-clique width is the number of cliques of any bag of the tree decomposition, we have first to calculate the clique covering of each PMC of $G$. We do this using a subroutine, namely the \textsc{PMCs k-clique covering} algorithm. This algorithm examines every possible subset of $G$ and computes each covering. It also recognises which of these subsets are PMCs, a procedure which can be done in $O(nm)$ time [Theorem 5, \cite{fomin2008exact}]. Then upon the termination of the \textsc{PMCs k-clique covering} algorithm, all the PMCs of $G$ along with their clique covering have been stored. 

This information is then used from $\textsc{Tree-clique using PMCs}$ to compute the tree-clique width using standard dynamic programming techniques. To do so the notion of a block realization is used which is the same idea that Alborg's et al. dynamic programming algorithm also exploited in \cite{arnborg1987complexity}. In particular, let $S$ be a minimal separator of $G$. We note $\mathcal{C}(S)$ to be the set of connected components of $G \setminus S$. A component $C \in \mathcal{C}(S)$ is a full component associated with $S$ if every vertex of $S$ is adjacent to some vertex of $C$. If $C \in \mathcal{C}(S)$, we say that $(S,C)= S\cup C$ is a block associated with $S$. A block $(S,C)$ is called full if $C$ is a full connected component associated with $S$. Now we define by $B=(S,C)$ to be a block of the graph $G$. Then, the graph $R(S,C)=G_S[S\cup C]$, where $G_S$ is the graph obtained from $G$ by completing $S$, is called the realization of the block $B$. 

These realizations are particularly useful since they form a suitable state of subproblems for a dynamic programming algorithm. This can done as follows. Given a separator $S$ and let $C_1,C_2,...,C_p$ be the connected components of $G \setminus S$. We know that a triangulation of $G$ can be obtained by considering the minimal triangulations of $R(S,C_i)$ for any $i, 1 \leq i \leq p$. We can exploit this idea to compute the tree-width of a graph, subsequently the tree-clique width, using these blocks as was shown in \cite{kloks1997treewidth}. Then all we need is to give a characterization of the minimal triangulations of a realization $R(S,C)$ using the potential maximal cliques $\Omega$ with $S\subset \Omega$ and $\Omega \subseteq (S,C)$ and the minimal triangulations of some $R(S,C_i)$ strictly included in $(S,C)$. The next Theorem follows from [Corollary 4.8,\cite{bouchitte2001treewidth}] and shows how to calculate tree-clique width using the PMCs of a graph $G$.  

\begin{theorem}
Let $(S,C)$ be a full block of $G$. Then 
\[tcl(R(S,C)) = \min_{S \subset \Omega \subseteq (S,C)} \max( ecc(\Omega),tcl(R(S_i,C_i))) \]

\noindent where the minimum is taken over all potential maximal cliques $\Omega$ such that $S \subset \Omega \subseteq (S,C) $ and $(S,C_i)$ are the blocks associated to $\Omega$ in $G$ such that $S_i \cup C_i \subset S \cup C $.
\end{theorem}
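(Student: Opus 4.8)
The plan is to mirror the proof of \cite[Corollary 4.8]{bouchitte2001treewidth} for tree-width, replacing the bag-size measure $|X_t|-1$ throughout by the edge-clique-cover measure $ecc$. Two preliminary observations make the translation possible. First, in any augmented tree decomposition we may assume $|C_t| = ecc(R(S,C)[X_t])$ for every bag: the covering $C_t$ must cover all edges of the induced subgraph on $X_t$, so $|C_t|\ge ecc$ of that subgraph, and conversely we may always replace $C_t$ by an optimal edge clique cover. Hence $tcl(R(S,C))$ equals the minimum over tree decompositions $T$ of $R(S,C)$ of $\max_t ecc(R(S,C)[X_t])$. Second, $ecc$ is monotone under taking induced subgraphs: if $\Omega\subseteq X$, then restricting each clique of an edge clique cover of the subgraph on $X$ to $\Omega$ yields an edge clique cover of the subgraph on $\Omega$ of no larger size, so $ecc$ only shrinks when we pass to $\Omega$.

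From these I would first establish the structural lemma that $tcl(R(S,C))$ is attained by a tree decomposition all of whose bags are potential maximal cliques — more precisely, by a clique tree of a minimal triangulation. Starting from any tree decomposition $T$ of $R(S,C)$, complete every bag into a clique to obtain a chordal supergraph $\widehat{G}$ for which $T$ is still a tree decomposition; choose a minimal triangulation $H\subseteq \widehat{G}$ of $R(S,C)$ and let $T'$ be a clique tree of $H$. Its bags are the maximal cliques of $H$, hence PMCs of $R(S,C)$ by \Cref{PMC-def} together with \Cref{non-crossing_minimal_separators}, and $T'$ is a tree decomposition of $R(S,C)$ since $R(S,C)\subseteq H$. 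Every bag $\Omega$ of $T'$ is a clique of $\widehat{G}$, so by \Cref{clique-containment-lemma} it sits inside some bag $X_t$ of $T$; by $ecc$-monotonicity its width-weight $ecc(\Omega)$ is at most that of $X_t$, so $T'$ has $tcl$-width no larger than $T$. This is the place to fix conventions: all of this takes place in the graph $R(S,C)=G_S[S\cup C]$ in which $S$ is already completed into a clique, so that the symbol $ecc(\Omega)$ in the statement is unambiguous.

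With the structural lemma in hand, the two inequalities follow the usual block-realization argument. For $\le$, fix a PMC $\Omega$ with $S\subsetneq\Omega\subseteq (S,C)$ achieving the minimum, together with optimal augmented tree decompositions of the realizations $R(S_i,C_i)$ of the blocks $(S_i,C_i)$ associated to $\Omega$ with $S_i\cup C_i\subsetneq S\cup C$. Build $\mathcal{T}$ for $R(S,C)$ with a root bag equal to $\Omega$, covered by an optimal edge clique cover of $R(S,C)[\Omega]$ of size $ecc(\Omega)$, and the sub-decompositions attached below at their respective adhesions $S_i\subseteq\Omega$; the tree-decomposition axioms and the connectivity condition hold because every vertex of $R(S,C)$ lies in $\Omega$ or in exactly one realization and each $S_i$ has a neighbour in $C_i$, so the adhesions glue correctly. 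The width of $\mathcal{T}$ is $\max(ecc(\Omega),\max_i tcl(R(S_i,C_i)))$, giving $\le$. For $\ge$, take an optimal augmented tree decomposition of $R(S,C)$ with PMC bags; since $S$ is a clique in $R(S,C)$ it lies in some bag, and the bag $\Omega$ of this decomposition that is ``closest to $S$'' (in the rooted block structure, exactly as in \cite{bouchitte2001treewidth}) is a PMC with $S\subsetneq\Omega\subseteq(S,C)$ — strictness because $C$ is a nonempty full component associated to $S$. Deleting $\Omega$ splits the decomposition into subtrees, the one hanging on $S_i$ restricting to an augmented tree decomposition of $R(S_i,C_i)$ of width $\ge tcl(R(S_i,C_i))$, while the bag $\Omega$ forces width $\ge ecc(\Omega)$; minimizing over admissible $\Omega$ gives $\ge$.

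The main obstacle is the structural lemma — confirming that passing to PMC bags never increases the clique-cover width. The two delicate points are the $ecc$-monotonicity step (elementary, but only meaningful once the ambient graph, $R(S,C)$ with $S$ completed, is pinned down) and the fact that deleting fill edges from $\widehat{G}$ down to a minimal triangulation still yields a triangulation of $R(S,C)$ whose clique tree has PMC bags, which is where one leans on \Cref{non-crossing_minimal_separators} and the classical clique-tree theory of chordal graphs. Once the lemma is in place, the rest of the argument is a verbatim adaptation of the tree-width recurrence of \cite{bouchitte2001treewidth}, with no further genuinely new difficulty; in particular the empty-product case ($\Omega=S\cup C$) is exactly the base case $tcl(R(S,C))=ecc(S\cup C)$.
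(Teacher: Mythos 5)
Your proposal follows essentially the same route as the paper: the paper in fact gives no proof of this theorem at all, merely asserting that it ``follows from [Corollary 4.8, Bouchitt\'e--Todinca]'', and your argument is a correct and considerably more detailed reconstruction of exactly that block/PMC recurrence with the bag measure $|X_t|-1$ replaced by $ecc$. The one delicate point you rightly flag --- that $ecc(\Omega)$, unlike $|X_t|-1$, is not invariant under completing $S$ into a clique, so one must fix the ambient graph for the clique covers (and, at the very top level, the covering cliques must ultimately be cliques of $G$ rather than of $G_S$) --- is left entirely unaddressed by the paper, so your treatment is if anything more careful than the source.
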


\begin{algorithm}[H] \label{Tree-clique-width-PMC} \label{Tree-clique-PMC}
\SetKwData{SetVZero}{MinV0}

  \Input{The graph $G$, all its minimal separators}
  \Output{$tcl(G)$}

  \textsc{clique covering} to calculate the clique covering of every subset of $G$ and the PMCs \\           
  Compute all the blocks (S,C) and sort them by the number of vertices\\
  \For{each full block (S,C) taken in increasing order}{
        \uIf{(S,C) is inclusion-minimal}{
        
        $tcl(R(S,C))= ecc(S \cup C)$
        }    
        \Else{
            $tcl(R(S,C))= \infty$
        }
        \For{each p.m.c. $\Omega$ with $S\subset \Omega \subseteq (S,C)$ }{
        
        compute the blocks $(S_i,C_i)$ associated with $\Omega$ s.t. $S_i \cup C_i \subset S\cup C$\\
        $tcl(R(S,C)):= \min(tcl(R(S,C)),\  \max_i(ecc(\Omega),\  tcl(R(S_i,C_i))))$
        }
 }
 let $\Delta^*_G$ be the set of inclusion-minimal separators of $G$ \\
 $tcl(G):=\min_{S\in \Delta^*_G} \ \max_{C \in \mathcal{C(S)}}\ tcl(R(S,C))$
  \caption{\textsc{Tree-clique width using PMCs }}
\end{algorithm}

\begin{algorithm}[H] \label{PMC-clique-covering-SLOW}
\SetKwData{SetVZero}{MinV0}

  \Input{Graph $G$}
  \Output{A collection of sets of vertices $Y$ along with their clique covering. In $Y$ is also stored whether a set is PMC.}
  
    $G'$ = Calculate the complement graph of $G$ 
  
  \For{$S=0$ to $2^n-1$}{
    \For{all MISs of $G'[S]$}{
            \tcc{X[S] $\equiv$ minimum $k$ such that $G'[S]$ is $k$ colourable}
            $X[S]= \min\{X[S],X[S \setminus I]\}$
            }
        $Y[S]$=$X[S]$       
 \\    \If{$G[S]$ is a PMC of $G$ }{
            Mark $S$ in $Y$  \\
        }
        }
\Return  $Y$

  \caption{\textsc{clique covering }}
\end{algorithm}

\begin{comment}

\begin{algorithm}[H]
\SetKwData{SetVZero}{MinV0}

  \Input{Graph $G$}
  \Output{The set of PMCs $\Pi$ that can be covered by $k$-cliques}
  \Data{We define a characteristic function $f(I)$ which is 1 if an independent set $I\in \mathcal{I}$, and 0 otherwise. We also define $i[W]$ which is the number of sets of $\mathcal{I}$ avoiding $W$, for all $W \subseteq V$. Finally, we let $g_i[W]$ be the number of sets of $\mathcal{I}$ avoiding W, and containing all elements from ${v_{i+1},...,v_n} \setminus W$}
  
  $G'$ = Calculate the complement graph of $G$\\
  \tcc{Compute $i[W]$ on $G'$}
  \For{$i=0$ to $n$}{
    \For{each $W\subseteq V$}{
        \If{$i==0$}{
            \If{$f(V\setminus W)$}
            { $g_0(W)=f(V\setminus W)$
            
            }
         }
            
        \Else{
            \If{$v_i \in W$}{
                $g_i(W)= g_{i-1}(W)$
            }
            \Else{
                $g_i(W)= g_{i-1}(W \cup \{i\}) + g_{i-1}(W)$
            }
        }
    }
  }
  \tcc{Compute the Inclusion-Exclusion formula}
  $c_k= \sum_{W\subseteq \mathcal{V}}(-1)^{|W|}i[W]^k$

\Return  $\Pi$ = $X[ \text{All the sets of vertices that can be covered by k-cliques}]$

  \caption{\textsc{PMCs $k$-clique covering Inclusion-Exclusion }}
\end{algorithm}
\end{comment}

\begin{theorem}
Algorithm \ref{Tree-clique-width-PMC} can be implemented to run in time $O(2.4423^n)$.

\begin{proof}

To analyze the running time of Algorithm \ref{Tree-clique-PMC} we have first to obtain the running time of Algorithm \ref{PMC-clique-covering-SLOW} which is a subroutine of the first. Algorithm \ref{PMC-clique-covering-SLOW} is another version of Lawler's algorithm \cite{lawler1976note}  for calculating the chromatic number of a graph similar to Algorithm \ref{Calculate-clique-slow}. However this time for each subset $G'[S]$ we also check whether $G[S]$ is a PMC of $G$ and we mark it so we can return it upon termination. This procedure takes $O(nm)$ time \cite{fomin2008exact} so overall the Algorithm \ref{PMC-clique-covering-SLOW} takes $O(2.4423^n)$ time.

The rest of the Algorithm \ref{Tree-clique-PMC} is identical to Fomin et al. \cite{fomin2008exact} and takes $O(n^3 |\Pi_G|)$ time, where $\Pi$ is the number of potential maximal cliques of a graph $G$. As it was shown in \cite{fomin2008exact} a graph $G$ with $n$ vertices has $O(1.7087^n)$ minimal separators and $O(1.8899^n)$ potential maximal cliques. Therefore, Algorithm \ref{Tree-clique-PMC} takes $O(2.4423^n)$ time and can be implemented to run in $O(2^n)$ space since the running time is being dominated by the time needed to compute the clique covering of each subset of $G$.

\end{proof}
\end{theorem}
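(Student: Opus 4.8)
The plan is to bound the running time of Algorithm~\ref{Tree-clique-width-PMC} by analysing its two phases separately — the preprocessing subroutine \textsc{clique covering} (Algorithm~\ref{PMC-clique-covering-SLOW}) and the dynamic programming over potential maximal cliques — and then observing that the former dominates.

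First I would analyse Algorithm~\ref{PMC-clique-covering-SLOW}. This is Lawler's chromatic-number recurrence \cite{lawler1976note} run on the complement $G'$: for each of the $2^n$ subsets $S$ it enumerates the maximal independent sets of $G'[S]$ and updates $X[S]$ from $X[S\setminus I]$. By the Moon--Moser bound \cite{moon1965cliques} a graph on $i$ vertices has at most $3^{i/3}$ maximal independent sets, and by \cite{tsukiyama1977new} these can be listed within a polynomial factor of this bound, so the total work is
\[
O^*\!\Big(\sum_{S\subseteq V} |I(G'[S])|\Big) = O^*\!\Big(\sum_{i=0}^n \binom{n}{i} 3^{i/3}\Big) = O^*\big((1+3^{1/3})^n\big) = O(2.4423^n),
\]
exactly as in Theorem~\ref{calculate-cliques}. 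The only addition over Algorithm~\ref{Calculate-clique-slow} is the PMC test performed on each $G[S]$, which by [Theorem~5, \cite{fomin2008exact}] costs $O(nm)$ per subset, i.e.\ a polynomial factor that is absorbed into the $O^*$ notation. Hence the subroutine runs in $O(2.4423^n)$ time, and since the table is indexed by subsets it uses $O^*(2^n)$ space.

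Next I would account for the remaining lines of Algorithm~\ref{Tree-clique-width-PMC}. Once the covering numbers and the PMCs are tabulated, the algorithm is structurally identical to the tree-width algorithm of Bouchitt\'e--Todinca / Fomin et al.\ \cite{bouchitte2001treewidth,fomin2008exact}, with $ecc(\Omega)$ playing the role that $|\Omega|$ plays there; the recurrence of the preceding theorem is evaluated by sorting the full blocks by size and, for each block $(S,C)$, ranging over the PMCs $\Omega$ with $S\subset\Omega\subseteq(S,C)$ and over the associated sub-blocks. By the bounds of \cite{fomin2008exact} an $n$-vertex graph has $O(1.7087^n)$ minimal separators and $O(1.8899^n)$ potential maximal cliques, and the dynamic program runs in $O(n^3|\Pi_G|)=O^*(1.8899^n)$ time and $O^*(2^n)$ space. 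Since $1.8899<2.4423$, this phase is dominated by \textsc{clique covering}, and combining the two estimates gives total running time $O(2.4423^n)$ and total space $O(2^n)$.

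The only genuinely delicate point I anticipate is the bookkeeping for the second phase: verifying that replacing maximum-clique-size by the edge-clique-cover number $ecc(\cdot)$ in the Bouchitt\'e--Todinca recurrence leaves the $O(n^3|\Pi_G|)$ analysis intact. This is in fact immediate, because $ecc$ of each PMC has already been computed in the preprocessing step and is retrieved in $O(1)$ time during the recurrence, so no extra exponential cost is incurred; the substantive content of the theorem is entirely the $O(2.4423^n)$ bound on the clique-covering preprocessing, which follows from the Moon--Moser / Lawler counting above.
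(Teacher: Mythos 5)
Your proposal is correct and follows essentially the same route as the paper: bound the \textsc{clique covering} preprocessing by Lawler's recurrence plus the Moon--Moser bound to get $O\bigl((1+3^{1/3})^n\bigr)=O(2.4423^n)$, note the per-subset PMC test is only a polynomial factor, and observe that the $O(n^3|\Pi_G|)=O^*(1.8899^n)$ Bouchitt\'e--Todinca dynamic program is dominated by the preprocessing. Your added remark that $ecc(\Omega)$ is simply looked up from the precomputed table is a helpful clarification the paper leaves implicit.
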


As the running time of the algorithm is dominated by clique covering algorithm we can obtain a better running time, exactly as we do with Algorithm \ref{Faster-calculate-cliques} by using a better coloring algorithm such as Eppstein's algorithm, used in Algorithm \ref{Faster-calculate-cliques}. As we show in Section 6 we can even get slightly better results using the framework of inclusion-exclusion along with fast matrix multiplication. It is also worth noting that computing the minimal separators $\Delta_G$ of a graph $G$ requires $O(n \cdot 1.7087^n)$ time according to \cite{fomin2008exact}. Therefore, we can compute the minimal separators in Algorithm \ref{Tree-clique-PMC}, instead of requiring them as input, in the same time bound.

\section{Inclusion-exclusion}

Inclusion-exclusion is a very interesting technique to design fast exponential time algorithms. It is used in several occasions where we would like to count combinatorial objects in an indirect way, mainly because the direct way is often harder. Inclusion-exclusion algorithms go through all subsets of a problem, similar to dynamic programming, but in some cases they do not require exponential space. This technique is by no means a new one, but it has come again to the foreground after the recent breakthrough of Bj\"{o}rklund  and Husfeldt \cite{bjorklund2006inclusion} and Koivisto \cite{koivisto20062}, who solved the graph coloring problem in $O(2^n)$ time.  

The graph coloring problem, subsequently the clique covering of the complement, can be seen as a set covering problem. In a minimum set covering problem we are given a universe $\mathcal{U}$ of elements and a collection $\mathcal{S}$ of (non-empty) subsets of $\mathcal{U}$, and the task is to find a subset $\mathcal{S}$ of minimum cardinality covering all elements of $\mathcal{U}$. The only requirement is that $\mathcal{S}$ can be enumerated in $O(2^n)$ time otherwise it is not possible to achieve that complexity. Some examples of problems, other than the chromatic number, that can be solved using this framework are the Domatic number, partition into Hamiltonian subgraphs, partition into forests, partition into perfect matchings, bounded component spanning forest.

More formally, a set cover $S_1,S_2,...,S_k$ is $k$-cover of $(\mathcal{U},\mathcal{S})$ if $S_i \in \mathcal{S} $, $1\leq i \leq k$, and $S_1 \cup S_2 \cup... \cup S_k =  \mathcal{U}$. Similarly, we can define a $k$-partition of $(\mathcal{U},\mathcal{S})$  if $S_i \in \mathcal{S} $, $1\leq i \leq k$, and $S_1 \cup S_2 \cup... \cup S_k =  \mathcal{U}$ and $S_i \cap S_j = \emptyset$ for all $i \neq j$. We can use the notion of $k$-partitions to design an algorithm for graph colouring using the inclusion-exclusion principle. Each partition, or cover, will be an independent set and the goal will be to cover $G$ using the minimum number of those. 

As we have already established from our previous tries, all our algorithms for tree-clique width are dominated by the time needed to calculate the clique covering of each subset. Therefore, in this section we explore possible ways to speed up our clique covering calculations, using the aforementioned ideas, and see whether we can reduce the running time of our approaches down to $O(2^n)$.

\subsection{Finding cliques faster}

We can obtain a slightly faster algorithm for finding the clique covering of each subset of $G$ by using the technique of inclusion-exclusion. By following the approach of Bj\"{o}rklund  and Husfeldt  
\cite{bjorklund2008exact} we can calculate the chromatic number of a graph in time $O(2.3236^n)$ and space $O(2^n n^{O(1)})$, using inclusion-exclusion and fast matrix multiplication \cite{coppersmith1987matrix}.

In particular, let $\mathcal{M}$ denote the family of maximal independent sets of a graph $G=(V,E)$ and let $c_k(G)$ denote the number of ways to cover $G$ with $k$ distinct, possibly overlapping, maximal independent sets. Then we can determine if a graph $G$ has a chromatic number at most $k$ using the following: $\chi(G) = \min\{k: c_k(G)>0\}$.

Using the following Lemma from \cite{bjorklund2008exact} we can calculate the number of ways to cover $G$ with $k$ distinct maximal independent sets, aka colors. Using the same formula on the complement graph of $G$ we can calculate the clique covering of $G$ within the same time bound.

\begin{lemma}\cite{bjorklund2008exact} \label{I-E-fast approach}
For every vertex subset $S \subseteq  V $ let $\alpha(S)$ denote the number of $M \in \mathcal{M}$ that do not intersect $S$. Then

\begin{align}
c_k(G) = \sum_{S \subseteq V} (-1)^{|S|} \binom{\alpha(S)}{k} 
\end{align}

\end{lemma}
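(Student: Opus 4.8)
The plan is to prove the identity by a direct inclusion--exclusion argument, equivalently a Möbius inversion over the subset lattice of $V$, which is the standard route for such set-cover counting formulas. First I would pin down the combinatorial object: $c_k(G)$ is the number of $k$-element subsets $\{M_1,\dots,M_k\}\subseteq\mathcal{M}$ with $M_1\cup\cdots\cup M_k=V$. For a subset $S\subseteq V$, call a maximal independent set $M$ \emph{$S$-avoiding} if $M\cap S=\emptyset$; by definition $\alpha(S)$ is the number of $S$-avoiding members of $\mathcal{M}$. The elementary observation is then that the number of $k$-element subsets of $\mathcal{M}$ all of whose members are $S$-avoiding is exactly $\binom{\alpha(S)}{k}$, since we are choosing $k$ distinct sets from a pool of $\alpha(S)$ of them.

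Next I would set up the inversion. For $U\subseteq V$ let $e(U)$ denote the number of $k$-element subsets $\{M_1,\dots,M_k\}\subseteq\mathcal{M}$ whose \emph{uncovered set} $V\setminus(M_1\cup\cdots\cup M_k)$ equals $U$ exactly, so that $c_k(G)=e(\emptyset)$. A $k$-subset of $\mathcal{M}$ has all its members $S$-avoiding if and only if $S$ is contained in its uncovered set, hence
\[
\binom{\alpha(S)}{k}\;=\;\sum_{U\,:\,S\subseteq U\subseteq V} e(U)\qquad\text{for every }S\subseteq V.
\]
Applying the Möbius function of the Boolean lattice $(2^V,\subseteq)$ --- i.e.\ the standard fact that $f(S)=\sum_{T\supseteq S}g(T)$ for all $S$ implies $g(S)=\sum_{T\supseteq S}(-1)^{|T|-|S|}f(T)$ --- and then specialising to $S=\emptyset$ yields
\[
c_k(G)=e(\emptyset)=\sum_{S\subseteq V}(-1)^{|S|}\binom{\alpha(S)}{k},
\]
which is the claimed formula.

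I do not expect a genuine obstacle here; the content lies entirely in two bookkeeping points that must be handled carefully. The first is distinctness: because the cover uses $k$ \emph{distinct} maximal independent sets, the count of $S$-avoiding $k$-subsets is the binomial coefficient $\binom{\alpha(S)}{k}$ rather than the power $\alpha(S)^k$ (the latter would arise if one counted ordered $k$-tuples with repetition instead). The second is orienting the inversion correctly: the auxiliary quantity $\binom{\alpha(S)}{k}$ is the transform of $e$ obtained by summing over \emph{supersets} $U\supseteq S$ of the uncovered set, so the inverse transform carries the sign $(-1)^{|S|}$ and is evaluated at $\emptyset$. Once these are fixed the identity is immediate; note also that the lemma is a statement purely about covers by maximal independent sets, so no appeal to the ``colour classes may be assumed maximal'' reduction is needed here --- that reduction is used only afterwards, to pass from $c_k(G)$ to $\chi(G)$.
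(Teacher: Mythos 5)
Your proof is correct. Note that the paper itself gives no proof of this lemma: it is imported verbatim from the cited Bj\"{o}rklund--Husfeldt reference, and your inclusion--exclusion/M\"obius-inversion derivation is precisely the standard argument used there, with the two delicate points (counting unordered $k$-subsets of distinct $S$-avoiding sets via $\binom{\alpha(S)}{k}$ rather than $\alpha(S)^k$, and inverting over supersets of the uncovered set) handled correctly.
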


We can compute the chromatic number using the above Lemma for $k=1,2,...,n$. In order to compute $c_k(G)$ we do need also to compute $\alpha(S)$, namely all maximal independent sets in $G[V \setminus S]$, for every set $S \subseteq G$. Using the Moon and Moser bound \cite{moon1965cliques} on the number of maximal independent sets in a graph with $r$ vertices which at most $3^{r/3}$ and the fact that they can be listed in polynomial time, we obtain an overall running time of $O(2.4423^n)$. However, as we have already noted we can use exponential space to store every $\alpha(S)$ for every subset $S$ and use fast matrix multiplication to lower further the running time of the algorithm.

\begin{lemma}\cite{bjorklund2008exact}

The chromatic number of a graph can be found in time $O(2.3236^n)$ and space $2^n n^{O(1)}$.

\end{lemma}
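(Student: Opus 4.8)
The plan is to read the bound straight off the inclusion--exclusion identity already recorded in \Cref{I-E-fast approach}. Since $\chi(G)=\min\{k:c_k(G)>0\}$ and $c_k(G)=\sum_{S\subseteq V}(-1)^{|S|}\binom{\alpha(S)}{k}$, once the whole table $\{\alpha(S)\}_{S\subseteq V}$ is in hand the chromatic number follows from $O^{*}(2^n)$ extra arithmetic (evaluate $c_k$ for $k=1,\dots,n$ and return the smallest $k$ giving a positive value), and storing that table of $2^n$ integers is exactly what accounts for the $2^n n^{O(1)}$ space. So the whole problem reduces to computing $\alpha(S)$, the number of maximal independent sets of $G$ disjoint from $S$, for every $S\subseteq V$. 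Doing this the obvious way---for each $S$, list the maximal independent sets of $G$ contained in $V\setminus S$ via the Moon--Moser bound and the Tsukiyama et al.\ enumeration \cite{moon1965cliques,tsukiyama1977new}---costs $\sum_{S\subseteq V}3^{|V\setminus S|/3}=O\big((1+3^{1/3})^{n}\big)=O(2.4423^{n})$, which is precisely the running time we must improve upon.

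To beat it I would meet in the middle. Fix a balanced split $V=V_1\sqcup V_2$ with $|V_1|=|V_2|=n/2$ and write each $S$ as $S_1\sqcup S_2$ with $S_i\subseteq V_i$. A maximal independent set $M$ of $G$ disjoint from $S$ decomposes as $M=M_1\sqcup M_2$ where $M_i$ is an independent set of $G[V_i]$ avoiding $S_i$, no edge joins $M_1$ to $M_2$, and every vertex of $V\setminus M$ is dominated by $M$. The only information one half needs about the other is the \emph{trace} it leaves: $N(M_1)\cap V_2$ (the vertices of $V_2$ that $M_2$ must avoid, by independence, and that are already dominated, for maximality on the $V_2$ side) and symmetrically $N(M_2)\cap V_1$. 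Classifying the candidate $M_1\subseteq V_1$ by their trace on $V_2$ and the candidate $M_2\subseteq V_2$ by their trace on $V_1$ turns $\alpha(S_1\sqcup S_2)$ into an entry of a product of two matrices: one that, given $S_1$ and a prescribed trace of $M_2$ on $V_1$, tallies the compatible $M_1$'s grouped by the trace they push onto $V_2$, and one doing the symmetric bookkeeping on the $V_2$ side. Summing over the two trace variables is a matrix multiplication whose dimensions are roughly $2^{n/2}$ on each side, so the $\alpha(S)$ table and the matrices together still fit in $2^n n^{O(1)}$ space.

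The engine is then fast matrix multiplication: a product of $2^{n/2}\times 2^{n/2}$ matrices runs in $2^{\omega n/2}=O(2^{1.19n})$ operations with the Coppersmith--Winograd exponent \cite{coppersmith1987matrix}, already comfortably below $2.4423^{n}=2^{1.288n}$. Splitting the work into two regimes---handling by direct enumeration the subsets $S$ so large that $\alpha(S)$ is tiny and cheap to list, and handling the remaining smaller (hence more numerous and costlier) subsets through the matrix product---and then optimizing the threshold between the regimes is what tightens the bound to the claimed $O(2.3236^{n})$. I expect the main obstacle to be the combinatorial bookkeeping in the middle step: making the trace encoding precise enough that the maximality/domination constraint genuinely factors as a sum of independent contributions from the two halves, with no residual interaction smuggled in. Once that is set up cleanly, the rest---choosing the split threshold and checking that the matrix-multiplication exponent together with the enumeration cost multiplies out to $2.3236^n$---is routine calculation.
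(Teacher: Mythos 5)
The skeleton of your reduction is right --- compute $\alpha(S)$ for all $S\subseteq V$, store the $2^n$ values, and read $\chi(G)$ off Lemma \ref{I-E-fast approach} --- and that is indeed how the cited result of Bj\"{o}rklund and Husfeldt \cite{bjorklund2008exact} is organised. But the engine you propose for filling the $\alpha(S)$ table is not the one that works, and the step you yourself flag as ``the main obstacle'' is a genuine gap, not bookkeeping. The maximality (domination) constraint on $M=M_1\sqcup M_2$ does not factor through a single shared index: a vertex of $V_1\setminus M_1$ may be dominated only from the $V_2$ side, so each candidate $M_1$ must carry both the set of $V_2$-vertices it forbids and the set of $V_1$-vertices it leaves undominated, and compatibility with $M_2$ is then a conjunction of a disjointness condition and a containment condition between subsets attached to the two halves. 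Summing over such pairs for every $(S_1,S_2)$ is a contraction over \emph{two} independent trace variables (a four-index tensor contraction), not a matrix product; forcing it into matrix form pushes the row or inner dimension up to $2^{n/2}\cdot 2^{n/2}=2^n$, which destroys the bound. Your account of where $2.3236$ comes from (a threshold between an enumeration regime for large $S$ and a matrix regime for small $S$) is also not how the exponent arises.

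The actual construction is simpler and sidesteps maximality entirely. Enumerate the family $\mathcal{M}$ of all maximal independent sets of $G$ once, globally, in $O^*(3^{n/3})$ time \cite{moon1965cliques}; maximality is then a property you never have to split across the two halves. Fix $V=V_1\sqcup V_2$ with $|V_i|=n/2$ and use $\mathcal{M}$ itself as the inner dimension: let $A$ be the $2^{n/2}\times|\mathcal{M}|$ matrix with $A[S_1,M]=1$ iff $M\cap S_1=\emptyset$, and $B$ the $|\mathcal{M}|\times 2^{n/2}$ matrix with $B[M,S_2]=1$ iff $M\cap S_2=\emptyset$. Since $M\cap S=\emptyset$ exactly when $M$ avoids both $S_1$ and $S_2$, the product satisfies $(AB)[S_1,S_2]=\alpha(S_1\cup S_2)$, giving the whole table in one rectangular multiplication. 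The inner dimension is $3^{n/3}=2^{0.528n}$, so the product decomposes into $2^{(0.528-0.5)n}$ square products of size $2^{n/2}$, each costing $2^{\omega n/2}$ \cite{coppersmith1987matrix}; with $\omega\leq 2.376$ this totals $2^{(0.028+1.188)n}=O(2.3236^n)$, and the matrices together with the $\alpha$-table account for the $2^n n^{O(1)}$ space. No two-regime threshold is needed.
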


As the authors of \cite{bjorklund2008exact} note these bounds depend on the number of maximal independent sets in the input instance. Subsequently, graphs with fewer maximal independent sets have better guarantees. An example of them is the triangle-free graphs with at most $2^{n/2}$ maximal independent sets \cite{hujtera1993number}.

It is obvious that such an approach is suitable for our needs of finding the clique covering of each set $S \subseteq G $. That is because we can obtain the chromatic number of the complement graph $G'$ of $G$ by computing $c_k(G')$. The different colors found in $G'$ constitute a clique covering in $G$. Furthermore, the algorithm keeps track of the maximal independent sets that avoid each set $S \subseteq G$. That is particularly useful for our dynamic programming approaches, Algorithm \ref{Dynamic-prog-tree-clique-width} and \ref{Tree-clique-PMC}, where the algorithms require to know the clique covering of each subset $S$ of $G$. As we will see in the next section by dropping that information, namely $\alpha(S)$ for each set $S$, it can result in faster running times for the calculation of chromatic number of a graph $G$. 

\subsection{Inclusion-exclusion is not constructive}

Since there is a much faster approach to calculate the chromatic number of a graph in $O(2^n)$ time using inclusion-exclusion \cite{bjorklund2006inclusion,koivisto20062}, it is worth considering whether we can use this approach to speed up our clique covering calculations for tree-clique width. 

The main observation that has led to this breakthrough for finding the coloring of a graph was the following. All the previous approaches starting from Lawler \cite{lawler1976note} and then followed by Eppstein \cite{eppstein2002small}, Byskov \cite{byskov2004enumerating} and Bj\"{o}rklund and Husfeldt  
\cite{bjorklund2008exact} have their running times all rely on the Moon-Moser bound \cite{moon1965cliques} on the number of maximal independent subsets of a graph. Namely, for every subset $S$ of $G$ they generate all the maximal independent sets of this set. However, in order to arrive within a polynomial factor of time $2^n$ one has to use no properties of the set of independent sets at all.

In particular, 

\begin{align} \label{I-E formula}
p_k = \sum_{X\subseteq U} (-1)^{|X|} a_k(X)    
\end{align}

\begin{theorem}\cite{bjorklund2006inclusion} \label{I-E-fast pk partitions}
The number of $k$-partitions $p_k$ can be computed in $O(2^n)$ time and space.
\end{theorem}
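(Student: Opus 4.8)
The plan is to read the inclusion--exclusion formula $p_k = \sum_{X\subseteq U}(-1)^{|X|}a_k(X)$ stated above as a recipe, and to show that its terms $a_k(X)$ can be produced for all $X\subseteq U$ together in $O^*(2^n)$ time. First I would fix $a_k(X)$ to be the number of ordered tuples $(S_1,\dots,S_k)$ with $S_i\in\mathcal S$ for every $i$, with every $S_i$ disjoint from $X$, and with $\sum_{i=1}^k|S_i| = n$. With that convention the identity is one line: interchanging the two summations in $\sum_{X\subseteq U}(-1)^{|X|}a_k(X)$ and using $\sum_{X\subseteq W}(-1)^{|X|} = [\,W=\emptyset\,]$ for $W = U\setminus\bigcup_i S_i$ annihilates every tuple whose union is not all of $U$; for a surviving tuple one has $|\bigcup_i S_i| = n = \sum_i|S_i|$, which is possible only if the $S_i$ are pairwise disjoint, i.e.\ the tuple is a $k$-partition. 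Hence the formula holds and all of the work is in the evaluation.

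For the evaluation I would carry the cardinalities in a formal variable. Working in the ring $\mathbb Z[z]/(z^{n+1})$, set $g(Y) = \sum_{S\in\mathcal S,\ S\subseteq Y} z^{|S|}$ for each $Y\subseteq U$; then $a_k(X) = [z^n]\,g(U\setminus X)^k$. The function $g$ is exactly the subset-sum (zeta) transform over the subset lattice of $S\mapsto[S\in\mathcal S]\,z^{|S|}$, so all $2^n$ values $g(Y)$ can be computed simultaneously by Yates's algorithm in $O(n2^n)$ ring operations, each operation an addition of two integer polynomials of degree $\le n$ and hence of polynomial cost; this stage is $O^*(2^n)$ time and $O^*(2^n)$ space. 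Then, for each $X$, I compute $g(U\setminus X)^k \bmod z^{n+1}$ by repeated squaring ($O(\log k)$ polynomial multiplications, each polynomial-time), read off the coefficient of $z^n$, multiply by $(-1)^{|X|}$, and accumulate; summing over the $2^n$ sets $X$ costs a further $O^*(2^n)$. So $p_k$ comes out in $O^*(2^n)$ time and space; and since the transform is done once and does not depend on $k$, repeating the last two stages for $k=1,\dots,n$ is still $O^*(2^n)$ and yields the minimum number of parts --- in particular, applied to the maximal independent sets of $\overline{G}$, the chromatic number and the clique cover number we want downstream.

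The step I expect to be the real obstacle --- and the one that separates this argument from the earlier Lawler--Eppstein--Byskov--Bj\"orklund--Husfeldt line --- is making sure that \emph{no} property of $\mathcal S$ is used anywhere beyond a polynomial-time membership test: that test is all that is needed to initialise the transform, it is exactly what the hypothesis ``$\mathcal S$ can be enumerated in $O(2^n)$ time'' guarantees, and the earlier methods were stuck at the Moon--Moser $O^*(3^{n/3})$ rate precisely because they enumerated maximal independent sets instead. Two routine checks then finish the proof: that the subset-sum transform stays correct and cheap when its entries are polynomials rather than numbers (it does, because $\mathbb Z[z]/(z^{n+1})$ is a commutative ring and polynomial addition costs $O(n)$), and that the integers in play --- both $a_k(X)$ and $p_k$ can reach $2^{nk}$, i.e.\ $O(nk)$ bits --- only add polynomial overhead, so ``$O(2^n)$ time'' is meant in the ring-operation (equivalently $O^*$) sense used throughout the paper.
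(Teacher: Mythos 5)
Your proposal is correct and is essentially the proof from the cited source \cite{bjorklund2006inclusion} (which this paper does not reproduce): you read $a_k(X)$ as the number of ordered $k$-tuples of members of $\mathcal{S}$ avoiding $X$ with cardinalities summing to $n$, so that inclusion--exclusion plus the equality case of $|\bigcup_i S_i|\le\sum_i|S_i|$ isolates exactly the $k$-partitions, and you evaluate all the $a_k(X)$ via a single fast zeta (subset-sum) transform with polynomial-valued entries in $O^*(2^n)$ time and space. The only caveats are cosmetic and you already flag them: the bound is $O^*(2^n)$ in the ring-operation sense the paper uses throughout, and the count is of ordered tuples (divide by $k!$ or restrict to nonempty blocks as needed for the coloring application).
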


However, these algorithms are based on evaluating a formula, they return the number of minimum k-coloring without ever constructing the actual coloring. It is possible though to iteratively contract edges and compute an actual coloring $\chi(G)$ within the same time bounds. Algorithm \ref{I-E-clique-covering} suggests a way to calculate the clique covering of each subset of a graph $G$ using the technique of inclusion-exclusion.

\begin{algorithm}[H] \label{I-E-clique-covering}
\SetKwData{SetVZero}{MinV0}

  \Input{Graph $G$}
  \Output{The superset $\Pi$ containing all the sets of $G$ along with their clique covering}
  
   $G'$ = Calculate the complement graph of $G$ 
   \\
  \For{each set $S \subseteq G$ }{
        $k$ = Compute the chromatic number $\chi(G'[S])$ using I-E formula \ref{I-E formula} \\
        \tcc{$k$-coloring computation}
        $\Pi[S]$=\textsc{Coloring Construction($G'[S]$,$k$)} 
   }         
  
\Return  $\Pi$
  \caption{\textsc{I-E clique covering }}
\end{algorithm}

%*******************************
% 1st approach (Recursive): 

\begin{comment}

\begin{algorithm}[H] \label{I-E coloring construction}
\SetKwData{SetVZero}{MinV0}

  \Input{Graph $G$, the chromatic number $k$ of $G$}
  \Output{A coloring of $G$}
  
  \If{G is a complete graph}{
  Break;}
  Select two non-adjacent vertices $u$ and $v$ and add an edge between them.
  \\
  Compute minimum chromatic number $k$ using I-E formula \ref{I-E formula} on $G$
  \\
  \uIf{there is a $k$-coloring}{

  \tcc{$u$ and $v$ must have distinct colors.}
  
  Keep the edge $(u,v)$
  \\
  \textsc{Coloring Construction($G$,$k$)}
  
  }
  \Else{
    Merge $u,v$ into a single vertex
    \\
    Mark $u$ and $v$ with the same colors.
    \\
    \textsc{Coloring Construction($G$,$k$)}
  }  
  
  \Return  
  \caption{\textsc{Coloring Construction }}
\end{algorithm}

\end{comment}
%*******************************

% 2nd approach iterative:

\begin{algorithm}[H] \label{I-E coloring construction}
\SetKwData{SetVZero}{MinV0}

  \Input{Graph $G$, the chromatic number $k$ of $G$}
  \Output{A coloring of $G$}
  
  \While{$G$ is not complete}{
    Select two non adjacent vertices $u$ and $v$ and add an edge between them \\
    
    Compute the chromatic number $\chi(G)$ using I-E formula \ref{I-E formula} \\
    
    \uIf{ $\chi(G) == k$ }{
        \tcc{ $u,v$ take distinct colors}
        Keep the edge $(u,v)$
        
    }
    \Else{
        \tcc{$u,v$ take the same colors}
        Merge $u,v$ into a single vertex and mark them

    }
  
  }
  \Return G with the marked vertices 
  \caption{\textsc{Coloring Construction }}
\end{algorithm}

Algorithm \ref{I-E-clique-covering} works as follows. For each subset $S$ of $G$ computes the chromatic number $k$ using the inclusion-exclusion Formula \ref{I-E formula}. Then using that information it iteratively tries to construct and actual $k$-coloring for this subset $S$ using Algorithm \ref{I-E coloring construction}. The idea for the inner working of Algorithm \ref{I-E coloring construction} is simple and is based on adding a series of edges between non adjacent vertices which either result in the same chromatic number $k$ or higher. In the first case the two vertices receive the same color, so we contract them, and in the latter different and we keep them as they are. Lemma \ref{coloring-construction-correctness} shows the correctness of Algorithm \ref{I-E coloring construction}.

\begin{lemma}\label{coloring-construction-correctness}
Algorithm \ref{I-E coloring construction} correctly constructs a $k$-coloring for a given graph $G$.

\begin{proof}

During the execution time algorithm \ref{I-E coloring construction} maintains the following invariant. For every pair of chosen vertices we know whether they share the same color or they do have distinct colors and graph $G$ is not complete.

Initially no vertices are chosen and $G$ is not complete. If that is not the case the algorithm exits and for a graph with $n$ vertices we need $n$ colors.

Then during execution, the algorithm selects two non adjacent vertices, connects them and recomputes the chromatic number of this modified version of $G$. If $\chi(G)==k$ then we know that $u,v$ take distinct colors otherwise they do share the same color. That is because this edge $(u,v)$ forces a coloring algorithm to create a bigger chromatic number, which is then found by the inclusion-exclusion algorithm. At this stage by continuing on the while loop our loop invariant is being preserved.

Now we examine what happens when the loop terminates. The condition that forces the while loop to terminate is that $G$ is a complete graph. That means that $G$, after a number of vertex contractions and edge connections, has exactly $k$ vertices. Therefore, we need $k$ distinct colors to color the modified graph $G$. We do also know the color of the rest $n-k$ merged vertices since they are marked during execution. Algorithm \ref{I-E coloring construction} correctly constructs a $k$-coloring of a graph $G$.  

\end{proof}
\end{lemma}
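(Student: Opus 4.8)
The plan is to prove correctness through a loop invariant that simultaneously tracks the chromatic number of the working graph and a way of lifting colourings back to $G$. Write $G_0 = G$ and let $G_t$ be the working graph after $t$ iterations of the while loop. I would show, by induction on $t$, that: (i) $\chi(G_t) = k$; and (ii) every proper colouring of $G_t$ with $k$ colours induces a proper $k$-colouring of $G$, by giving each vertex of $G_t$ that was formed by contractions — together with all original vertices of $G$ it represents — the colour it receives in $G_t$. The base case is immediate: $G_0 = G$, and $k$ is supplied as $\chi(G)$.

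For the inductive step, let $u,v$ be the non-adjacent vertices chosen in iteration $t+1$. The key fact is the identity
\[
\chi(G_t) \;=\; \min\bigl(\chi(G_t + uv),\ \chi(G_t / uv)\bigr),
\]
which holds because a proper colouring of $G_t$ is either one in which $u$ and $v$ get distinct colours — exactly the proper colourings of $G_t + uv$ — or one in which they get the same colour — exactly the proper colourings of $G_t / uv$ (here $G_t/uv$ is simple precisely because $u \not\sim v$). If the inclusion--exclusion evaluation returns $\chi(G_t + uv) = k$, the algorithm keeps the edge and sets $G_{t+1} = G_t + uv$; then $\chi(G_{t+1}) = k$, and since $E(G_t) \subseteq E(G_{t+1})$ on the same vertex set, a $k$-colouring of $G_{t+1}$ is a $k$-colouring of $G_t$, so (ii) is inherited. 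Otherwise $\chi(G_t + uv) \neq k$; adding a single edge can raise the chromatic number by at most one and cannot decrease it, so $\chi(G_t + uv) = k+1$, and the displayed identity forces $\chi(G_t / uv) = k$. The algorithm contracts, setting $G_{t+1} = G_t / uv$, so (i) holds; a $k$-colouring of $G_{t+1}$ becomes a $k$-colouring of $G_t$ by giving both $u$ and $v$ the colour of their merged image (proper because that vertex's neighbourhood is $N(u) \cup N(v)$ and $u \not\sim v$), which by the inductive hypothesis lifts to a $k$-colouring of $G$, giving (ii).

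For termination, each iteration either adds an edge — strictly decreasing the number of non-adjacent vertex pairs — or contracts two vertices — strictly decreasing $|V(G_t)|$ — so the loop halts, necessarily with $G_T$ complete. By invariant (i), $\chi(G_T) = k$, and a complete graph on $m$ vertices has chromatic number $m$, hence $G_T = K_k$; colouring its $k$ vertices with distinct colours and propagating this assignment back through the recorded marks places every vertex of $G$ into one of $k$ colour classes, which by invariant (ii) is a proper $k$-colouring of $G$, as claimed.

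The main obstacle — the one step I would take care to state precisely — is the chromatic identity $\chi(G_t) = \min(\chi(G_t + uv), \chi(G_t/uv))$, and within it the observation that a $k$-colouring survives the contraction exactly because $u$ and $v$ are non-adjacent; everything else is bookkeeping about the marks and the (finite) number of iterations. I would also stress that correctness hinges on the inclusion--exclusion subroutine returning the \emph{exact} chromatic number, via Formula~\ref{I-E formula} and Theorem~\ref{I-E-fast pk partitions}, since it is the comparison of that value with $k$ that decides which branch is taken at each step.
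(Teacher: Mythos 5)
Your proof is correct and follows essentially the same route as the paper: a loop invariant on the working graph, with the edge-addition/contraction dichotomy deciding colour classes and termination at a complete graph on $k$ vertices. Your version is in fact more rigorous than the paper's — you make explicit the identity $\chi(G_t)=\min\bigl(\chi(G_t+uv),\chi(G_t/uv)\bigr)$ and the invariant $\chi(G_t)=k$, which the paper only gestures at when it says the added edge "forces a bigger chromatic number" — but the underlying argument is the same.
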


\begin{lemma}
Algorithm \ref{I-E-clique-covering} has a running time of $O(3^n)$.
\begin{proof}
The computation of the complement graph of $G$ can be done in $O(nm)$ time. Lines 2 up to 4 are executed $O(2^n)$ times as many as the subsets $S$ of $G$. Line 3 requires $O(2^n)$ time for each subset $S$ of $G$ which requires in total $\sum_{s=0}^n \binom{n}{s}2^s = O(3^n)$ time. Line 4 which is the running time of algorithm \ref{I-E coloring construction}, requires $O((n-1) \cdot 2^n)$ time, for each subset, since we might need at most $n-1$ vertex contractions for the a completely disconnected graph with $n$ vertices. Again since line 4 is executed for each subset $s\in S\subseteq G$ it requires $O(3^n)$ time.  Therefore, the overall running time of algorithm \ref{I-E-clique-covering} is $O(3^n)$ time.
\end{proof}
\end{lemma}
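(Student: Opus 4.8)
The plan is to bound the running time of Algorithm~\ref{I-E-clique-covering} by its three ingredients — the initial complement-graph computation, the per-subset chromatic-number evaluation on line~3, and the per-subset call to \textsc{Coloring Construction} (Algorithm~\ref{I-E coloring construction}) on line~4 — and then to observe that a single binomial-sum estimate handles the two exponential contributions.

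First I would dispatch line~1: computing $G'$ from $G$ takes $O(nm)$ time and is dominated by everything that follows. For line~3, I would invoke the inclusion--exclusion evaluation underlying Theorem~\ref{I-E-fast pk partitions}: on a graph with $r$ vertices, the chromatic number (via Formula~\ref{I-E formula}) can be computed in $O^*(2^r)$ time. Applied to $G'[S]$ with $|S| = s$ this costs $O^*(2^s)$, and summing over all subsets grouped by cardinality gives
\begin{align}
\sum_{S \subseteq V} O^*\!\bigl(2^{|S|}\bigr) \;=\; O^*\!\Bigl(\sum_{s=0}^{n}\binom{n}{s}2^{s}\Bigr) \;=\; O^*\!\bigl((1+2)^{n}\bigr) \;=\; O^*(3^{n})
\end{align}
by the binomial theorem. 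For line~4 I would argue that \textsc{Coloring Construction} on $G'[S]$ runs its \textbf{while} loop at most $O(s^2)$ times — each iteration either records one previously-missing edge or contracts one pair of vertices, and a graph on $s$ vertices becomes complete after at most $\binom{s}{2}$ edge insertions and $s-1$ contractions — while each iteration recomputes a chromatic number via Formula~\ref{I-E formula} in $O^*(2^s)$ time; hence the whole call is $O^*(2^s)$, and the same binomial sum bounds the total over all subsets by $O^*(3^n)$. Adding the three contributions yields running time $O^*(3^n)$, which is reported as $O(3^n)$ under the convention of Section~2.

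The only point that needs care is the bookkeeping of polynomial overhead: each subset $S$ of size $s$ contributes not $2^s$ but $\mathrm{poly}(n)\cdot 2^s$ — the polynomial factor coming both from the $O^*$ hidden in the inclusion--exclusion routine and from the $O(s^2)$ edge-insertion/contraction steps on line~4 — so the quantity actually being summed is $\mathrm{poly}(n)\cdot\sum_s\binom{n}{s}2^s = \mathrm{poly}(n)\cdot 3^n$. Since the paper already agrees to suppress such factors (the $O^*$ notation of Section~2, where a polynomial times $c^n$ is absorbed by rounding up $c$), reporting the bound simply as $O(3^n)$ is legitimate. Everything else is a direct substitution of the cited running times, so I do not expect any real obstacle beyond this accounting.
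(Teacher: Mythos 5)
Your proposal is correct and follows essentially the same route as the paper: bound line~1 by $O(nm)$, charge each subset $S$ of size $s$ a cost of $\mathrm{poly}(n)\cdot 2^{s}$ for the inclusion--exclusion evaluation and for \textsc{Coloring Construction}, and sum via $\sum_{s}\binom{n}{s}2^{s}=3^{n}$. Your accounting of the while loop in Algorithm~\ref{I-E coloring construction} is in fact slightly more careful than the paper's — you count the $O(s^2)$ iterations in which an edge is kept as well as the at most $s-1$ contractions, whereas the paper only counts the contractions — but both are polynomial factors absorbed by the $O^*$ convention, so the conclusion is identical.
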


It is apparent, due the fact that we need to compute Formula \ref{I-E formula} for every subset of $G$ and the fact that inclusion-exclusion is not constructive, we can not obtain a faster algorithm for clique covering computation. Therefore, Theorem's \ref{I-E-fast pk partitions} approach is not suitable for our needs since it does not use any information about the maximal independent sets of $G$. Subsequently, for a faster algorithm, using the framework of inclusion-exclusion we can make use of Lemma \ref{I-E-fast approach}.

\section{Special graph classes}
It is rather tempting to research how hard is to compute tree-clique width on special graph classes. Doing so helps to obtain a better understanding of its computational hardness and provides us with insights on how to tackle bigger and more complicated instances of the problem. In this section we provide polynomial time algorithms for tree-clique width computation on cographs and permutation graphs.  

\subsection{Cographs}

An interesting graph class with a lot of applications is the class of cographs. Many intractable problems on general graphs become polynomially time solvable on cographs. Among these well known problems is also tree-width which can be computed in $O(n)$ time as it was shown by Bodlaender and  Rolf H M\"ohring \cite{bodlaender1990pathwidth}. In this work it is also shown that the same algorithm can be used for path-width since for a given cograph $G=(V,E)$: $tw(G)=pw(G)$.

\begin{figure}[H]
    \centering
    \includegraphics[width = 10cm]{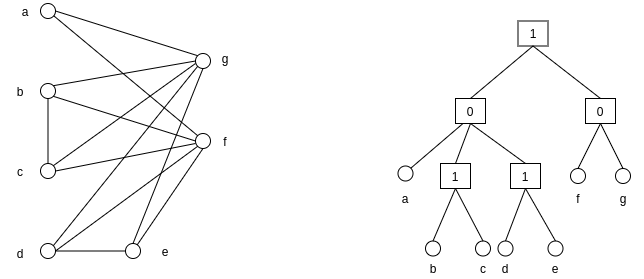}
    \caption{A cograph $G$ and its tree representation $T_G$.}
    \label{fig:cographs1}
\end{figure}

Let us define cographs more formally. Let $G=(V,E),H=(W,F)$ be undirected graphs:
\begin{enumerate}
    \item We denote the disjoint union of $G$ and $H$ by $G \dot\cup H = (V\dot\cup W, E \dot\cup F)$, where $\dot\cup$ is the disjoint union on graphs, and set, respectively.
    \item With $G \times H$ we denote the following type of product of $G$ and $H$: $G \times H= (V \dot\cup W, E \dot\cup F \cup \{(v,w)|v \in V, w \in W\})$ 
    \item The complement of $G$ is denoted by $G'=(V,E')$, with $E'= \{(v,w)|v,w \in V, v \neq w, (v,w) \notin E \}$
\end{enumerate}

A graph $G=(V,E)$ is a cograph, if and only if one of the following conditions hold:

\begin{enumerate}
    \item $|V|=1$
    \item There are cographs $G_1,...,G_k$ and $G= G_1 \dot\cup G_2 \dot\cup ... \dot\cup G_k $
    \item There are cographs $G_1,...,G_k$ and $G= G_1 \times G_2 ... \times G_k$
    
\end{enumerate}

Rule three can be replaced  by the following. There is a cograph $H$ and $G= H'$. There are also other equivalent  characterizations of the class of cographs. Such that every connected subgraph of $G$ has diameter 2 and $G$ does not contain the path graph $P_4$ as an induced subgraph \cite{corneil1981complement}.

Another useful property of each cograph $G=(V,E)$ is that we can associate them with a labelled tree, called the cotree $T_G$. In particular, each vertex of $G$ corresponds to a unique leaf in $T_G$. The internal vertices of $T_G$ have a label $\in \{0,1\}$. We can associate to each vertex of $T_G$ a cotree in the following way. A leaf corresponds to a cotree, consisting of one vertex. The cograph corresponding to a 0-labelled vertex $v$ in $T_G$ is the disjoint union of the cographs, corresponding to the children of $v$ in $T_G$. The cograph corresponding to a 1-labelled vertex $v$ in $T_G$ is the product of the cographs, corresponding to the children of $v$ in $T_G$. For an example of this transformation can be seen on Figure \ref{fig:cographs1}. The cotree is exploited in many algorithms and it is a representation that offers significant complexity gains. Namely, many intractable problems on general graphs are polynomial time solvable on cographs. Some of them are isomorphism, colouring, clique detection, clusters, minimum weight dominating sets, minimum fill-in, hamiltonicity and tree-width \cite{corneil1981complement,corneil1984clustering,bodlaender1990pathwidth}. Therefore, it becomes apparent that we must have a fast algorithm for both recognizing a cograph and producing its unique cotree $T_G$. Corneil et al. \cite{corneil1985linear} gave such an algorithm with a running time $O(n+e)$.

A cotree $T_G$ can easily be transformed to an equivalent binary cotree $T'_G$ such that every internal vertex in $T'_G$ has exactly two children. A binary cotree can be seen on Figure \ref{fig:cotree}. For our algorithms we assume that such a binary tree is given.

\begin{figure}[H]
    \centering
    \includegraphics[height=6cm]{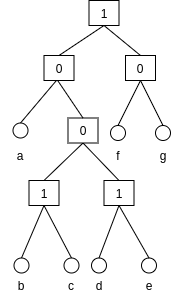}
    \caption{The corresponding binary cotree $T'_G$ of the cotree $T_G$ of Figure \ref{fig:cographs1}. }
    \label{fig:cotree}
\end{figure}

\subsubsection{Algorithm}

Our algorithms are heavily based on the work of Bodlaender and Rolf H M\"ohring \cite{bodlaender1990pathwidth} on tree-width and path-width. These algorithms use a simple recursive top down approach to calculate the product and the disjoint union of every subgraph of a cograph $G$ represented by a subtree of the binary cotree $T_G$. Therefore, we need to know what the result of the tree-clique width of the product and the disjoint union of two graphs is.
The following Lemma addresses this question.

\begin{lemma}
Let $G=(V,E), H=(W,F)$ be graphs. Then:
\begin{enumerate}
    \item $tcl(G \dot\cup H) = \max(tcl(G),tcl(H))$
    \item $tcl(G \times H) = \min(\ \max(ecc(G),tcl(H)),\ \max(ecc(H),tcl(G)))$
\end{enumerate}

\begin{proof}
(1) is trivially true.

(2) First we show that $tcl(G\times H) \leq \max(ecc(H),tcl(G))$. Consider a tree decomposition $\mathcal{T}=(T,\{X_t\}_{t\in V(T)},\{C_t\}_{t\in V(T)} )$ of $G$ with tree-clique width $tcl(G)$. Then $\mathcal{T}=(T,\{X_t \cup W \}_{t\in V(T)},\{C_t\}_{t\in V(T)} )$ is a tree decomposition of $G \times H$ with tree-clique width $\max(ecc(H),tcl(G))$ because $G$ and $H$ are connected. So the $tcl(G\times H) \leq \max(ecc(H),tcl(G))$. Similarly, we can show that $tcl(G \times H)\leq \max(ecc(G),tcl(H))$.

Next we show that $tcl(G \times H) \ge \min( \max(ecc(G),tcl(H)), \max(ecc(H), tcl(G)))$. Consider a tree decomposition $\mathcal{T}=(T,\{X_t\}_{t\in V(T)},\{C_t\}_{t\in V(T)} )$ of $G \times H $. Since every vertex of $G$ is connected to every vertex of $H$ we know from \cref{Bipartite_sub_con} that $\exists t : V \subseteq X_t$ or $\exists t : W \subseteq X_t$. Suppose $\exists t : V \subseteq X_t$. Let $\mathcal{T'}=(T',\{X_t\}_{t\in V(T')},\{C_t\}_{t\in V(T')} )$ be a subtree of $\mathcal{T}$ such that $\forall t \in V(T'): V \subseteq X_t$, $W\subseteq \bigcup_{t \in V(T')} X_t$ and $(T',\{X_t\}_{t\in V(T')},\{C_t\}_{t\in V(T')} )$ is a tree decomposition of $G \times H$. We know that $\mathcal{T'}$ exists by \cref{induced_subtree}. We can observe that $\mathcal{T'}=(T',\{X_t \cap W \}_{t\in V('T)},\{C_t\}_{t\in V(T')} )$ is a tree decomposition of $H$, so $\exists t \in V(T'): ecc(X_t \cap W) \ge tcl(H) $ which means that $\exists t\in V(T'): ecc(X_t) \ge max(ecc(G),tcl(H))$. So the tree-clique width of the tree decomposition $\mathcal{T}=(T,\{X_t\}_{t\in V(T)},\{C_t\}_{t\in V(T)} )$ is at least $max(ecc(G),tcl(H))$.

Similarly, we can show that if $\exists t: W \subseteq X_t$ then the tree-clique width of $\mathcal{T}=(T,\{X_t\}_{t\in V(T)},\{C_t\}_{t\in V(T)} )$ is at least $max(ecc(H),tcl(G))$. Therefore, we can conclude that $tcl(G \times H) = \min(\ \max(ecc(G),tcl(H)),\ \max(ecc(H),tcl(G)))$.
\end{proof}
\end{lemma}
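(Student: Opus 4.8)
The plan is to prove the two identities separately, treating the disjoint union case as immediate and focusing all the work on the product case, where both directions require care.

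For part (1), I would note that since $G\dot\cup H$ has no edges between $V$ and $W$, any augmented tree decomposition of $G$ and any of $H$ can be joined by an arbitrary edge between a node of one and a node of the other: the connectivity and edge-coverage conditions for $G\dot\cup H$ are then inherited componentwise, and the width of the combined decomposition is the maximum of the two widths. Conversely, restricting an augmented tree decomposition of $G\dot\cup H$ to $V$ (intersecting every bag with $V$ and every clique collection accordingly) yields a valid augmented tree decomposition of $G$, and likewise for $H$, so $\max(tcl(G),tcl(H))\le tcl(G\dot\cup H)$. Hence equality. I would keep this to a sentence or two.

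For part (2), the upper bound $tcl(G\times H)\le\max(ecc(H),tcl(G))$ comes from taking an optimal augmented tree decomposition $\mathcal{T}=(T,\{X_t\},\{C_t\})$ of $G$ and forming $(T,\{X_t\cup W\},\{C_t\cup D\})$, where $D$ is a fixed minimum clique cover of $H$ (each clique of $H$ is a clique of $G\times H$, and every edge inside $X_t\cup W$ is covered: edges within $X_t$ by $C_t$, edges within $W$ by $D$, and cross edges by the fact that each $c\in C_t$ together with each $d\in D$ forms a clique since $G$ and $H$ are completely joined — actually I should just note $C_t\cup D$ suffices because $X_t\cup W$ itself is covered by $|C_t|+|D|$ cliques, but to get the max rather than the sum I instead observe that $X_t\cup W$ can be covered by $\max(|C_t|, |D|)$ cliques of $G\times H$ by merging the $i$-th clique of $C_t$ with the $i$-th clique of $D$). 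The connectedness and edge conditions of a tree decomposition are straightforward to check; the subtree condition for vertices of $W$ holds because $W$ sits in every bag. Symmetrically one gets $tcl(G\times H)\le\max(ecc(G),tcl(H))$, so $tcl(G\times H)$ is at most the minimum of the two.

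For the lower bound, I would invoke \cref{Bipartite_sub_con} on the complete bipartite structure between $V$ and $W$: in any augmented tree decomposition $\mathcal{T}$ of $G\times H$ there is a node $t$ with $V\subseteq X_t$ or a node with $W\subseteq X_t$. In the first case, use \cref{induced_subtree} to extract the subtree $T'$ with $V\subseteq X_t$ for all $t\in V(T')$ and $W\subseteq\bigcup_{t\in V(T')}X_t$, such that intersecting bags with $W$ gives a tree decomposition of $H$; by the containment of clique covers (and since a minimum cover of $H$ needs $ecc(H)$ cliques... wait, width uses $ecc$ of the bag, so) some bag $X_t\cap W$ requires at least $tcl(H)$ cliques to cover, hence $X_t\supseteq V\cup(X_t\cap W)$ requires at least $\max(ecc(G),tcl(H))$ cliques since $V$ and $X_t\cap W$ are completely joined so covering their union needs at least the max of covering each. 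In the second case one symmetrically gets a bag needing at least $\max(ecc(H),tcl(G))$ cliques. Either way the width of $\mathcal{T}$ is at least $\min(\max(ecc(G),tcl(H)),\max(ecc(H),tcl(G)))$, completing the proof.

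The main obstacle I anticipate is the bookkeeping around clique \emph{covers} versus clique \emph{counts}: to get $\max$ rather than $+$ in the product bounds, one must argue that if $A$ is covered by $p$ cliques and $B$ by $q$ cliques and every vertex of $A$ is adjacent to every vertex of $B$, then $A\cup B$ is covered by $\max(p,q)$ cliques (pad the shorter list with empty sets, then union the $i$-th cliques — each union is a clique because $A$, $B$ are completely joined). Dually, for the lower bound one needs $ecc(A\cup B)\ge\max(ecc(A),ecc(B))$ for such a join, which follows because restricting any clique cover of $A\cup B$ to $A$ (resp. $B$) gives a cover of $A$ (resp. $B$) of no larger size. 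Making this padding/restriction argument precise and inserting it cleanly at both the upper- and lower-bound steps is the one genuinely non-routine point; everything else is checking the tree-decomposition axioms, which \cref{Bipartite_sub_con} and \cref{induced_subtree} have been set up to handle.
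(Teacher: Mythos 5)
Your proof follows essentially the same route as the paper's: the same construction $(T,\{X_t\cup W\}_{t\in V(T)},\cdot)$ for the upper bound, and for the lower bound the same appeal to \cref{Bipartite_sub_con} and \cref{induced_subtree} followed by restricting the extracted subtree's bags to $W$ to obtain a decomposition of $H$. The only difference is that you make explicit the clique-cover bookkeeping (merging the $i$-th cliques of the two covers to get $\max$ rather than a sum in the upper bound, and restricting a cover of a join to each side for the lower bound) that the paper's proof leaves implicit.
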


Now that we know how to calculate the tree-clique width of the product and the disjoint union of two graphs we are ready to describe a linear time algorithm for tree-clique width. The $\textsc{Compute-ecc}$ algorithm computes the edge clique cover for each vertex in a binary cotree. Namely, for a vertex $v$ of a binary cotree $T_G$, $\textsc{Compute-ecc}$ computes the minimum clique cover of all existing vertices on the subtree rooted at $v$. $\textsc{Compute-tcl}$ algorithm is another recursive function which computes for every vertex $v$ the tree-clique width of the graph induced by the existing vertices on the subtree of $T_G$ rooted at $v$.

In order to compute the tree-clique width of a cograph $G$, we first call the $\textsc{Compute-ecc}$ given as input the root $r$ of the corresponding cotree $T_G$. We can then call $\textsc{Compute-tcl}$ algorithm, again using $r$ as input, to calculate the tree-clique width of $G$. Only a constant number of operations per vertex is performed therefore, the overall cost is $O(n)$.

\medskip
\begin{algorithm}[H]\label{cographs-comput-ecc}
\SetKwData{SetVZero}{MinV0}

  \Input{A node $v$ of the binary cotree $T_G$}
  \Output{The $ecc$ of the nodes rooted at $v$}

  \uIf {$v$ is a leaf of $T_G$}{
        $ecc(v)=1$ 
   }
  \Else{
      \textsc{compute-ecc}(left child of $v$)\\
      \textsc{compute-ecc}(right child of $v$)\\
  
      \uIf {$label(v)==0$}{
            $ecc(v)=ecc$(left child of $v$)$+ecc$(right child of $v$) 
       }
      \Else{
           $ecc(v)=\max$($ecc$(left child of $v$),\ $ecc$(right child of $v$) )
      
      }
  }
  
  \caption{\textsc{Compute-ecc }}
\end{algorithm}

\begin{algorithm}[H] \label{cographs-comput}
\SetKwData{SetVZero}{MinV0}

  \Input{A node $v$ of the cotree $T_G$, $ecc(v)$}
  \Output{The $tcl$ of the nodes rooted at $v$}
  
  \uIf {$v$ is a leaf of $T_G$}{
    
    $tcl(v)=1$  
  }
  \Else{ 
    \textsc{compute-tcl}(left child of $v$)\\
    \textsc{compute-tcl}(right child of $v$)\\
    
    \uIf{$label(v)==0$}{
        $tcl(v)=\max(tcl(\text{left child of}\ v),tcl(\text{right child of}\ v))$
    }
    \Else{
         %\parbox[t]{.6\linewidth} is used to break long lines into more with indentation:
        
        $tcl(v)=\min($
        \parbox[t]{.6\linewidth}{ $\max(ecc(\text{left child of}\ v),tcl(\text{right child of}\ v))$,\\ $\max(tcl(\text{left child of}\ v),ecc(\text{right child of }v)))$ }
    }
  }
  
  \caption{\textsc{Compute-Tcl }}
\end{algorithm}

\begin{lemma}
The algorithm \textsc{Compute-ecc} correctly computes edge the clique covering of a cograph $G=(V,E)$ given its binary cotree representation $T_G$.
\end{lemma}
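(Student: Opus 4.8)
The plan is to prove the claim by structural induction on the binary cotree $T_G$, mirroring the recursion of \textsc{Compute-ecc}. I would take as induction hypothesis that for every node $v$ the returned value $ecc(v)$ equals the edge clique cover number of the cograph $G_v$ induced by the leaves below $v$. The three branches of the algorithm — leaf, $0$-labelled (disjoint union), and $1$-labelled (join) — correspond exactly to the three formation rules of a cograph, so it suffices to verify that the algorithm's rule matches the behaviour of $ecc$ in each case.

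For the base case I would examine a leaf, where $G_v = K_1$. Here I must first reconcile the algorithm's output $ecc(v)=1$ with the edge clique cover number, since a single vertex carries no edges and strictly $ecc(K_1)=0$. I would fix the convention — forced both by this base case and by the bag-covering requirement $\bigcup_{S\in C_t}S=X_t$ in Definition~\ref{Tree-clique-width-def} — that every vertex must lie in some clique of the cover, so an isolated vertex contributes one singleton clique. Under this convention the base case holds, and I would carry it consistently through the inductive step. The disjoint-union case ($label(v)=0$) is then the routine step: writing $G_v=G\dot\cup H$, any clique of $G_v$ lies entirely within one part (there are no edges across the parts), so any edge clique cover of $G_v$ splits into covers of $G$ and of $H$, and conversely the union of optimal covers of the parts covers $G_v$; hence $ecc(G\dot\cup H)=ecc(G)+ecc(H)$, matching the additive rule.

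The join case ($label(v)=1$) is where I expect the genuine difficulty, and I would single it out as the main obstacle. Here $G_v=G\times H$ and the algorithm asserts $ecc(G\times H)=\max(ecc(G),ecc(H))$. The lower bound is clean: restricting any edge clique cover of $G\times H$ to $V(G)$ turns each clique into a clique of $G$ and yields an edge clique cover of $G$, so $ecc(G\times H)\ge ecc(G)$, and symmetrically $\ge ecc(H)$, giving $\ge\max$. The hard part is the matching upper bound. Every clique of $G\times H$ has the form $A\cup B$ with $A$ a clique of $G$ and $B$ a clique of $H$, and such a clique covers only the ``rectangle'' of cross edges $A\times B$. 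Since the cross edges form the complete bipartite graph on $V(G)\times V(H)$, merging an optimal cover of $G$ with one of $H$ into only $\max(ecc(G),ecc(H))$ cliques would require these rectangles to tile all of $V(G)\times V(H)$ simultaneously.

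I would attempt the upper bound by padding the smaller cover to equal length and aligning the two covers index-by-index, so that the $i$-th combined clique is $A_i\cup B_i$. But I expect the alignment to be exactly where the argument is most fragile, and this is the step I would scrutinise hardest before assembling the induction: a counting bound on rectangle covers — each rectangle having side lengths bounded by the clique sizes of $G$ and $H$ — indicates that the $|V(G)|\cdot|V(H)|$ cross edges can demand strictly more than $\max(ecc(G),ecc(H))$ cliques, so the naive merge need not succeed. Consequently I would treat the verification of this join identity as the crux of the whole proof, and the place where the precise interaction between the \emph{edge} clique cover number and the cograph structure must be settled; only once that step is secured would the leaf and union cases combine by induction to give correctness of \textsc{Compute-ecc}.
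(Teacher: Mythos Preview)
Your overall plan (structural induction on the cotree, three cases) matches the paper's, but you miss the one idea that makes the join case go through, and your suspicion that the direct ``rectangle-tiling'' argument is fragile is in fact correct. Take $G=H=\overline{K_2}$. Then $G\times H=C_4$; under your vertex-covering convention $ecc(G)=ecc(H)=2$, yet every clique of $C_4$ is a single edge, so four cliques are needed to cover the four cross edges and the strict edge clique cover number of $C_4$ is $4\neq\max(2,2)$. So no amount of clever alignment will rescue the upper bound if you insist on the \emph{edge} clique cover number.

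The paper's proof of the join case avoids this entirely by passing to the complement: it uses that the clique covering number of $G_v$ equals $\chi(\overline{G_v})$, and that the complement of a join is the disjoint union of the complements, $\overline{G\times H}=\overline{G}\,\dot\cup\,\overline{H}$. Since $\chi$ of a disjoint union is the maximum of the parts, the identity $\max(ecc(G),ecc(H))$ drops out in one line. This argument is valid precisely because what \textsc{Compute-ecc} is really computing is the \emph{vertex} clique cover number (equivalently $\chi$ of the complement), which is also what Definition~\ref{Tree-clique-width-def} actually uses for the width-weight of a bag; the name ``$ecc$'' in the algorithm is a misnomer you should not take literally. Once you adopt the complement viewpoint, your leaf and disjoint-union cases go through unchanged (for the latter: complement of a disjoint union is a join, and $\chi$ of a join is the sum), and the induction closes.
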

\begin{proof}
The algorithm uses a simple top down recursive technique to pass through every vertex $v$ of $T_G$. Every vertex $v$ is labelled with its edge clique cover, $ecc(v)$. Namely, the number of cliques needed to cover $v$ along with every vertex at the subtree rooted at $v$. For this reason every leaf of $T_G$ is labelled with $ecc(v)=1$. 

Now let's examine what happens when a vertex with label zero, $label(v)==0$, is handled. Assume an internal vertex $v$ with $label(v)==0$. It is apparent that since the graphs rooted at $v$ are disconnected we need to consider the sum of both its children. Assume now an another internal vertex $v$ with $label(v)==1$. The subtrees rooted at $v$ are connected therefore their edge clique covering can be seen as the chromatic number of their complements. The complement of the product of two graphs results in the disjoint union of two graphs. Therefore, their chromatic number is just the maximum of two disjoint graphs.
\end{proof}

\begin{theorem}

The tree-clique width of a cograph given its corresponding binary cotree, can be computed in $O(n)$ time.

\begin{proof}

Algorithm \ref{cographs-comput} and \ref{cographs-comput-ecc} both require as input a binary cotree. This binary tree is full, by construction, meaning that each node is either a leaf or possesses exactly two child nodes. Therefore, since the number of leaves in a non-empty full binary tree is one more than the number of internal nodes the running time of both algorithms is $O(n)$.

\end{proof}

\end{theorem}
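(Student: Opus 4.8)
The plan is to split the argument into three parts: a linear size bound on the input cotree, correctness of the two recursions, and a per-node cost accounting. First I would record the key structural fact about the binary cotree $T'_G$: by construction each of the $n$ vertices of $G$ is a distinct leaf, so $T'_G$ has exactly $n$ leaves, and since $T'_G$ is a \emph{full} binary tree (every internal node has exactly two children) its number of internal nodes equals its number of leaves minus one. Hence $|V(T'_G)| = 2n-1 = O(n)$, which is what will ultimately make the whole computation linear.

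Next I would verify correctness. For Algorithm~\ref{cographs-comput-ecc} this is exactly the content of the preceding lemma on \textsc{Compute-ecc}: a leaf is covered by one clique, a $0$-labelled (disjoint-union) node sums the covers of its two subtrees, and a $1$-labelled (join) node takes the maximum, because the complement of a join is a disjoint union and $ecc$ of a graph equals the chromatic number of its complement. For Algorithm~\ref{cographs-comput} I would argue by structural induction on $T'_G$, using the lemma that $tcl(G \dot\cup H) = \max(tcl(G),tcl(H))$ and $tcl(G \times H) = \min(\max(ecc(G),tcl(H)),\max(ecc(H),tcl(G)))$: a leaf has $tcl=1$; a $0$-labelled node represents the disjoint union and a $1$-labelled node the join of the cographs represented by its two subtrees, so the recursive formulas return precisely the tree-clique width of the cograph induced by the leaves below the node. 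One must also note that \textsc{Compute-tcl} reads the $ecc$ values at each node, so it has to be preceded by a full run of \textsc{Compute-ecc} on the root; then \textsc{Compute-tcl} called on the root returns $tcl(G)$.

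Finally I would bound the running time. Each of the two recursions visits every node of $T'_G$ exactly once, and apart from the two recursive calls on the children it performs only a constant number of integer operations (a sum, a $\max$, or a $\min$ of two values, each at most $n$, hence $O(1)$ in the standard RAM model). Therefore each algorithm runs in time $O(|V(T'_G)|)=O(n)$, and running them in sequence is still $O(n)$. The argument is essentially routine once the size bound is in place; the only point needing genuine care is the $1$-labelled case of \textsc{Compute-tcl}, which rests on the join formula for $tcl$ together with the fact that binarizing a cotree — replacing a node of higher arity by a chain of binary nodes carrying the same label — does not change the represented cograph, since disjoint union and join are both associative. A minor edge case worth checking separately is $|V|=1$.
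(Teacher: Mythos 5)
Your proposal is correct and follows essentially the same route as the paper: the decisive point in both is that the binary cotree is a full binary tree with $n$ leaves, hence $2n-1$ nodes, and each node contributes only a constant number of operations across the two recursions. The extra correctness discussion you include (structural induction via the $tcl$ formulas for disjoint union and product, and the need to run \textsc{Compute-ecc} before \textsc{Compute-tcl}) is material the paper delegates to the preceding lemmas rather than to this proof, so it is a welcome but not divergent addition.
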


\subsection{Permutation graphs}

Another class of graphs that is worth considering is the permutation graphs (PGs). They belong to the class of perfect graphs which are the graphs in which the chromatic number of every induced subgraph equals the order of the largest clique of that subgraph (clique number). They were first introduced by Even et al. \cite{even1972permutation,pnueli1971transitive}. More formally, if $\pi$ is a permutation of the numbers $1,2,...,n$ we can construct a graph $G[\pi]=(V,E)$ with vertex set $V=\{1,2,..,n\}$ and edge set $E:$ 
\[(i,j)\in E \iff (i-j)(\pi_i^{-1}- \pi_j^{-1})< 0\]
An undirected graph is a permutation graph if there is a permutation $\pi$ such that $G \cong G[\pi]$. The graph $G[\pi]$ is also called the inversion graph of $\pi$.

\begin{figure}[H]
    \centering
    \includegraphics[scale=0.5]{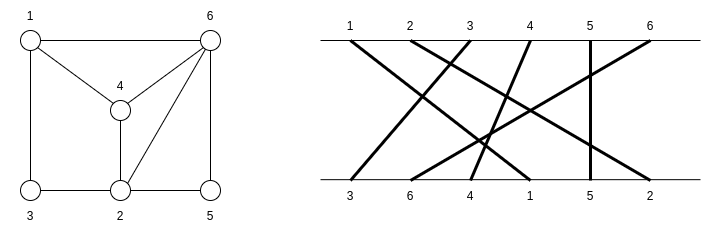}
    \caption{A permutation graph on the left and its corresponding matching diagram.}
    \label{fig:permutation-graphs}
\end{figure}

Permutation graphs may also be defined, geometrically as intersection graphs which is illustrated by a matching diagram.

\begin{defn}
Let $\pi$ be a permutation of $1,2,...,n$, the matching diagram can be obtained as follows. Write the numbers $1,...,n$ horizontally from left to right. Below, write the numbers $\pi_1,...,\pi_n$ also horizontally from left to right. We then draw straight line segments joining the two 1's, the two 2's etc.
\end{defn}

\noindent On Figure \ref{fig:permutation-graphs} we give an example of permutation graph along with its matching diagram. During this paper we assume that the permutation is given and the permutation graph is identified with the inversion graph.

PGs have numerous applications and their main attraction point is the fact that many intractable problems, on general graphs, can be solved efficiently. Among these problems some of the well known ones are the tree-width \cite{meister2010treewidth,bodlaender1995treewidth} maximum clique \cite{mcconnell1999modular,mohring1985algorithmic,supowit1985decomposing}, chromatic number \cite{mcconnell1999modular,mohring1985algorithmic,supowit1985decomposing}, clique cover \cite{mcconnell1999modular,mohring1985algorithmic,supowit1985decomposing}, independent set \cite{mcconnell1999modular,mohring1985algorithmic,supowit1985decomposing}, dominating set \cite{chao2000optimal}, Hamiltonian cycle \cite{deogun1994polynomial}, and graph isomorphism \cite{colbourn1981testing}. The path-width of a graph can also be solved efficiently using any of the approaches introduced in \cite{meister2010treewidth,bodlaender1995treewidth} since Bodlaender et al. \cite{bodlaender1995treewidth} showed that on PGs the tree-width of a graph $G$ is equal to the path-width of $G$. In addition, the all pairs shortest path problem can be solved in faster time than on general graphs \cite{bazzaro2009localized,mondal2003optimal}. 

The PGs can be recognised in polynomial time. The first result was from Pnueli et al. \cite{pnueli1971transitive} which is an $O(n^3)$ time algorithm for recognising permutation graphs. Later, Spinrad \cite{spinrad1985comparability} improved this result by designing an $O(n^2)$ time algorithm. The current fastest algorithm for recognising PGs is a $O(n+m)$ time algorithm from McConnell et al. \cite{mcconnell1999modular}. A survey of properties of permutation graphs can be found in \cite{golumbic2004algorithmic}. Some of them are that every induced subgraph of a permutation graph is a permutation graph and the complement of a permutation graph is a permutation graph. As we will show later our algorithms exploit these properties in order to calculate efficiently the minimum clique covering of each bag of a tree decomposition.

\subsubsection{Scanlines}
Our algorithm for tree-clique width is based on the work of Bodlaender et al. \cite{bodlaender1995treewidth} for tree-width. Namely, in this work it is shown that every minimal separator in a permutation graph can be obtained by using a scanline, which is a line segment on a matching diagram, see Definition \ref{scanline}.

\begin{defn} \label{scanline}
A scanline in the matching diagram is any line segment with one end vertex on each horizontal line. A scanline $s$ is between two line segments $x$ and $y$ if the top point of $s$ is in the open interval bordered by the top points of $x$ and $y$ and the bottom point of $s$ is in the open interval bordered by the bottom points of $x$ and $y$.
\end{defn}

Let us now consider a scanline $s$ which is between two line segments $x$ and $y$. We assume that the intersection of each pair of the three line segments is empty. Therefore, the line segments in the diagram corresponding to $x$ and $y$ do not cross in the diagram. We say that two line segments cross if they have a nonempty intersection. Similarly, we say that a clique $y$ crosses a line segment $s$ if the line segments that $y$ consists of, cross $s$. If we take out all the cliques that cross the scanline $s$, this clearly corresponds to an $x,y-$separator in the graph. The next lemma shows that we can find all minimal separators of a permutation graph in this way.

\begin{lemma} \label{x,y-separator}
Let $G$ be a permutation graph, and let $x$ and $y$ be nonadjacent vertices in $G$. Every minimal $x,y-$separator consists of all cliques crossing a scanline which lies between the line segments of $x$ and $y$.
\end{lemma}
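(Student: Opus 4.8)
Represent each vertex $v$ by the line segment in the matching diagram from the point $v$ on the top line to the point $\pi^{-1}(v)$ on the bottom line, and write $t_v=v$, $b_v=\pi^{-1}(v)$ for its two endpoints. The defining condition $(i-j)(\pi_i^{-1}-\pi_j^{-1})<0$ says exactly that two vertices are adjacent if and only if their segments cross, and a scanline is itself a top-to-bottom line segment, so ``a vertex crosses a scanline'' carries the same geometric meaning. Given an arbitrary minimal $x,y$-separator $S$, the goal is to construct a scanline $s$ between the segments of $x$ and $y$ such that the set of vertices whose segments cross $s$ is exactly $S$. As already observed in the text, the set of crossers of any scanline between $x$ and $y$ is an $x,y$-separator; hence it suffices to produce $s$ with $\{\,v : v \text{ crosses } s\,\}\subseteq S$ and $\supseteq S$.

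The construction rests on two elementary observations. First, if vertices $a$ and $b$ are adjacent and a third vertex $v$ has $t_v$ strictly between $t_a$ and $t_b$, then $v$ is adjacent to $a$ or to $b$; this is a three-case check on the position of $b_v$ relative to $b_a,b_b$. Second (a consequence of the first), if $C\subseteq V$ induces a connected subgraph and $v\notin C$ has no neighbour in $C$, then $t_v$ lies outside the interval spanned by $\{t_c : c\in C\}$, and likewise $b_v$ lies outside the span of $\{b_c : c\in C\}$: walk along a path in $C$ joining the extreme top endpoints; somewhere two consecutive path vertices would sandwich $t_v$, contradicting the first observation. I will also invoke the standard fact that, since $S$ is a \emph{minimal} $x,y$-separator, the components $C_x\ni x$ and $C_y\ni y$ of $G\setminus S$ are full components, i.e.\ $N(C_x)=N(C_y)=S$.

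Applying the second observation with $C=C_x$ shows that every vertex outside $C_x\cup S$ --- in particular every vertex of $C_y$ and of every other component of $G\setminus S$ --- has both endpoints outside the top- and bottom-spans of $C_x$; symmetrically for $C_y$. Applying it once more with $C$ equal to another component $D$ shows that the span of $D$ cannot straddle the span of $C_x$, so $D$ lies entirely to the left of, entirely between, or entirely to the right of $C_x$ and $C_y$ on the top line; and if its top- and bottom-sides disagreed, some segment of $D$ would cross some segment of $C_x$ or of $C_y$, forcing $D$ into that component --- impossible. The same dichotomy applied to $C_x$ versus $C_y$, combined with $x\not\sim y$, puts $C_x$ entirely to the left of $C_y$ on both lines after possibly renaming $x,y$. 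Thus the components of $G\setminus S$ split into a left family $\mathcal L\ni C_x$ and a right family $\mathcal R\ni C_y$ so that every endpoint of an $\mathcal L$-segment is strictly smaller than every endpoint of an $\mathcal R$-segment, on each line. Pick $p$ (on the top line) and $q$ (on the bottom line) at non-integer positions separating the $\mathcal L$-endpoints from the $\mathcal R$-endpoints; then $p\in(t_x,t_y)$ and $q\in(b_x,b_y)$, so $s=(p,q)$ is a scanline between $x$ and $y$. No vertex of any component crosses $s$, since each component sits entirely on one side, so every crosser of $s$ lies in $S$; conversely, for $w\in S$ the full-component property gives $w$ a neighbour in $C_x$, whose segment is entirely left of $s$, and a neighbour in $C_y$, whose segment is entirely right of $s$, so the straight segment of $w$ meets $s$. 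Hence $\{v : v\text{ crosses } s\}=S$, as required; and $x$ lies left of $s$ while $y$ lies right of it, so $s$ genuinely separates them.

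The step that needs real care is the treatment of components of $G\setminus S$ other than $C_x$ and $C_y$: one must guarantee they all land cleanly on one side of a single scanline so that the crossers cannot acquire a non-separator vertex. A minor trap is that $C_x$ need not occupy a contiguous block of positions --- its positions may interleave with those of $S$ --- so the whole argument is run with \emph{spans} of connected sets rather than exact intervals, and it is precisely the second observation, together with the fact that segments from two different components never cross, that keeps each \emph{other} component's span confined to one side. Everything else (the three-case check, the full-component lemma, and the final crossing argument for $w\in S$) is routine.
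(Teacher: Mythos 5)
Your argument is correct and is essentially the standard proof of this fact: the paper itself does not reproduce a proof but defers to Lemma~3.1 of Bodlaender, Kloks and Kratsch (1995), and your construction (full components of a minimal separator, the ``sandwiching'' observation forcing connected sets to occupy contiguous spans on each horizontal line, the resulting left/right total order on the components of $G\setminus S$, and the final crossing argument for $w\in S$ via its neighbours in $C_x$ and $C_y$) is precisely that argument, written out in full. So you have supplied a correct, self-contained reconstruction of the cited proof rather than a genuinely different route.
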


The proof of \ref{x,y-separator} can be found in \cite{bodlaender1995treewidth}. The proof of Lemma \ref{x,y-separator} is identical to the proof of [Lemma 3.1 \cite{bodlaender1995treewidth}] but instead of counting vertices we count the minimum number of cliques crossing $s$.

\begin{defn} \label{scanline_equivalence}
Two scanlines $s_1$ and $s_2$ are equivalent, $s_1 \equiv s_2$, if they have the same position in the diagram relative to every line segment; i.e. the set of line segments with the top (or bottom) end point to the left of the top (or bottom) end point of the scanline is the same for $s_1$ and $s_2$.
\end{defn}

\begin{defn}
A scanline $s$ is $k$-small if it crosses with at most $k$ cliques.
\end{defn}

\begin{lemma}\label{pairwise-scanlines}

There are $O(n^2)$ pairwise non-equivalent $k$-small scanlines.

\begin{proof}

Suppose $\pi=[n,...,2,1]$ is a permutation of the numbers $1,2,...,n$. Clearly, the graph $G[\pi]$ is a clique consisting of $n$ vertices. For any matching diagram we can observe that there are at most $O(n^2)$ minimal separators in a permutation graph with $n$ vertices. That can be easily seen by considering that there are $n$ points at the top and $n$ at the bottom. By trying all their combinations will result in $O(n^2)$ pairwise non-equivalent scanlines. Now observe that since the graph $G[\pi]$ is a clique consisting of $n$ vertices, then any of the $O(n^2)$ possible scanlines will be 1-small. This proves the Lemma.
\end{proof}

\end{lemma}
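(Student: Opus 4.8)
The plan is to prove the stronger statement that the \emph{total} number of pairwise non-equivalent scanlines is $O(n^2)$, independently of $k$; since the $k$-small scanlines form a subset of all scanlines, this immediately yields the claim for every value of $k$. So the quantity $k$ plays no role, and no appeal to any particular permutation is needed — all the work is in the counting.

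First I would recall Definition \ref{scanline_equivalence}: the equivalence class of a scanline $s$ is completely determined by the pair of sets $(L_t(s), L_b(s))$, where $L_t(s)$ is the set of line segments whose top endpoint lies to the left of the top endpoint of $s$, and $L_b(s)$ is the set of line segments whose bottom endpoint lies to the left of the bottom endpoint of $s$. The key observation is that the $n$ line segments have their top endpoints at $n$ distinct positions on the upper horizontal line; reading these positions from left to right totally orders the segments. Consequently the top endpoint of $s$ falls into exactly one of the $n+1$ open intervals determined by these positions, and $L_t(s)$ is necessarily an \emph{initial segment} of this left-to-right order, hence determined solely by its cardinality $a \in \{0, 1, \dots, n\}$. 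The same argument applied to the lower horizontal line (where the segments appear left to right in the order $\pi_1, \dots, \pi_n$) shows that $L_b(s)$ is determined by its cardinality $b \in \{0, 1, \dots, n\}$.

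Therefore the equivalence class of any scanline is determined by the pair $(a, b) \in \{0, \dots, n\}^2$, so there are at most $(n+1)^2 = O(n^2)$ equivalence classes of scanlines in total (whether or not one additionally insists, as in Definition \ref{scanline}, that $s$ lie strictly between two given line segments, which can only decrease the count). Restricting to scanlines that are $k$-small can likewise only shrink this set, so the number of pairwise non-equivalent $k$-small scanlines is $O(n^2)$, as desired.

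The only point requiring care is the step ``$L_t(s)$ is an initial segment, hence determined by its size'': this uses that scanline endpoints may be assumed to lie in the open intervals strictly between consecutive segment endpoints (general position), so that ``lies to the left of'' is unambiguous and the left-to-right order on each horizontal line is a genuine total order on the $n$ segments. Once this is in place the count is immediate; the remaining subtlety, if any, is purely bookkeeping — confirming that $(L_t(s), L_b(s))$ is exactly the data the equivalence relation records, which is precisely the content of Definition \ref{scanline_equivalence}. I do not expect any genuine obstacle here, only the need to state the general-position assumption explicitly rather than arguing through a special permutation.
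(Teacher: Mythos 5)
Your proposal is correct and rests on the same counting as the paper: the equivalence class of a scanline is determined by its position among the $n$ top endpoints and the $n$ bottom endpoints, giving $O(n^2)$ classes, of which the $k$-small ones are a subset. The paper additionally exhibits the reversal permutation (whose inversion graph is a clique) to indicate that all $O(n^2)$ classes can simultaneously be $1$-small, i.e.\ that the bound is not improved by the smallness restriction, but that example is not needed for the stated upper bound and your cleaner argument via initial segments suffices.
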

 
\subsubsection{Candidate components}

We now consider the candidate components which informally, is the number of cliques between two scanlines. In essence, the candidate components are the potential bags of the tree decomposition that our algorithm will have to consider. More formally,

\begin{defn}
Let $s_1$ and $s_2$ be two scanlines of which the intersection is either empty or one of the end points of $s_1$ and $s_2$. A candidate component $C=\mathcal{C}(s_1,s_2)$ is a subgraph of $G$ induced by the following sets of lines:

\begin{itemize}
    \item All lines that are between the scanlines (in case the scanlines have a common end point, this set is empty)
    \item All lines crossing at least one of the scanlines.
\end{itemize}
\end{defn}

The candidate components is the crux of this algorithm since they are the area between two scanlines. Therefore, based on \cite{bodlaender1995treewidth}, we have to redefine what is $k$-feasible candidate component to meet our needs for tree-clique computation.

\begin{defn}
A candidate component $C=\mathcal{C}(s_1,s_2)$ is $k$- feasible, where $k$ an integer, if there is a tree decomposition $\mathcal{T}$ of $C$ such that $tcl(C) \leq k$.
\end{defn}

The following theorem is crucial as it constitutes the correctness of our algorithm. Again for each candidate component we keep the minimum number of cliques, needed to cover every vertex in the given $\mathcal{C}$, instead of just the number vertices it contains.

\begin{theorem} \cite{bodlaender1995treewidth} \label{Scanline_theorem}
Let $C=\mathcal{C}(s_1,s_2)$ be a candidate component with $s_2$ to the right of $s_1$. Then $C$ is $k$-feasible if and only if there exists a sequence of scanlines $s_1=t_1,t_2,...,t_r=s_2$ such that the following conditions are satisfied:
\begin{itemize}
    \item $t_i$ and $t_{i+1}$ have an endpoint in common for $i=1,...,r-1$ and the other end point of $t_{i+1}$ lies to the right of the other end point of $t_i$
    \item Each $\mathcal{C}(t_i,t_{i+1})$ has at most $k$ cliques $(i=1,...,r-1)$.
\end{itemize}
\end{theorem}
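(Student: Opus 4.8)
# Proof Proposal for Theorem (Scanline characterization of $k$-feasibility)

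The plan is to prove both directions of the equivalence by translating the known tree-width result of Bodlaender et al. \cite{bodlaender1995treewidth} into the tree-clique setting, with the only substantive change being that every place where the original argument counts vertices in a bag, we instead count the number of cliques in an edge clique cover of that bag. The skeleton of the argument is: a candidate component $C = \mathcal{C}(s_1,s_2)$ is essentially an "interval" of the matching diagram with two fixed boundary separators, and a tree decomposition of $C$ of small tree-clique width can be reorganized (using Lemma~\ref{sanitization}) into a path-like decomposition whose bags are exactly the candidate components $\mathcal{C}(t_i,t_{i+1})$ cut out by a monotone sequence of scanlines sweeping from $s_1$ to $s_2$.

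For the backward direction ($\Leftarrow$), suppose such a sequence $s_1 = t_1, t_2, \dots, t_r = s_2$ exists with each $\mathcal{C}(t_i,t_{i+1})$ coverable by at most $k$ cliques. I would build a path decomposition of $C$ with bags $X_i = V(\mathcal{C}(t_i,t_{i+1}))$ for $i = 1,\dots,r-1$, arranged in the natural left-to-right order. Since consecutive scanlines share an endpoint and move monotonically rightward, the usual scanline bookkeeping of \cite{bodlaender1995treewidth} shows that the bags satisfy the interpolation (connectivity) property: a line segment crossing or lying between $t_j$ and $t_{j+1}$ and also crossing or lying between $t_i$ and $t_{i+1}$ for $i<j$ must appear in every intermediate bag, because the scanlines sweep monotonically. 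Every edge of $C$ has both endpoints in some $X_i$ (two crossing segments are "simultaneously cut" by some intermediate scanline pair, again a standard diagram argument). Equipping each bag $X_i$ with a minimum edge clique cover $C_i$ — which by hypothesis has size at most $k$ — yields a valid augmented path decomposition, hence an augmented tree decomposition, of tree-clique width at most $k$; therefore $C$ is $k$-feasible.

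For the forward direction ($\Rightarrow$), suppose $C$ is $k$-feasible via an augmented tree decomposition $\mathcal{T}$ with $\max_t |C_t| \le k$. By Lemma~\ref{sanitization} we may assume $\mathcal{T}$ is sane. The key structural fact, inherited from \cite{bodlaender1995treewidth}, is that the two boundary separators $S_1$ (the cliques crossing $s_1$) and $S_2$ (those crossing $s_2$) are cliques-of-vertices subsets that each lie inside a single bag by Lemma~\ref{clique-containment-lemma}'s corollary (after completion, each minimal separator in a PMC-style argument behaves like a clique's worth of vertices; more directly, $S_1$ and $S_2$ are each unions of $O(1)$-in-diagram-position cliques, hence their vertices sit in one bag by the containment corollary). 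Take the path in $T$ between the node holding $S_1$ and the node holding $S_2$; the bags along this path, read in order, form a path decomposition of $C$ of the same tree-clique width (one absorbs off-path branches into their attachment point on the path, which cannot increase any $|C_t|$ since bags only shrink). Each such bag $X_t$, being a set of line segments separated from the rest of $C$ by two "sub-separators" that are themselves crossing-sets of scanlines, is contained in some candidate component $\mathcal{C}(t_i, t_{i+1})$; extracting the scanlines realizing these consecutive separators and checking that consecutive ones share an endpoint and advance rightward gives the desired sequence $t_1,\dots,t_r$. Since each $X_t$ is covered by at most $k$ cliques and $\mathcal{C}(t_i,t_{i+1})$ has the same vertex set as (or is contained in) $X_t$ up to completion edges that do not affect clique-coverability of the original vertices, each $\mathcal{C}(t_i,t_{i+1})$ has at most $k$ cliques.

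The main obstacle I anticipate is the bookkeeping in the forward direction: precisely relating a bag $X_t$ of an arbitrary (sane) tree decomposition to a candidate component $\mathcal{C}(t_i, t_{i+1})$, i.e. showing that the "left boundary" and "right boundary" of $X_t$ inside $C$ are genuinely realized by equivalent scanlines (Definition~\ref{scanline_equivalence}) and that these scanlines can be chosen to chain together endpoint-to-endpoint in a monotone fashion. In the tree-width proof this is handled by a careful induction on the diagram; here one must additionally verify that replacing "number of vertices in $X_t$" by "$ecc(X_t)$" throughout preserves every inequality — which it does, because completing a separator into a clique and restricting to a candidate component only removes vertices or adds edges among vertices already forced to be covered, and $ecc$ is monotone under vertex deletion and can only be affected favorably by added edges. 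I would therefore present the proof as: "the argument is verbatim that of \cite[Theorem 4.3]{bodlaender1995treewidth} with $|X_t|$ replaced by $ecc(X_t)$ and with Lemma~\ref{sanitization} invoked to guarantee saneness," and spell out only the two or three places where clique-covering monotonicity must be checked.
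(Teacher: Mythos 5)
Your proposal takes essentially the same route as the paper: the paper supplies no proof of its own for this theorem, importing it verbatim from Bodlaender et al.\ with only the surrounding remark that each candidate component is now measured by its minimum clique cover rather than its vertex count, which is exactly the substitution you propose and then check for monotonicity. The additional detail you give (monotonicity of $ecc$ under vertex deletion, the appeal to sanitization, the path-decomposition construction in the backward direction) goes beyond what the paper records but is consistent with its stated adaptation.
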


\subsubsection{Algorithm}

We are now ready to show how to compute the tree-clique width of a permutation graph. The algorithm works accordingly to \cite{bodlaender1995treewidth} and checks whether the tree-clique width of a permutation graph does not exceed a given integer $k$. We first show how to compute the scanline graph.

We use a directed acyclic graph $W_k(G)$ as follows. The vertices of the graph are the pairwise non-equivalent $k$-small scanlines. Direct an arc from scanline $s$ to $t$ if the following conditions hold:

\begin{itemize}
    \item The intersection of $s$ and $t$ is one end point of $s$ and $t$, and the other end point of $t$ is to the right of the other end point of $s$, and
    \item the candidate component $\mathcal{C}$ consists of at most $k$ cliques.
\end{itemize}

\noindent We call this structure the scanline graph.

\begin{lemma} 
Let $s_L$ be the scanline which lies totally to the left of all line segments and let $s_R$ be the scanline which lies totally to the right of all line segments. Then $G$ has tree-clique width at most $k$ if and only if there is a directed path in the scanline graph from $s_L$ to $s_R$.

\begin{proof}
The proof of this Lemma follows directly from Theorem \ref{Scanline_theorem} since the tree-clique width of graph $G$ equals $\mathcal{C}(s_L,s_R)$.
\end{proof}

\end{lemma}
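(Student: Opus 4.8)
The plan is to reduce the statement directly to \Cref{Scanline_theorem} after the observation that the candidate component $\mathcal{C}(s_L,s_R)$ is all of $G$. Since $s_L$ lies entirely to the left of every line segment and $s_R$ entirely to the right, every line of the matching diagram lies between $s_L$ and $s_R$, and no line crosses either of them, so by the definition of candidate component $\mathcal{C}(s_L,s_R)=G$. Hence ``$tcl(G)\le k$'' is by definition the same as ``$\mathcal{C}(s_L,s_R)$ is $k$-feasible,'' and it suffices to prove that $\mathcal{C}(s_L,s_R)$ is $k$-feasible if and only if the scanline graph $W_k(G)$ contains a directed path from (the equivalence class of) $s_L$ to (the equivalence class of) $s_R$.

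For the ``if'' direction, given a directed path $s_L=u_1,u_2,\dots,u_r=s_R$ in $W_k(G)$, each arc $u_i\to u_{i+1}$ unpacks, by the definition of $W_k(G)$, into exactly the two bullet conditions of \Cref{Scanline_theorem}: $u_i$ and $u_{i+1}$ share an endpoint with the other endpoint of $u_{i+1}$ to the right of that of $u_i$, and $\mathcal{C}(u_i,u_{i+1})$ is coverable by at most $k$ cliques. So \Cref{Scanline_theorem} yields that $\mathcal{C}(s_L,s_R)=G$ is $k$-feasible. For the ``only if'' direction, assume $tcl(G)\le k$; \Cref{Scanline_theorem} produces a sequence $s_L=t_1,\dots,t_r=s_R$ of scanlines with those same two properties. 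To present this as a directed path in $W_k(G)$ I must check that each $t_i$ is a $k$-small scanline, i.e. a vertex of $W_k(G)$. This is the one point needing an argument: any line segment crossing $t_i$ also lies in the candidate component $\mathcal{C}(t_i,t_{i+1})$, since it crosses one of the bordering scanlines; hence every clique of $G$ crossing $t_i$ is a clique of $\mathcal{C}(t_i,t_{i+1})$, and a clique cover of $\mathcal{C}(t_i,t_{i+1})$ of size at most $k$ restricts to a cover by at most $k$ cliques of all vertices whose lines cross $t_i$. Thus $t_i$ is $k$-small; the extreme scanlines $t_1=s_L$ and $t_r=s_R$ cross no line segment and are trivially $k$-small. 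Replacing each $t_i$ by the $W_k(G)$-vertex representing its $\equiv$-class, the sequence is a directed $s_L$--$s_R$ path.

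The main obstacle is precisely the bookkeeping in the ``only if'' direction: verifying that the scanlines delivered by \Cref{Scanline_theorem} are $k$-small, that consecutive ones are genuinely connected by arcs of $W_k(G)$ after passing to equivalence classes, and that the degenerate situations (the empty graph, or $k=0$) are handled correctly. Once that translation layer is in place, the equivalence is a verbatim dictionary between the ``sequence of scanlines'' formulation of \Cref{Scanline_theorem} and the ``directed path in the scanline graph'' formulation, so no further combinatorial work is required.
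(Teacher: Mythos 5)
Your proposal is correct and takes essentially the same route as the paper, which simply invokes Theorem \ref{Scanline_theorem} after noting that $\mathcal{C}(s_L,s_R)$ is all of $G$. You additionally fill in the translation details the paper leaves implicit (in particular, verifying that the scanlines produced by Theorem \ref{Scanline_theorem} are $k$-small and hence vertices of $W_k(G)$), which is a worthwhile elaboration but not a different argument.
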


We are now ready to describe the algorithm which determines if the tree-clique width of a graph $G$ is at most $k$.

\begin{algorithm}[H]
\SetKwData{SetVZero}{MinV0}

  \Input{A permutation graph $G$ along with its permutation $\pi$ and an integer $k$}
  \Output{If $tcl(G) \leq k$ then Yes, otherwise No}

  Make a maximal list $L$ of pairwise $non-equivalent$ $k$-small scanlines \\
  Compute the minimum clique covering of each item in $L$\\
  Construct the acyclic digraph $W_k(G)$ \\
  If there exists a path in $W_k(G)$ from $s_L$ to $s_R$, then $tcl(G) \leq k$. Otherwise, $tcl(G) \ge k$. 
  
  \caption{\textsc{TCL of permutation graph}}
\end{algorithm}

\begin{theorem}
The tree-clique width of a permutation graph can be computed in $O(n^3 \log n )$ time.
\begin{proof}
Clearly computing a maximal list $L$ of pairwise non-equivalent $k$-small scanlines can be done in $O(n^2)$ time according to \cref{pairwise-scanlines}. The minimum clique covering of each scanline in $L$ can be computed in $O(n \log n)$ time using the algorithm of Supowit \cite{supowit1985decomposing} for permutation graphs. This can be done firstly by computing the complement $G'$ of the permutation graph $G$, which is still a permutation graph, and then computing the coloring of $G'$. Using that we obtain the minimum clique covering of each scanline in $G$.  

The scanline graph $W_k(G)$ can be constructed in $O(n^2 k)$ time since there are $O(n^2)$ many scanlines and for each $k$-small scanline we can compute its adjacency list in $O(k)$ time. Computing a path from $s_L$ to $s_R$, if it exists can also take $O(n^2 k)$. Therefore, we get an overall running time of $O(n^3 \log n)$. 
\end{proof}
\end{theorem}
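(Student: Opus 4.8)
The plan is to bound the running time of the four steps of \textsc{TCL of permutation graph} and to observe that, since we actually want the value $tcl(G)$ rather than just a decision for one $k$, the only $k$-dependent work is cheap and can be repeated while the expensive preprocessing is done once. Correctness is already in hand: by the preceding lemma, $G$ has tree-clique width at most $k$ iff there is a directed path in the scanline graph $W_k(G)$ from $s_L$ to $s_R$, and this in turn rests on \cref{Scanline_theorem} and \cref{x,y-separator}.

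First I would handle Step~1. By \cref{pairwise-scanlines} there are only $O(n^2)$ pairwise non-equivalent $k$-small scanlines, and a maximal list $L$ of representatives can be produced in $O(n^2)$ time by enumerating all $O(n)$ choices of top endpoint and $O(n)$ choices of bottom endpoint (between consecutive line segments) and discarding equivalent duplicates; each retained scanline is recorded together with the set of line segments crossing it, so that crossing/between queries later cost $O(1)$ or $O(k)$ amortised.

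Next, Step~2, which I expect to be the dominant cost and the main technical point. For each scanline $s$ — and more precisely for each candidate component $\mathcal{C}(t_i,t_{i+1})$ that the algorithm must inspect — the relevant subgraph is an \emph{induced} subgraph of $G$, hence again a permutation graph, and therefore its complement is also a permutation graph by the closure properties recalled in the preliminaries on permutation graphs. Since permutation graphs are perfect, the minimum number of cliques needed to cover the vertices of this subgraph equals the chromatic number of its complement, which Supowit's algorithm \cite{supowit1985decomposing} computes in $O(n\log n)$ time. As there are $O(n^2)$ scanlines, this yields $O(n^3\log n)$ in total, and — crucially — none of this depends on $k$.

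Finally, Steps~3 and~4. The digraph $W_k(G)$ has $O(n^2)$ vertices; for each $k$-small scanline $s$ the scanlines $t$ with an arc $s\to t$ must share an endpoint with $s$ and are easily enumerated, and for each candidate the ``at most $k$ cliques'' test is answered from the clique-cover counts computed in Step~2, so $W_k(G)$ is built in $O(n^2 k)$ time and a path from $s_L$ to $s_R$ is found by reachability search in $O(n^2 k)$ time. To obtain the exact value I would rerun Steps~3–4 for increasing $k=1,2,\dots$ (or binary-search $k\in\{1,\dots,n\}$), reusing the Step~2 data; since $k\le n$ this adds at most an $O(n^3)$ overhead. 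The total is $O(n^3\log n)$, with Step~2 as the bottleneck. The point most in need of care is exactly the reduction of each candidate-component clique cover to a colouring of a permutation graph, together with the bookkeeping (the stored crossing sets) that keeps the $W_k(G)$ construction at $O(k)$ work per scanline instead of recomputing a clique cover for every pair of scanlines, which would blow the bound up past $O(n^3\log n)$.
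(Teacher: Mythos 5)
Your proposal follows essentially the same route as the paper's proof: $O(n^2)$ enumeration of non-equivalent $k$-small scanlines, clique covers via colouring the complement with Supowit's $O(n\log n)$ algorithm (using that permutation graphs are closed under induced subgraphs and complementation and are perfect) for a dominant $O(n^3\log n)$ term, and $O(n^2k)$ for building and searching $W_k(G)$. Your two additions — iterating over $k$ to turn the decision procedure into a computation of $tcl(G)$, and flagging that the ``at most $k$ cliques'' test per candidate component must be answered from precomputed data rather than by a fresh clique-cover computation — are points the paper's proof glosses over, and they are handled consistently with the claimed bound.
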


\section{Conclusion}

The main aim of this research was a theoretical study of tree-clique width. One could see this new width measure as tree-width where instead of measuring the number of vertices inside the bags, we measure the number of cliques. A straightforward justification of why we would need such a measure is the following. Most real world graphs are generally sparse, like social network graphs \cite{adcock2016treeSocial}, and they contain a limited number of very dense areas, sometimes also known as hubs. Using tree-clique width as a width measure we can obtain bounds and solve efficient several problems such as the  independent set, the chromatic number, vertex cover etc. The reason for this is that we do have the information about the number of existing cliques in each bag therefore, we know the number of checks we have to perform.

% What we have done during this research:
%1st the hardness results:
During this research we have provided an NP hardness proof and in particular, we have proved that tree-clique width is NP-complete even for width values of bigger or equal to three. That came as no surprise since finding those set of vertices that can be covered by $3$-cliques is equivalent to finding if the complement graph admits a 3-coloring, which is an already known NP-complete problem. Using similar argument we have also shown that there is no constant factor approximation algorithm for tree-clique width. The immediate consequence of this result is that there is no fixed parameterized algorithm for tree-clique width. These hardness results paved the way for the algorithmic research that followed, in a sense, they heavily reduced our expectations for efficient algorithms.
%2nd the algorithms:

Algorithmic wise we have constructed two dynamic programming algorithms. The first one is based on Alborg's et al algorithm \cite{arnborg1987complexity} for tree-width and the second one on the well known framework of potential maximal cliques (PMCs). It is apparent that the bottleneck of these approaches is the clique covering computation. To tackle this burden we have used several approaches starting from Lawyer's algorithm for coloring \cite{lawler1976note} and the inclusion-exclusion framework. We have also provided two efficient algorithms for the permutation graphs and the cographs.

% What do we believe about this measure? Insights? Motivate for further research.

Informally speaking the problem of tree-clique width can be seen as two NP-complete problems stacked on top of each other. Namely, the clique covering problem and the problem of building a tree decomposition. Therefore, it is highly unlikely that a tractable algorithm exists for tree-clique width, unless P=NP. In addition, finding a faster algorithm than $O(2^n)$ would require a significant breakthrough since that would also result in a faster algorithm for finding the chromatic number of a graph. 

The results of this research project proved that we have to deal with a really hard, computationally speaking, parameter. However, that does not mean we should withdraw our attention from tree-clique width. Parameters like this can be used in conjunction with others such as tree-width where more efficient algorithms already exist. An example of that, would be to calculate the tree-clique width of a graph only in specific dense areas and use tree-width in the rest more sparse parts of the graph. In the following section, we provide more research ideas that we think they can be used to shed more light on tree-clique width. Therefore, we conclude that tree-clique width is an interesting parameter but further research is required in order to obtain better results and further understand its true potential.

\subsection{Further research}

In this section we propose a series of possible research directions that could be followed for further understanding of tree-clique width. In particular, they can be summarised as it follows:

\begin{itemize}
    
    \item Consider the cliques given from a clique covering algorithm. What is the best algorithm we can design for tree-clique width? Under this condition does the problem of calculating tree-clique width boils down to calculating the tree-width of a clique graph?

    \item Computation of tree-clique width in more special graphs. Some examples of graphs where we it might be possible to get polynomial time algorithms could be the following ones: chordal graphs, interval graphs, planar graphs, bipartite graphs and split graphs. In particular, by observing how fast we can compute tree-clique width on those special graphs in contrast to other parameters can reveal useful properties of the parameter itself. 
    
    \item In addition, we have proved that the tree-clique width is NP-complete even for values greater or equal to three. Therefore, can we calculate the tree-clique width of two in polynomial time?
    
    \item How strong is tree-clique width related to other width parameters, and what is its relationship with clique-width and tree-width? Can we find any bounds of the form $tcl(G) \leq w(G) $ or $w(G) \leq tcl(G) $ for some existing parameter $w$? In addition, what can we find out from the comparison with clique width?
    
    \item As we have proved in Section 3 there is no constant factor approximation algorithm (APX) for tree-clique width. However, it would be interesting enough to answer the following question. Could we obtain an FPTAS, a PTAS or an $f(n)$-APX approximation algorithm for tree-clique width? If the answer is positive, then what is the best factor we can obtain?
    
    \item Another useful research question is what is the relationship of tree-clique width and path-clique width related to the one of tree-width and path-width. For example the path-width and tree-width of a cograph is the same and can be computed in $O(n)$ time \cite{bodlaender1990pathwidth}. Does the same hold for tree-clique width and path-clique width? 
    
    \item A deeper analysis to better understand tree-clique width would require thinking of the theory of minimal separators from the ground up. Namely, the minimal separators are defined by the number of vertices they contain. What algorithmic gains can be obtain by measuring the minimal separators by the minimum number of cliques that cover it instead of the number of vertices? By doing so we alter the meaning of minimality of an a,b-separator. In particular, we propose the following definition of an a,b-separator. A set of vertices $S \subseteq V$ is an $a,b$-separator if $a$ and $b$ are in different connected components of the graph $G \setminus S$. S is a minimal a,b-separator if no proper subset of S with less or equal clique covering separates a,b. To our knowledge to date, there has been no research towards this direction. 
    
    \item An experimental study would be also of great importance. In particular, such a study could reveal graph classes or specific graph structures where tree-clique width can be calculated faster than in other cases. It is also worth mentioning that an experimental study could help in developing heuristics.
    
\end{itemize}

\section*{Acknowledgements}

The author thanks Hans Bodlaender and Jelle Oostveen for discussions and suggestions that helped to improve this paper.

\bibliography{ref}
\bibliographystyle{plain}

\end{document}